\documentclass[twocolumn,longbibliography]{revtex4-1}
\usepackage{amsmath} 
\usepackage{amsfonts} 
\usepackage{amssymb} 
\usepackage{graphicx}
\usepackage{psfrag}
\usepackage{enumitem}

\newcommand{\id}{\mathbf{1}}

\newcommand{\suchthat}{\,|\,}
\newcommand{\set}[2]{ \{#1\suchthat #2\} }

\newcommand{\sset}[1]{ \{#1\} }

\newcommand{\chan}[1]{\mathcal{#1}}
\newcommand{\cchan}[1]{\mathbf {#1}}
\newcommand{\pchan}[2]{\sset{{#1}^{\frac 1 2}\,{#2}}}

\newcommand{\hilb}{\mathcal H}
\newcommand{\sub}{\subseteq}

\newcommand{\syndx}{\mathfrak S}
\newcommand{\synd}[1]{\mathfrak S{#1}}
\newcommand{\corr}[1]{\mathfrak C_{#1}}
\newcommand{\ceff}[1]{{#1}_\mathrm {eff}}
\newcommand{\echan}[1]{\ceff{\chan {#1}}}
\newcommand{\reduce}[1]{\overline {#1}}
\newcommand{\rchan}[1]{\chan{\reduce #1}}
\newcommand{\rechan}[1]{\reduce {\echan #1}}
\newcommand{\fail}[1]{\mathrm {fail}\,(#1)}

\newcommand{\supp}[1]{\mathrm {Supp}\,#1}
\newcommand{\err}[1]{\mathrm {Err}\,#1}
\newcommand{\errx}{\mathrm {Err}}

\newcommand{\stab}{\mathcal S}
\newcommand{\cent}[1]{\mathcal Z(#1)}
\newcommand{\gauge}{\mathcal G}
\newcommand{\logical}{\mathcal L}
\newcommand{\pauli}{\mathcal P}

\newcommand{\qed}{\hspace*{\fill}$\blacksquare$}
 
 \newtheorem{thm}{Theorem}
 \newtheorem{lem}[thm]{Lemma}
 \newtheorem{cor}[thm]{Corolary}
 \newtheorem{defn}[thm]{Definition}
 \newtheorem{prop}[thm]{Proposition}
 \newenvironment{proof}{\noindent \emph{Proof.}}{\qed}
 
 \newenvironment{sketchproof}{\noindent \emph{Sketch of proof.}}{\qed}

\begin{document}

\title{Single-shot fault-tolerant quantum error correction}

\author{H\'ector Bomb\'in}
\affiliation{Perimeter Institute for Theoretical Physics, 31 Caroline St. N., Waterloo, Ontario N2L 2Y5, Canada}
\affiliation{Deparment of Mathematical Sciences, University of Copenhagen, Universitetsparken 5, DK-2100 Copenhagen \O}

\begin{abstract}
Conventional quantum error correcting codes require multiple rounds of measurements to detect errors with enough confidence in fault-tolerant scenarios.
Here I show that for suitable topological codes a single round of local measurements is enough.
This feature is generic and is related to self-correction and confinement phenomena in the corresponding quantum Hamiltonian model.
3D gauge color codes exhibit this single-shot feature, which applies also to initialization and gauge-fixing.
Assuming the time for efficient classical computations negligible, this yields a topological fault-tolerant quantum computing scheme where all elementary logical operations can be performed in constant time.\end{abstract}

\pacs{03.67.Lx, 03.67.Pp}

\maketitle

\section{Introduction}

The development of efficient fault-tolerant quantum computing techniques is essential: they provide the means to deal with the decoherence and control imprecisions that are intrinsic to quantum systems, see \emph{e.g.}~\cite{lidar:2013:quantum}.
Only the presence of such error sources prevents extending the ability to control small quantum systems for a limited time to that of performing arbitrarily long, precise and large quantum computations.

A key element of fault-tolerant quantum computing is error correction, which is intended to detect and eliminate errors in the system.
But error detection is itself a noisy process, and the attempt to eliminate wrongly diagnosed errors may end up introducing many more.
This problem can be addressed~\cite{shor:1996:ftqc} by performing multiple times the measurements from which errors are to be inferred~\footnote{Alternatively this effort can be transferred to the preparation of highly entangled states \cite{steane:1997:active,knill:2005:scalable}.}.
As a drawback, longer times and more operations lead themselves to more accumulation of errors.
Fortunately the accumulation is not catastrophic and fault-tolerance can be achieved~\cite{aharonov:1997:ftqc,kitaev:1997:qec,knill:1998:resilient}.

This paper deals with an alternative solution to the problem of wrong diagnoses. 
It shows how certain error correction strategies are robust against imperfections in error detection: for them a single round of \emph{local} measurements suffices.

\subsection{Locality}

\begin{figure}
  \centering
  \includegraphics[width=7cm]{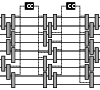}
  \caption{
   A quantum-local process involves a finite number of local operation rounds, each with a classical output that is globally processed to provide an input for the next round.
   }
\label{fig:qlocal}
\end{figure}

Locality plays a crucial role in fault-tolerant quantum computing techniques.
This is rooted in the fact that physical interactions have a local nature~\cite{preskill:2012:sufficient}.
Accordingly, quantum information is encoded in non-local degrees of freedom that are not directly accessible to the environment.
Local degrees of freedom can then absorb the damage caused by errors as the computation proceeds.

In order to compute it must be possible to initialize, transform and measure the encoded quantum information.
Moreover, there must be a way to extract information about the errors that afflict the system.
All such operations on encoded information should be designed so that they preserve the local structure of noise.

A natural way to achieve this is to make use of local operations, in the following sense.
Of interest here are systems composed by many physical qubits. 
A local operation is one that can be implemented with a quantum circuit of bounded depth, possibly including the use of ancilla qubits.
The elementary operations of these circuits can involve only a bounded number of qubits but are otherwise arbitrary.
In addition, locality can have a geometric meaning if qubits, including the ancillas, are arranged on a lattice and elementary operations involve only neighboring qubits.

Local operations are great in that they preserve the local structure of noise, \emph{e.g.} an error afflicting a single qubit will be mapped to an error afflicting a few qubits, where the exact number will depend on the depth of the circuit and the number of qubits involved in each elementary operation.
Unfortunately local operations also suffer from limitations, as exemplified by no-go theorems on the computational power of certain classes of local operations~\cite{eastin:2009:restrictions,bravyi:2013:classification,pastawski:2015:fault,beverland:2014:protected}.

A way out of these restrictions is to consider a larger set of operations in which classical information can be non-locally processed.
In particular, consider operations that can be decomposed in a bounded number of local processes, so that the classical output of each is processed globally to produce the classical input of the next, see Fig.~\ref{fig:qlocal}.
Such operations, which will be termed \emph{quantum-local}, do not suffer from the computational limitations of purely local operations mentioned above~\cite{bombin:2013:gauge}.
On the other hand, quantum-local operations do not automatically preserve the local structure of noise.
However, carefully chosen quantum-local operations can have this property as long as classical information processing is noiseless.
This is a central message of the present work.

\subsection{Single-shot error correction}

In some quantum error-correcting schemes the measurements needed for error detection are local, and so are the operations involved in correcting a given \emph{error syndrome}.
This is the case, for example, of the toric code~\cite{kitaev:1997:qec}.
But processing the error syndrome in order to choose the right correcting operator, and controlling the application of this operator, requires global classical communication and computation.
Error correction is thus in such cases a quantum-local operation.

More specifically, noiseless error correction is quantum-local.
But as soon as the measurements are noisy, the error syndrome becomes unreliable and large errors might be introduced through the correction operation.
To improve accuracy measurements can be repeated as much as needed, but then in a fault-tolerant setting quantum-locality is lost because higher precision requires more measurements~\cite{dennis:2002:tqm}:
the number of measurement rounds is unbounded and the fault-tolerant error correction operation as a whole is not quantum-local, even though each round of measurements is by itself a quantum-local operation.

As discussed below, in some schemes localized syndrome measurement errors only give rise to localized errors in the correction stage.
Then a single round of measurements is enough, and fault-tolerant error correction is quantum-local.
The error syndrome will be noisy and after error correction there will be residual local noise, but this is not a problem.
The goal is not, and cannot be, to remove all entropy in the system, but just to keep it low so that errors do not accumulate and damage the computation.

This approach to error correction will be termed single-shot fault-tolerant quantum error correction, or \emph{single-shot error correction} for short.

\subsection{Self-correction}

All the quantum error correction schemes considered in this paper are topological~\cite{dennis:2002:tqm}.
In particular, they are such that (i) physical qubits are arranged in a lattice of variable size, (ii) error correction is quantum-local with respect to the lattice and (iii) logical operators (those acting on encoded qubits) have a  support comparable to the lattice size.
Codes with these properties are expected to have a noise threshold: for local noise below the threshold value, noiseless error correction is successful with probability almost one in the thermodynamic limit.
In particular, the effective error rate is exponentially small in the lattice size.

In conventional topological codes quantum information is encoded in a subspace of the Hilbert space of the many-qubit system, the code subspace.
Since error detection measurements are local, this subspace has a local description and one can write a local gapped Hamiltonian such that its ground state is the code subspace.
The resulting condensed matter system is said to be topologically ordered~\cite{wen:1989:to}: there exist a ground state degeneracy with a non-local origin.

Some of these topologically ordered phases are special in that they survive at finite temperatures.
This means that the topological degeneracy of the ground state is not affected by thermal noise, so that such systems would provide a self-correcting quantum memory: a quantum memory that does not require active error correction~\cite{dennis:2002:tqm, alicki:2010:thermal}.
For a recent review on this topic, see~\cite{brown:2014:qm}.

The results below show that there exists a connection between self-correction and single-shot error correction.
Namely, all the known topological codes that yield self-correcting systems turn out to exhibit single-shot error correction too.
The connection boils down to the fact that in both cases the connectivity of excitations/error syndrome gives rise to their confinement, i.e. large connected sets of excitations are highly unlikely to be observed.

\subsection{Constant time overhead}

Unfortunately known self-correcting phases require at least four spatial dimensions.
Thus one does not expect to find a 3D topological encoding scheme allowing single-shot error correction.
But it turns out that this is possible by considering \emph{subsystem} topological codes rather than conventional ones.

Subsystem codes are those in which the code space contains both logical and gauge qubits. 
The latter are just qubits that can be in any unknown state, and in a topological subsystem code they might include local degrees of freedom.
This local degrees of freedom can be measured when recovering the error syndrome, providing extra redundancy that can result in the desired single-shot error correction.

Gauge color codes are a remarkable class of topological subsystem codes~\cite{bombin:2013:gauge}.
In particular, 3D gauge color codes allow the quantum-local implementation of a universal set of gauges via gauge fixing~\cite{paetznick:2013:universal}, a technique that uses error correction to switch between different encoding schemes.
As shown below, in 3D gauge color codes all the elementary logical operations (initialization, gates, measurements) remain quantum-local even in the fault-tolerant scenario.
Thus, they provide a fault-tolerant quantum computing scheme with a constant time overhead.
That is, disregarding classical computations, which in any case can be performed efficiently.

Single-shot error correction in gauge color codes can also be interpreted as a form of confinement, but in this case the confined objects are point-like instead of extended. In fact, the confinement mechanism is related to that appearing in lattice gauge theories, where certain point-like field sources cannot be isolated because they are connected to other sources through flux tubes with an energy density. 
The key difference is that confinement in the single-shot context does not follow from energetic considerations, which are absent, but instead from analogous probabilistic considerations.

\section{Confinement}\label{sec:confinement}

This sections intends to give the essential picture of (i) the connection between self-correction and single-shot error correction, and (ii) the mechanism behind single-shot error correction in $D=3$ spatial dimensions.
The unifying concept will be confinement, either of excitations in a self-correcting memory or of syndromes in error correction.

\subsection{Ising model and repetition code}\label{sec:Ising}

Maybe the simplest conceivable (classical) self-correcting memory is the Ising model.
It displays a critical temperature below which a classical bit can be stored reliably for a time that grows exponentially with the system size.
In particular this holds for spatial dimensions $D\geq 2$.
This is due to the fact that excitations have dimension $D-1$, and they are only confined when they are extended objects.
\emph{E.g.} for $D=2$ excitations take the form of loops.
Each loop of length $l$ carries a Boltzmann factor of the form $e^{-\beta l}$.
Below the critical temperature this factor is stronger that entropy fluctuations and loops are confined, i.e. large loops are highly unlikely to be observed.
The suppression of large loops protects the classical bit because loops of a length comparable to the system size are required to flip its value.

It is possible to construct a quantum code based on the Ising model.
It will correct only bit-flip errors, but this is enough to illustrate the role that confinement can play in quantum error correction.

The construction is as follows.
Physical qubits are placed on the faces of a 2D square lattice with periodic boundary conditions, see Fig.~\ref{fig:Ising}.
As usual $X_i, Z_i$ denote the $X, Z$ Pauli operators on the $i$-th qubit.
Between any pair of adjacent faces $i,j$ there is an edge $e$.
For each such edge $e$ there is a \emph{check operator}
\begin{equation}
Z_e:=Z_iZ_j
\end{equation}
that will be measured to recover the error syndrome.
Encoded states are those for which $Z_e=1$, \emph{i.e.} eigenstates of $Z_e$ with eigenvalue 1.
These are superpositions of the states $|0\rangle^{\otimes n}$ and $|1\rangle^{\otimes n}$, where $n$ is the number of physical qubits.
Thus, this is just a repetition code.

An error syndrome can be identified with the collection $l$ of edges such that the measurements yield $Z_e=-1$.
If a number of qubits are flipped in an encoded state, the resulting syndrome $l$ will be the boundary of the area corresponding to these qubits, see Fig.~\ref{fig:Ising}.
Not any collection of edges $l$ can be a syndrome: $l$ has to be the boundary of a region.
In particular, an even number of edges of $l$ must meet at every vertex of the square lattice.
For every $l$ there are two possible sets of qubits that could be flipped to produce it.
When correcting the syndrome $l$, one must flip the qubits of one of these sets, denoted $l_+$.
The complementary set of qubits is denoted $l_-$.
An error that flips the set of qubits $l_-$ is undesirable: correcting it produces a logical error.

For local noise of low enough intensity flipped qubits will form small clusters.
The resulting syndrome $l$ will thus be composed of small disconnected loops.
In particular, if each qubit is flipped independently with probability $\lambda$, a given set of edges $s$ can be a subset of the syndrome $l$ with probability
\begin{equation}\label{eq:prob_s}
p(s)\leq (2\lambda)^{|s|/4},
\end{equation}
since for every edge in $s$ there are two qubits that could have been flipped, and each flip affects at most 4 edges. 
Thus loops are exponentially suppressed as in the Ising model, and for $\lambda$ under a critical value this effect dominates over entropic fluctuations and syndrome loops are confined.

\begin{figure}
  \centering
  \includegraphics[width=8.6cm]{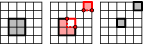}
  \caption{
  A quantum error-correcting code based on the 2D Ising model.
  Qubits sit at faces and check operators at edges.
  Three stages of error correction are depicted.
 (Left)  When the qubits in the shadowed area are flipped the check operators on the boundary detect the change.
  (Middle) Noisy measurements of the syndrome are performed. 
  Measurements fail at some edges (red) providing a pseudo-syndrome that is not closed and thus has endpoints (red circles).
  The failed measurements are estimated to correspond to a minimal set of edges with the same endpoints (dotted red), so that the effectively recovered syndrome is the boundary of the shaded area.
  (Right) After error correction is performed, the new syndrome corresponds to the set of edges with effectively wrong measurement outcome.
  }
\label{fig:Ising}
\end{figure}

\subsection{Noisy error correction}\label{sec:noisy}

Noiseless error correction aims to produce logical errors with a probability that is as small as possible.
For this it suffices to choose $l_+$ so that it coincides with the most likely error for a given syndrome loop $l$.
In noisy error correction, however, the form of the unavoidable residual noise is also important.
For the present case, it should be such that syndrome loops are confined if measurement errors do not happen too often.

Suppose that check operators $Z_e$ are measured and those at edges $e$ belonging to a certain set $w$ give the wrong eigenvalue.
Instead of the correct syndrome $l$ the recovered pseudo-syndrome is $l+w$, with $+$ the symmetric difference of sets.
This needs not be a proper syndrome since it might not be closed, and therefore it needs to be corrected. 
There exists some set of edges $w_0$ with minimal cardinality such that $l+w+w_0$ is closed.
Clearly $w_0$ only depends on $w$, because $l+w+w_0$ is closed if and only if $w+w_0$ is.
Since $w+w=\emptyset$ is closed it follows that
\begin{equation}\label{eq:mini}
|w_0| \leq |w|.
\end{equation}
Estimating that $w_0$ is the set of wrong measurements yields an effective set of wrong measurements $w'$ and an effective syndrome $l'$
\begin{equation}\label{eq:eff_synd}
w':=w+w_0, \qquad l' := l+w'.
\end{equation}
The situation is depicted in Fig.~\ref{fig:Ising}.
In addition $l'$ has to satisfy global restrictions: it has to be a boundary, not just closed.
Here for simplicity it will be assumed that $l'$ is indeed a boundary, but the problem is discussed in section~\ref{sec:global_constraints}.

To explore the effects of noisy measurements, assume that the original error was $l_+$, so that noiseless error correction would not introduce a logical error.
In order to correct the error syndrome $l'$ the qubits in $l'_+$ are flipped.
Therefore the net effect of noise and correction is to flip the set of physical qubits
\begin{equation}\label{eq:new_synd}
l_++l'_+=w'_\pm,
\end{equation}
where the $\pm$ indicates the two different possibilities, see Fig.~\ref{fig:Ising}.
The minus sign corresponds to a logical error.

The set of effective wrong measurements $w'$ is also the syndrome of the residual error $w'_\pm$.
Does local noise in measurements give rise to confined syndrome loops? 
Suppose that each measurement fails with probability $\eta$.
According to \eqref{eq:mini} more than half of the elements of $w'$ are also in $w$.
Since there are $2^{|w'|}$ subsets of $w'$, the probability for $w'$ to be the syndrome is bounded by $(2 \eta^{1/2})^{|w'|}$.
With a bit more of care one can show that for small enough $\eta$ the probability that a set of edges $r$ is a subset of $w'$ is
\begin{equation}\label{eq:prob_r}
p(r)\leq \upsilon^{|r|},
\end{equation}
for some $\upsilon$ that goes to zero when $\eta$ does, see section~\ref{sec:self-correction}.
Thus indeed residual syndrome loops are confined.

An interesting way to choose the correcting set $l_+$ for each $l$ is to correct each connected component of $l$ separately~\cite{bombin:2013:self}, so that localized loops are corrected by flipping a localized set of qubits.
This choice makes it easy to understand that logical errors are unlikely when both the original error syndrome $l$ and the measurement error $w'$ are confined in the sense discussed.
Indeed, the edges in the union of $l$ and $l'$ will form clusters, and \eqref{eq:new_synd} can be applied separately to each of them.
If these clusters are small compared to the lattice, the set $l_++l'_+$ is localized and has to be $w'_+$.
Logical errors due to noisy measurements will only happen when the clusters are large compared to the system size, and the probability for such events decreases exponentially with the lattice size as long as confinement is strong enough.

The 2D Ising code enables single-shot error correction for bit-flip errors when the probability of measurement errors is low enough.
The procedure is quantum-local because it consists of (i) measuring the check operators, a local operation, (ii) computing $l'_+$, a classical operation, and (iii) flipping the qubits in $l'_+$, a local operation.
Error correction is successful almost with certainty in large systems (assuming of course that the original noise to be corrected is weak enough) and the residual noise is also weak in the sense that its syndrome loops exhibit confinement.
This in turn is enough for the system to be able to absorb new errors, which is the goal of fault-tolerant error correction.

\subsection{Spatial dimension}

For $D=1$ spatial dimensions the Ising model fails to have a finite critical temperature.
The excitations are punctual and can move freely without any energy cost: they are unconfined. 
This behavior is mimicked by the corresponding $1D$ repetition code.
Local noise will still give rise to confined syndrome points, in the sense that each connected chain of errors produces two endpoints that are unlikely to be far apart.
But each wrong check operator measurement will give rise to an isolated syndrome point, and thus there is no confinement.

The situation is somewhat similar when quantum self-correcting systems are considered.
However, for known systems the spatial dimension needs to be at least four, instead of two.
Again the mechanism for confinement involves that all excitations are extended objects.
This translates nicely when moving to active error correction.
In particular, again one finds that confinement gives rise to single-shot error correction.

\subsection{Previous confinement mechanisms}

Confinement is not only important for single-shot error correction.
Rather, it is possibly a necessary element of any fault-tolerant error correction scheme involving local codes. 
This is at least the case for transversal measurements and for repeated syndrome extraction, as discussed next.

To illustrate this point it suffices to consider the 1D repetition code.
Fault-tolerant error correction was argued above not be achievable by quantum-local means for this code.
Nevertheless, it can be achieved if an unbounded number of rounds of syndrome extraction are allowed~\cite{dennis:2002:tqm}, as recently demonstrated experimentally~\cite{kelly:2015:state}.
In essence, time provides an extra dimension that enables the confinement of errors: the point-like syndrome measurement errors occurring at different rounds can be put together to form `world-lines', \emph{i.e.} the required extended objects. 
The complete picture has to include the errors that will affect the qubits as the rounds proceed, but the end result is that the word-lines can indeed be confined exactly as in the 2D code, and fault-tolerance is achieved~\cite{dennis:2002:tqm, gottesman:2013:overhead}.

There is at least another context where the 1D repetition code displays confinement: the measurement of the logical qubit in the computational basis.
This measurement can be carried out fault-tolerantly with a quantum-local operation.
Namely, it suffices to (i) measure each physical qubit in the computational basis, and (ii) classically process the result to decode the eigenvalue of the logical $Z$. 
This last step is not different from the classical decoding of a repetition code.
Why is this process fault-tolerant?
The straightforward answer is that measurement errors can be regarded as bit-flip errors prior to the measurements, mapping the problem back to ideal error correction.
An alternative answer is that the structure of the measurements gives rise to confinement.
The key is that the individual measurement outcomes correspond to $Z_i$ eigenvalues, not the check operators $Z_iZ_j$.
This means that wrong measurements can be regarded as strings, not the endpoints of strings, so that long error strings pay a probabilistic penalty that yields the desired confinement.
The wrong check operator outputs are the endpoints of the strings formed by wrong measurements, \emph{e.g.} if measurements fail from qubits $i$ to $j$ and nowhere else then $Z_{i-1}Z_i$ and $Z_jZ_{j+1}$ are assigned a wrong eigenvalue.
The endpoints of confined strings are also confined, in the following sense: it is unlikely to have an isolated wrong check operator output, because that requires a long connected string of measurement errors.

\subsection{Charge confinement in 3D codes}

\begin{figure}
  \centering
  \includegraphics[width=8.6cm]{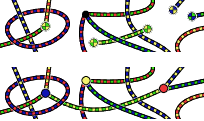}
  \caption{
  Two stages in the extraction of the error syndrome.
  (Top) The measurements of the gauge generators provides a noisy gauge syndrome that does not satisfy color flux conservation at the points marked with a two-colored circle.
  The branching point marked in black satisfies flux conservation.
  (Bottom) The gauge syndrome is corrected by adding suitable edges.
  The error syndrome is the set of branching points of the net of fluxes, marked here as colored circles.
  }
\label{fig:branching}
\end{figure}

Surprisingly, with subsystem codes it is possible to reproduce in $D=3$ spatial dimensions the phenomena of confinement.
More precisely, it will be proved below that this is possible at the level of noisy error correction.

The codes that turn out to show this behavior are $3D$ gauge color codes~\cite{bombin:2013:gauge}.
As in the 1D repetition code, errors can be visualized as strings and syndromes as the endpoints of these strings, all living in a 3D lattice.
Direct extraction of the error syndrome provides no confinement but, instead, it is possible to perform a collection of local measurements involving the gauge degrees of freedom.
The resulting \emph{gauge syndrome} is composed of extended objects similar to the closed strings in the 2D repetition code, and noisy measurements will display confinement.
As described below, the wrong syndrome outputs, which are point-like, are geometrically part of these extended objects.
Moreover, the confinement of the extended objects yields the confinement of the point-like wrong syndromes.

As discussed at the end of the previous subsection, such an `inherited' confinement of point-like wrong syndrome outputs occurs also for transversal measurements in the 1D repetition code.
The two mechanisms are somewhat similar.
When the logical $Z$ operator is to be measured in the 1D code, $Z_i$ operators become `gauge' operators: phase errors do not affect the outcome of the measurement in any way~\footnote{The discussion here is closely connected to the notion of classical (topological) gauge codes~\cite{bombin:2013:structure}.}.
In particular, one can measure the $Z_i$, with the consequences described above.
Once again, wrong syndrome outputs ($Z_iZ_j$ eigenvalues) are the endpoints of extended objects (strings) that consist of wrong measurement outputs (of $Z_i$ gauge operators).
The case of 3D gauge color codes is analogous: the extended objects are string-nets, and the point-like objects are the branching points of these string-nets.

Details are as follows.
The 3D lattice has vertices with four different colors: red, green blue and yellow (r, g, b, y).
These are connected through edges that have two colors, namely those complementary to the colors of the two vertices linked by the edge, which have to be different.
E.g. an rg-edge connects a b-vertex and a y-vertex.
The error syndrome consists of a collection of vertices.
The gauge syndrome consists of a collection of edges that has to be \emph{closed} in the following sense.
The subset of edges on the gauge syndrome with a given color on them has to be closed, \emph{i.e.} composed of loops.
E.g. the set of edges carrying red color (namely rg-, rb- and ry-edges) in the gauge syndrome, must form closed loops, \emph{i.e.} an even number of them must meet at every vertex.
In other words, color flux is conserved.

The error syndrome can be recovered from the gauge syndrome as follows:
a vertex is in the error syndrome if and only if an odd number of gauge syndrome edges of any color is incident on it.
\emph{E.g.} if an odd number of gauge syndrome rg-edges meet at a b-vertex, then it is an error syndrome vertex (and then necessarily and odd number of ry-edges, and of yg-edges, must meet at the vertex to preserve the color flux).
As depicted in Fig.~\ref{fig:branching} (bottom), error syndrome vertices are the branching points of the gauge syndrome net of fluxes.

The conservation of color flux is analogous to the situation in the 2D repetition code, where the syndrome lines had to be closed.
Thus, as in that case, one has to repair the noisy gauge syndrome to ensure that color flux is indeed conserved, as illustrated in Fig.~\ref{fig:branching}.
In the 2D case the failure of a faulty syndrome to be closed amounts to the presence of endpoints, \emph{e.g.} the red circles of Fig.~\ref{fig:Ising} (middle).
For the 3D gauge color codes there are also endpoints, but now they come in different types due to the coloring, see Fig.~\ref{fig:branching} (top).
One could say that in the 2D case the flux comes in a single color, whereas in 3D there are several colors that separately have to form closed lines.

Since gauge degrees of freedom are not protected in any way, the gauge syndrome is \emph{a priori} random, except for the constraints imposed by the relationship to the error syndrome.
Thus the gauge syndrome loops are not confined at all, but this is immaterial.
As in the 2D repetition code, there will be a set $w$ of edges that gave the wrong measurement, and a minimal set $w_0$ such that $w'=w+w_0$ is closed.
Unlike the original set of wrong measurements $w$, the effective set of wrong measurements $w'$ satisfies color flux conservation.
It is $w'$ that displays confinement, and this is what matters.
Let the correct error syndrome be $v$, a collection of vertices.
Instead the recovered syndrome is $v+v'$, with $v'$ the branching points of $w'$.
\emph{I.e.} $v'$ is the set of wrong syndrome outcomes.
If $w'$ is confined it will typically consists of small clusters, each with a collection of branching points that are also thus clustered, see Fig.~\ref{fig:confinement}.
This is how the point-like wrong syndrome outcomes $v'$ inherit confinement.

It is possible to attach a charge to the branching points~\cite{bombin:2007:branyons}.
Under this perspective each connected component of $w'$ gives rise to a collection of charges with neutral total charge.
In other words, the charge is confined.
In fact, one can regard the charges $v'$ as sources of a field with flux tubes given by the elements of $w'$.
In this picture the connection between the gauge syndrome and the error syndrome is just the gauss law.

The error correction just described must be performed twice, one for $X$ errors and the other for $Z$ errors.
The two cannot be unified in the sense that the gauge operators to be measured do not commute.
Also for this reason it is not possible to upgrade the gauge syndrome to an error syndrome.
This would work only for either $X$ or $Z$ errors, and confinement would be lost for the other type of errors.

\begin{figure}
  \centering
  \includegraphics[width=5cm]{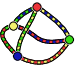}
  \caption{
  The set of edges with an effectively wrong measurement constitute a gauge syndrome. 
  If noise in measurements is below a threshold, this gauge syndrome is composed of small clusters. 
  Each cluster contributes branching points that are thus close to each other.
  Moreover, they have overall neutral charge, \emph{i.e.} there exists a local error with such a syndrome.
  This is the origin of confinement.
  }
\label{fig:confinement}
\end{figure}

\section{Models}\label{sec:models}

This section deals with the modeling of fault-tolerant quantum error correction.
The focus will be on simplicity, but keeping the models interesting enough so that the results are compelling.

\subsection{Framework}

Strictly speaking fault-tolerant quantum computing requires studying the computation process as a whole.
In order to isolate the error correction stage, however, here the focus will be on noisy channels, in particular classes (sets) of channels where noise satisfies certain criteria.
Given classes of channels $\cchan A$ and $\cchan B$ their composition is
\begin{equation}
\cchan A\circ\cchan B:=\set{\chan A\circ \chan B}{\chan A\in \cchan A, \chan B\in \cchan B},
\end{equation}
and similarly for the composition $\cchan A\circ \chan E$ or $\chan E\circ \cchan A$ with some channel or operation $\chan E$.

The noise that accumulates in a computation at a given time will be represented by a class of channels $\cchan N_{\tau,\epsilon}$.
The specific class will depend on the error correction scheme of interest, but the meaning of the (positive) parameters $\tau$ and $\epsilon$ will always be the same:
\begin{itemize}
\item
$\tau$ characterizes the accumulation of errors, which is partially reversible but becomes catastrophic if $\tau$ surpasses a critical value, and
\item
$\epsilon$ characterizes logical noise, which is not correctable in any way.
\end{itemize}
The reasons to consider two separate kinds of noise will become clear shortly.
Noisy error correction will be represented by a class of channels $\cchan R_\eta$ (where the R stands for recovery).
Again, the specific class will depend on the scheme of interest, but the positive parameter $\eta$ will always indicate how noisy the recovery operation is.
In the case of single-shot error correction, the recovery operation is quantum-local.

The purpose of the recovery operation is (i) to reduce the value of $\tau$ as much as possible, so that the system can absorb more errors, and (ii) to increase the value of $\epsilon$ as little as possible.
To formalize this, let $\chan C$ denote the map that projects onto the code subspace.
The goal is to find conditions under which
\begin{equation}\label{eq:goal}
\cchan R_{\eta}\circ \cchan N_{\tau,\epsilon}\circ \chan C\subseteq \cchan N_{\tau',\epsilon+\delta}\circ \chan C,
\end{equation}
in such a way that (i) the residual noise $\tau'$ can be made as small as desired solely by reducing the noise $\eta$ of the recovery operation and (ii) the increment $\delta$ of the logical noise can be made as small as desired solely by increasing the system size.

In practice $\tau$ will be characterized in terms of error syndromes, so that it nicely complements the information given by $\epsilon$.
As for $\eta$, it will eventually be directly related to the likeliness of syndrome extraction errors.
The whole approach to be used follows the line of reasoning of section~\ref{sec:noisy}.
The parameter $\tau$ is intended to play the role of a `temperature' that describes how much the error syndrome $l$ is confined.
Recall in this regard the parallelism of section ~\ref{sec:Ising} between excitations and error syndrome: confinement emerged for low temperature and low noise levels, respectively.
Similarly $\eta$ dictates how much the wrong error syndrome $w'$ is confined, see~\eqref{eq:eff_synd}.
Thus \eqref{eq:goal} states that error correction aims to `refrigerate' the system while disturbing encoded qubits as little as possible.
The noise in the recovery operation puts a lower limit on the residual temperature $\tau'$.

\subsection{Stabilizer codes}\label{sec:stabilizer}

In operator quantum error correction~\cite{kribs:2005:unified}, quantum information is stored in a subsystem $B$ of a \emph{code subspace} 
\begin{equation}
C= A\otimes B
\end{equation}
of the Hilbert space $\hilb$ representing the noisy system.
Thus $A$ represents \emph{gauge degrees of freedom}.
When $A$ is trivial, the code is a conventional subspace code.

\emph{Stabilizer codes}~\cite{gottesman:1996:stabilizer} are defined in systems composed of a number of physical qubits. 
The subspace $C$ is defined in terms of a \emph{stabilizer $\stab$}, a subgroup of the Pauli group $\pauli$ with $-\id \not 
\in\stab$.
In particular $C$ is the subspace with projector  
\begin{equation}\label{eq:P}
P=\prod_{s\in \stab}\frac{1+s}2.
\end{equation}
A \emph{gauge group} $\gauge\subseteq\pauli$ with $\stab$ as its center (up to phases) fixes the decomposition $C=A\otimes B$~\cite{poulin:2005:stabilizer}.
The elements of $\gauge$ generate the full algebra of operators on the gauge subsystem $A$, and act trivially on the logical subsystem $B$.
The elements of the group $\cent\gauge$, the centralizer of $\gauge$ in $\pauli$, generate the full algebra on $B$.
They are called \emph{bare logical operators}, as opposed to the \emph{dressed logical operators} in $\cent\stab$, which might involve a nontrivial action in $A$.
Bare logical operators that are equivalent up to stabilizers have the same action: $\logical\subseteq\cent\gauge$ will denote a set of representatives (up to phases) of the quotient $\cent\gauge/\stab$.
Nontrivial dressed logical operators, \emph{i.e.} the elements of $\cent\stab-\gauge$, produce a logical error without leaving any trace in the syndrome.
The \emph{code distance} is the integer
\begin{equation}\label{eq:distance}
d = \min_{E\in\cent\stab-\gauge} |\supp E|
\end{equation}
with $\supp E$ the set of qubits where $E$ acts nontrivially.

\subsubsection*{Error syndrome}

Error detection amounts to measure a generating set  $\stab_0$ of the stabilizer group $\stab$.
The \emph{error syndrome} is the set $\sigma\subseteq \stab_0$ of stabilizer generators that yield a negative eigenvalue.
Thus, if the syndrome $\sigma$ is recovered the system is projected via
\begin{equation}
P_\sigma:=\prod_{s\in\stab_0-\sigma} \frac {1+s} 2 \prod_{s\in \sigma} \frac {1-s} 2.
\end{equation}
For encoded states the syndrome is trivial since $P=P_\emptyset$.
If the error $E\in\pauli$ affects an encoded state, the resulting syndrome is $\synd E$, defined by 
\begin{equation}\label{eq:synd}
s\in \synd E \iff sE=-Es.
\end{equation}
The syndrome is a group homomorphism, \emph{i.e.}
\begin{equation}\label{eq:synd_prop}
\synd (E_1E_2) =\synd E_1+\synd E_2.
\end{equation}
For every syndrome $\sigma$ an operator $\corr \sigma\in\pauli$ must be chosen that will take the system back to the code subspace, \emph{i.e.} it must satisfy
\begin{equation}\label{eq:corr}
\synd {\corr \sigma}=\sigma.
\end{equation}

Not every subset $\sigma\subseteq \stab_0$ corresponds to an error syndrome.
In particular, $\sigma$ is a syndrome if an only if $P_\sigma\neq 0$.
When this is satisfied $\sigma$ is said to be \emph{valid} and $\corr \sigma$ is defined.
If $\sigma$ and $\sigma'$ are valid, then so is $\sigma+\sigma'$ due to \eqref{eq:synd_prop}. 
This will be relevant below for the following reason.
In presence of noise, instead of the correct syndrome $\sigma$ measuring the stabilizer generators will yield in general a non-valid set $\sigma+\omega$. 
Given $\sigma+\omega$ there are many $\omega_0\subseteq \stab_0$ such that
\begin{equation}\label{eq:close_S} 
\sigma + \omega + \omega_0
\end{equation}
is valid.
Importantly, it is possible to choose $\omega_0$ in such a way that it only depends on $\omega$, not $\sigma$.
The reason is that $\omega+\omega_0$ is valid if and only if \eqref{eq:close_S} is.
The syndrome \eqref{eq:close_S} will be called the effective syndrome, and $\omega+\omega_0$ the effective wrong syndrome.

\subsubsection*{Gauge syndrome}

Alternatively, one might define an error syndrome $\sigma$ as a group morphism
\begin{equation}
\rho_\sigma:\stab \longrightarrow \mathbf Z_2.
\end{equation}
The two definitions are exchangeable: 
the morphism maps the elements of $\sigma$ to 1 and the rest of elements of $\stab_0$ to 0.
A \emph {gauge syndrome} can be similarly defined~\cite{bombin:2013:structure} as a subset $\gamma$ of a set $\gauge_0$ of generators of the gauge group, or as a group morphism
\begin{equation}
\rho_\gamma:\gauge \longrightarrow \mathbf Z_2
\end{equation}
such that $\rho_\gamma(\alpha\id)=0$ for any complex number $\alpha$ but otherwise analogous to the stabilizer case: the morphism maps the elements of $\gamma$ to 1 and the rest of elements of $\gauge_0$ to 0.
The definitions in terms of morphisms are convenient because they make it clear that every gauge syndrome is also a stabilizer syndrome, obtained by restricting the domain of $\rho_\gamma$ to $\stab$.
That $\sigma$ is the error syndrome of the gauge syndrome $\gamma$ is denoted
\begin{equation}\label{eq:err}
\sigma= \err \gamma.
\end{equation}
For some codes it is possible to perform a series of measurements of the gauge generators that provide a useful gauge syndrome, \emph{i.e.} one that yields the right error syndrome.
\emph{E.g.} CSS codes have $X$-type and $Z$-type gauge generators that can be measured in two separate rounds to provide a useful gauge syndrome.

Given any subset of stabilizer generators $\sigma\subseteq\stab_0$, consider the map $\tilde \rho_\sigma$ from $\stab_0$ to $\mathbf Z_2$ mapping the elements of $\sigma$ to $1$ and the rest to $0$. 
The set $\sigma$ is an error syndrome precisely when there exists a morphism $\rho_\sigma$ as defined above that extends the map $\tilde \rho_\sigma$. 
This approach to error syndromes extends immediately to gauge syndromes, so that one can distinguish between valid and non-valid subsets $\gamma\subseteq\gauge_0$.
The above discussion on wrong syndrome measurements also extends to the gauge case.
Namely, since $\rho_{\gamma+\gamma'}=\rho_\gamma+\rho_{\gamma'}$ (point by point addition), the set $\gamma+\gamma'$ is valid if $\gamma$ and $\gamma'$ are.
In presence of noise, instead of the correct gauge syndrome $\gamma$ measuring the gauge generators will yield in general a non-valid set $\gamma+\delta$.
As in the case of the error syndrome case, it is possible to choose some $\delta_0\subseteq \gauge_0$ that only depends on $\delta$ and such that
\begin{equation}\label{eq:close_G} 
\gamma + \delta + \delta_0
\end{equation}
is valid.
The effective error syndrome is
\begin{equation}\label{eq:eff_syndrome} 
\err \gamma + \omega,\qquad \omega:=\err (\delta+\delta_0).
\end{equation}
Here $\omega$ is, unlike in~\eqref{eq:close_S}, the effective wrong syndrome, \emph{i.e.} no $\omega_0$ is needed.
For an alternative perspective on gauge syndromes as Pauli operators, see~\cite{bombin:2013:structure}.

\subsubsection*{Channels}

In what follows all the operations will take the form
\begin{equation}\label{eq:channels}
\rho\to\sum_i K_i \,\chan Q_i(\rho)\, K_i^\dagger,\qquad K_i=p_i^{1/2} E_iP_i,
\end{equation}
where $\chan Q_i$ is a CPTP map with Kraus operators in the gauge algebra, $p_i\geq 0$, $E_i\in\pauli$ and each $P_i$ is the sum of any number of different projectors $P_\sigma$.
This is a quantum channel if
\begin{equation}
\sum_i p_i P_i=\id.
\end{equation}
Introducing the channels $\chan Q_i$ makes it possible to avoid a detailed description of operations affecting only the gauge degrees of freedom, \emph{e.g.} the extraction of the gauge syndrome.
In fact, operations of the form \eqref{eq:channels} will be denoted $\sset{K_i}$, thus ignoring the action on gauge qubits entirely.
This is consistent with composition, in the sense that
\begin{equation}
\sset{K_i}\circ\sset {L_j}=\sset{K_iL_j}_{i,j}.
\end{equation} 
The choice \eqref{eq:channels} is rather artificial, but it is rich enough to represent error correction and non-trivial forms of noise.
In particular, the code subspace $C$ can be characterized by the projection
\begin{equation}\label{eq:C}
\chan C :=\sset P,
\end{equation}
and ideal error correction takes the form
\begin{equation}\label{eq:recovery}
\chan R_0=\sset{\corr \sigma P_\sigma}.
\end{equation}

\subsection{Noise characterization}

The purpose of this section is to begin the characterization of the noisy channel classes $\cchan N_{\tau,\epsilon}$ in terms of properties, a task that will be completed in the next section. 
Rather than general channels, any given class will consist of Pauli channels, \emph{i.e.} channels of the form
\begin{equation}\label{eq:noise}
\chan E=:\pchan {p_{\chan E}(E)}E_{E\in\pauli}.
\end{equation}
Pauli channels are characterized by a distribution $p_{\chan E}$ over Pauli errors and include \emph{e.g.} depolarizing noise on individual qubits.

Since the parameter $\epsilon$ intends to characterize logical noise, it is natural to relate it to the failure probability of ideal error correction on a given channel.
The net effect of applying ideal error correction after the noisy channel $\chan E$ affects encoded states is
\begin{equation}\label{eq:net}
\chan R_0\circ \chan E\circ \chan C=\pchan {p_{\chan E}(E)}{\corr{\synd E}EP}_{E\in\pauli}.
\end{equation}
Every $E\in\pauli$ decomposes uniquely as 
\begin{equation}\label{eq:decompose}
E=\corr {\synd E} G L, \qquad G\in\gauge, \qquad L\in\logical.
\end{equation}
\emph{I.e.} each $E$ is uniquely identified by the correction operator for its error syndrome, a logical representative and a gauge operator.
Applying this decomposition to \eqref{eq:net} yields
\begin{equation}
\chan R_0\circ \chan E\circ \chan C=\pchan {p_{\chan E}(\corr \sigma G L)}{LP}_{\sigma,G,L},
\end{equation}
Thus ideal error correction fails with probability
\begin{equation}
\fail {\chan E}:= \sum_{L\neq \id}\sum_{\sigma,G} p_{\chan E}(\corr \sigma G L).
\end{equation}
This failure probability quantifies the logical noise and thus will be a central figure of merit.

The parameter $\tau$ intends to characterize the syndrome distribution when a channel in the class $\cchan N_{\tau,\epsilon}$ is applied to an encoded state.
For a Pauli channel $\chan E$ the distribution is
\begin{equation}\label{eq:synd_distr}
q_{\chan E}(\sigma):=\sum_{G,L} p_{\chan E}(\corr \sigma GL).
\end{equation}
It will be convenient to encapsulate it as the channel
\begin{equation}
\rchan E:=\pchan {q_{\chan E}(\sigma)}{\corr \sigma}_\sigma= \pchan {p_{\chan E}(E)}{\corr {\synd E}}_E,
\end{equation}
which has the same syndrome distribution as $\chan E$ and satisfies $\fail{\rchan E}=0$.
In other words, the channel $\rchan E$ differs from the channel $\chan E$ in that $\rchan E$ applies correctable errors $E'$ instead of the general errors $E$ applied by $\chan E$, in particular via the correspondence $E'=\corr{\synd E}$.
Thus $\chan E$ and $\rchan E$ only differ by their action on encoded qubits, which is trivial in the case of $\rchan E$.

Together, $\fail {\chan E}$ and $\rchan E$ contain a good deal of information about the noisy channel $\chan E$.
This is exemplified by the following result.
\begin{lem}\label{lem:comp}
For any Pauli channels $\chan E$, $\chan D$ channel composition satisfies 
\begin{align}
&\qquad\reduce {\chan E \circ \chan D}=\reduce {\rchan E\circ \rchan D}, \label{eq:comp1}\\
\fail {\chan E\circ \chan D}&\leq \fail {\chan E}+\fail {\chan D}+\fail {\rchan E\circ \rchan D}, \label{eq:comp2}\\
\fail {\rchan E\circ \rchan D}&\leq \fail {\chan E}+\fail {\chan D}+\fail {\chan E\circ \chan D}.\label{eq:comp2bis}
\end{align}
\end{lem}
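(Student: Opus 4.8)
The plan is to work directly with the Pauli-channel descriptions and the unique decomposition \eqref{eq:decompose}. First I would unpack what $\reduce{\chan E\circ\chan D}$ means: composing two Pauli channels $\chan E=\pchan{p_{\chan E}(E)}{E}$ and $\chan D=\pchan{p_{\chan D}(D)}{D}$ gives a Pauli channel applying $ED$ with probability (essentially) $p_{\chan E}(E)p_{\chan D}(D)$, and since the syndrome is a homomorphism \eqref{eq:synd_prop}, the syndrome of $ED$ is $\synd E+\synd D$. Passing to the reduced channel replaces $ED$ by $\corr{\synd E+\synd D}$. On the other hand $\reduce{\rchan E\circ\rchan D}$ applies $\corr{\synd E}$ with probability $p_{\chan E}(E)$ and $\corr{\synd D}$ with probability $p_{\chan D}(D)$, whose product $\corr{\synd E}\corr{\synd D}$ has syndrome $\synd E+\synd D$, and reducing again replaces it by $\corr{\synd E+\synd D}$. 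So both sides induce the \emph{same syndrome distribution}, and since $\reduce{\,\cdot\,}$ is by definition determined entirely by the syndrome distribution (it has $\mathrm{fail}=0$), this proves \eqref{eq:comp1}. The one point needing care is the bookkeeping when two different error pairs $(E,D)$ and $(E',D')$ produce the same product or the same syndrome; this is just a matter of summing probabilities over fibers, and the homomorphism property guarantees the fibers match up.

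For \eqref{eq:comp2}, the idea is an inclusion–exclusion / union-bound argument on the event ``a logical error occurs.'' Write the net effect of ideal error correction on $\chan E\circ\chan D\circ\chan C$ using \eqref{eq:net} and \eqref{eq:decompose}: the logical outcome attached to the error $ED$ is the logical part $L(ED)\in\logical$ in the decomposition $ED=\corr{\synd(ED)}G L(ED)$. Similarly $E$ carries a logical part $L(E)$, $D$ carries $L(D)$, and in the reduced composition $\rchan E\circ\rchan D$ the relevant logical outcome is the logical part of $\corr{\synd E}\corr{\synd D}$, call it $L_{\mathrm{red}}$. The key algebraic fact is a cocycle-type identity: up to stabilizers and gauge operators, $L(ED)$ equals $L(E)\,L(D)\,L_{\mathrm{red}}$ — because $ED$ and $\corr{\synd E}G_E L(E)\cdot\corr{\synd D}G_D L(D)$ agree, and modulo $\stab$ and $\gauge$ one can reorganize $\corr{\synd E}\corr{\synd D}$ into $\corr{\synd(ED)}$ times its own logical part $L_{\mathrm{red}}$. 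Hence if $L(ED)\neq\id$ then at least one of $L(E)$, $L(D)$, $L_{\mathrm{red}}$ is nontrivial. Taking probabilities and applying a union bound over these three events gives exactly \eqref{eq:comp2}. Inequality \eqref{eq:comp2bis} follows by the same identity read in the other direction: $L_{\mathrm{red}} = L(E)\,L(D)\,L(ED)$ up to $\stab$ and $\gauge$, so $L_{\mathrm{red}}\neq\id$ forces one of $L(E),L(D),L(ED)$ nontrivial, and another union bound closes it.

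The main obstacle I anticipate is the cocycle identity relating $L(ED)$ to $L(E)$, $L(D)$, and $L_{\mathrm{red}}$: one has to verify carefully that the ambiguities introduced by gauge operators $\gauge$ genuinely drop out of the \emph{logical} label (they do, since $\logical$ represents $\cent\gauge/\stab$ and gauge operators commute past bare logicals trivially on $B$), and that phases and stabilizer factors never promote a trivial logical to a nontrivial one. Concretely I would fix, once and for all, the section $\sigma\mapsto\corr\sigma$ and the logical representatives $\logical$, then show that the map $E\mapsto L(E)\in\logical$ is a homomorphism $\pauli\to\logical\cong\cent\gauge/\stab$ \emph{up to the correction-operator twist}, i.e. track precisely the discrepancy between $\corr{\synd E}\corr{\synd D}$ and $\corr{\synd E+\synd D}$ — that discrepancy is a correctable operator with trivial syndrome, hence lies in $\gauge\cdot\logical$, and its logical part is exactly $L_{\mathrm{red}}$. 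Once this discrepancy is identified the three inequalities are immediate union bounds. Everything else — the composition formula for Pauli channels, the fact that $\reduce{\,\cdot\,}$ depends only on the syndrome distribution, the nonnegativity needed for the union bound — is routine given the setup already established in the excerpt.
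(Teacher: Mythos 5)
Your proposal is correct and follows essentially the same route as the paper: \eqref{eq:comp1} by noting that composition convolves syndrome distributions (via \eqref{eq:synd_prop}) and that $\reduce{\,\cdot\,}$ depends only on that distribution, and \eqref{eq:comp2}--\eqref{eq:comp2bis} by decomposing $ED$, $E$, $D$ and $\corr{\synd E}\corr{\synd D}$ as in \eqref{eq:decompose} and observing that a nontrivial logical part of the product forces a nontrivial logical part in at least one of the three constituents, followed by a union bound. Your ``cocycle identity'' is exactly the paper's case analysis around condition \eqref{eq:bad}, i.e.\ whether $\corr{\sigma_1+\sigma_2}\corr{\sigma_1}\corr{\sigma_2}\in\gauge$.
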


\begin{sketchproof}
The equality \eqref{eq:comp1} can be readily checked. 
As for the inequality \eqref{eq:comp2}, consider the composition of two Pauli errors
\begin{equation}\label{eq:twoerrors}
E_1E_2=\corr {\sigma_1+\sigma_2}GL, \qquad E_i=\corr {\sigma_i} G_i L_i.
\end{equation}
If $L\neq 1$ at least one the next must hold: $L_1\neq 1$ or $L_2\neq 1$ or 
\begin{equation}\label{eq:bad}
\corr {\sigma_1+\sigma_2}\corr {\sigma_1}\corr {\sigma_2}\not\in\gauge.
\end{equation}
Each of these cases corresponds to one of the terms in \eqref{eq:comp2}.
A similar reasoning gives~\eqref{eq:comp2bis}.
Namely, \eqref{eq:bad} holds for $E_i$ as in \eqref{eq:twoerrors} only if either $L_1\neq 1$ or $L_2\neq 1$ or $L\neq 1$, and each case gives a term in \eqref{eq:comp2bis}.
\end{sketchproof}

It is now possible to write down a first property that any parametrized collection of classes $\cchan N_{\tau,\epsilon}$ should satisfy:
\begin{description}[style=sameline]
\item [1.]\it
A channel $\chan E$ belongs to the class $\cchan N_{\tau,\epsilon}$ if and only if
\begin{equation}\label{eq:axiom1}
\fail {\chan E}\leq \epsilon \qquad\text{and} \qquad\rchan E\in \cchan N_{\tau,0}.
\end{equation}
\end{description}
In particular, the collection of classes $\cchan N_{\tau,\epsilon}$ is fixed by the classes with no logical noise $\cchan N_{\tau,0}$.
A second property is the following.
\begin{description}[style=sameline]
\item [2.]\it
For any $\tau$ and $\tau'$, there exists $\delta$ such that
\begin{equation}\label{eq:axiom2}
\cchan N_{\tau,0}\circ \cchan N_{\tau',0}\subseteq \cchan N_{\tau+\tau',\delta}.
\end{equation}
\end{description}
It is implicitly assumed that a given parametrized collection of classes $\cchan N_{\tau,\epsilon}$ actually has a third parameter, the system size: it needs not be explicit in equations because it is the same for all the objects involved.
Therefore $\delta$ not only depends on $\tau$ and $\tau'$, but also on the system size.
The interesting scenario is one in which errors do not pile up catastrophically and logical errors can be made arbitrarily small.
This means that there should be a threshold value for $\tau+\tau'$ below which $\delta$ can be made as small as desired solely by increasing the system size.
Moreover, this should still be true when we compose channels with non-zero logical noise.
This is the content of the next corollary, a direct consequence of the results (\ref{eq:comp1}, \ref{eq:comp2}) in lemma~\ref{lem:comp}.
\begin{cor}
For a collection of classes $\cchan N_{\tau,\epsilon}$ satisfying the above two properties and $\delta$ as in \eqref{eq:axiom2}, 
\begin{equation}\label{eq:compN}
\cchan N_{\tau,\epsilon}\circ \cchan N_{\tau',\epsilon'}\subseteq \cchan N_{\tau+\tau',\epsilon+\epsilon'+\delta}.
\end{equation}
\end{cor}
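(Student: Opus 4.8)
The plan is to unpack class membership through Property~1 and then chain the three ingredients of Lemma~\ref{lem:comp}. Fix arbitrary Pauli channels $\chan E\in\cchan N_{\tau,\epsilon}$ and $\chan D\in\cchan N_{\tau',\epsilon'}$; their composition $\chan E\circ\chan D$ is again a Pauli channel, so by Property~1 it will suffice to establish the two conditions in \eqref{eq:axiom1} for $\chan E\circ\chan D$ with parameters $\tau+\tau'$ and $\epsilon+\epsilon'+\delta$.

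First I would deal with the reduced channel. The ``only if'' direction of Property~1 applied to $\chan E$ and to $\chan D$ gives $\rchan E\in\cchan N_{\tau,0}$ and $\rchan D\in\cchan N_{\tau',0}$, whence Property~2 yields $\rchan E\circ\rchan D\in\cchan N_{\tau+\tau',\delta}$. Applying Property~1 once more, now to this membership, extracts both $\fail{\rchan E\circ\rchan D}\le\delta$ and $\reduce{\rchan E\circ\rchan D}\in\cchan N_{\tau+\tau',0}$. By \eqref{eq:comp1} the second statement is precisely $\reduce{\chan E\circ\chan D}\in\cchan N_{\tau+\tau',0}$, i.e.\ the second condition of \eqref{eq:axiom1}.

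Next I would bound the failure probability. Inequality \eqref{eq:comp2} gives $\fail{\chan E\circ\chan D}\le\fail{\chan E}+\fail{\chan D}+\fail{\rchan E\circ\rchan D}$; bounding the first two terms by $\epsilon$ and $\epsilon'$ via the ``only if'' direction of Property~1 and the third by $\delta$ from the previous step, I obtain $\fail{\chan E\circ\chan D}\le\epsilon+\epsilon'+\delta$, the first condition of \eqref{eq:axiom1}. The ``if'' direction of Property~1 then places $\chan E\circ\chan D$ in $\cchan N_{\tau+\tau',\epsilon+\epsilon'+\delta}$; since $\chan E$ and $\chan D$ were arbitrary, the claimed inclusion follows.

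I do not expect a genuine obstacle here: the only care needed is to use Property~1 in both directions (to read off bounds from a membership and to conclude a membership from bounds) and to note that \eqref{eq:comp2bis} plays no role, only \eqref{eq:comp1} and \eqref{eq:comp2} entering, exactly as advertised. It is worth remarking that the $\delta$ appearing here is the one furnished by Property~2, so it silently inherits that property's dependence on the (suppressed) system-size parameter, and no new size dependence is introduced by the composition.
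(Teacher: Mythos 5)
Your proof is correct and follows exactly the route the paper indicates (it declares the corollary ``a direct consequence of'' \eqref{eq:comp1} and \eqref{eq:comp2}, without writing out the details): unpack membership via Property~1, apply Property~2 to the reduced channels, transfer the result with \eqref{eq:comp1}, and bound the failure probability with \eqref{eq:comp2}. Your observation that \eqref{eq:comp2bis} is not needed is also consistent with the paper's remark.
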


\subsection{Local noise}\label{sec:local_noise}

The errors caused by interactions with the environment or within the implementation of local circuits are expected to have a local nature.
To model locality it will be convenient to make use of the following concept.
\begin{defn}\label{def:bounded}
Let $p(A)$ be a probability distribution over subsets $A\subseteq B$ of some set $B$.
Given some $\alpha>0$ the distribution $p$ is $\alpha$-bounded if for every $A\subseteq B$
\begin{equation}\label{eq:tilde}
\tilde p(A):=\sum_{A'\supseteq A} p(A') \leq \alpha^{|A|}.
\end{equation}
\end{defn}
To characterize the locality of a Pauli channel $\chan E$, rather than considering the probability distribution $p_{\chan E}(E)$ over Pauli errors $E$ it is enough to have the corresponding probability distribution for their support
\begin{equation}\label{eq:local} 
p_{\chan E}'(R):=\sum_{E\suchthat \supp E= R} p_{\chan E}(E),
\end{equation}
where $R$ is any set of qubits.
\begin{defn}
Given a qubit system, $\cchan L_{\lambda}$ is the set of Pauli channels $\chan E$ for which the distribution $p'_{\chan E}$ of \eqref{eq:local} is $\lambda$-bounded.
\end{defn}
The classes of channels $\cchan L_{\lambda}$ model local noise: errors that affect a specific large set of qubits are highly unlikely.

The aim of the third and last property for parametrized collections of classes $\cchan N_{\tau,\epsilon}$ is to guarantee their compatibility with local noise:\begin{description}[style=sameline]
\item [3.]\it
For every $\tau>0$ there exist $\lambda>0$ and $\epsilon$ such that
\begin{equation}\label{eq:axiom3}
\cchan L_\lambda\subseteq \cchan N_{\tau,\epsilon}.
\end{equation}
\end{description}
In other words, the `temperature' $\tau$ caused by local noise can be made arbitrarily small by reducing the intensity of the local noise.
The logical noise parameter can be taken to be $\epsilon=f(\lambda)$ with the function $f$ satisfying
\begin{equation}\label{eq:f}
\forall \chan E\in\cchan L_\lambda,\qquad \fail {\chan E} \leq f(\lambda).
\end{equation}
As usual $f$ is implicitly dependent on the system size.
In the interesting scenarios $f$ decreases rapidly as the system size increases for $\lambda$ below a threshold value.

Later several different collections of classes $\cchan N_{\tau,\epsilon}$ of noisy channels will be needed.
The most straightforward collection will play a role when discussing 3D gauge color codes.
It will be denoted $\cchan N_{\tau,\epsilon}^\mathrm{loc}$ and it just models standard local noise in the following sense.
\begin{defn}
Given a qubit system and a stabilizer code,  $\cchan N_{\tau,\epsilon}^\mathrm{loc}$ is the set of Pauli channels $\chan E$ such that  $\fail {\chan E}\leq \epsilon$ and  there exists $\chan D\in  \cchan L_{\tau}$ with
\begin{equation}\label{eq:def_Nlocal}
\rchan E = \rchan D.
\end{equation}
\end{defn}

\begin{prop}
If \eqref{eq:f} holds, the collection of classes $\cchan N_{\tau,\epsilon}^\mathrm{loc}$ satisfies the above properties 1-3 with 
\begin{align}\label{eq:delta_taus}
\delta &= f(\tau)+f(\tau')+f(\tau+\tau'),\\
\lambda&=\tau, \qquad \epsilon = f(\tau).
\end{align}
\end{prop}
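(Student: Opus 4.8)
\noindent\emph{Proof plan.} The plan is to verify the three properties in turn; properties~1 and~3 fall straight out of the definition of $\cchan N_{\tau,\epsilon}^\mathrm{loc}$, while property~2 carries the real content. For property~1 I would use two elementary facts: $\fail{\rchan E}=0$ for every Pauli channel $\chan E$, and the reduction is idempotent, $\reduce{\rchan E}=\rchan E$, since $\rchan E$ applies only the canonical correctors $\corr\sigma$ and $\synd{\corr\sigma}=\sigma$ by~\eqref{eq:corr}. Unwinding definitions, $\chan E\in\cchan N_{\tau,\epsilon}^\mathrm{loc}$ means $\fail{\chan E}\le\epsilon$ together with the existence of $\chan F\in\cchan L_\tau$ with $\rchan E=\rchan F$; but the second clause is precisely the assertion $\rchan E\in\cchan N_{\tau,0}^\mathrm{loc}$, because $\fail{\rchan E}=0$ and $\reduce{\rchan E}=\rchan E=\rchan F$ reuses the same witness $\chan F$. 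This is exactly the equivalence~\eqref{eq:axiom1}. For property~3, given any $\chan E\in\cchan L_\tau$ I would take the witness channel to be $\chan E$ itself, so that $\rchan E=\rchan F$ holds trivially with $\chan F=\chan E\in\cchan L_\tau$; since $\fail{\chan E}\le f(\tau)$ by~\eqref{eq:f}, this shows $\cchan L_\tau\subseteq\cchan N_{\tau,f(\tau)}^\mathrm{loc}$, i.e.~\eqref{eq:axiom3} with $\lambda=\tau$ and $\epsilon=f(\tau)$.

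For property~2 I would first establish an auxiliary fact: local noise composes additively in its parameter, $\cchan L_\tau\circ\cchan L_{\tau'}\subseteq\cchan L_{\tau+\tau'}$. For a composition of two Pauli channels $\chan F\in\cchan L_\tau$, $\chan H\in\cchan L_{\tau'}$ the composite error is a product $E_1E_2$ with $E_1,E_2$ drawn independently, and $\supp(E_1E_2)\subseteq\supp E_1\cup\supp E_2$; hence for every set of qubits $A$,
\begin{equation*}
\widetilde{p'_{\chan F\circ\chan H}}(A)\;\le\;\sum_{A_1\sqcup A_2=A}\widetilde{p'_{\chan F}}(A_1)\,\widetilde{p'_{\chan H}}(A_2)\;\le\;\sum_{A_1\sqcup A_2=A}\tau^{|A_1|}(\tau')^{|A_2|}\;=\;(\tau+\tau')^{|A|},
\end{equation*}
where the first inequality is a union bound over the decompositions $A_1=A\cap\supp E_1$, $A_2=A\setminus A_1$ (one of which always works when $A\subseteq\supp E_1\cup\supp E_2$) together with the independence of $E_1,E_2$, and the last step is the binomial theorem. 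Thus $\chan F\circ\chan H$ is $(\tau+\tau')$-bounded, i.e.\ it lies in $\cchan L_{\tau+\tau'}$.

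Now take $\chan E\in\cchan N_{\tau,0}^\mathrm{loc}$ and $\chan D\in\cchan N_{\tau',0}^\mathrm{loc}$, with local witnesses $\chan F\in\cchan L_\tau$, $\chan H\in\cchan L_{\tau'}$, so that $\fail{\chan E}=\fail{\chan D}=0$, $\rchan E=\rchan F$ and $\rchan D=\rchan H$. Applying~\eqref{eq:comp1} first to $\chan E,\chan D$ and then (read backwards) to $\chan F,\chan H$ gives $\reduce{\chan E\circ\chan D}=\reduce{\rchan E\circ\rchan D}=\reduce{\rchan F\circ\rchan H}=\reduce{\chan F\circ\chan H}$, so by the auxiliary fact $\chan F\circ\chan H\in\cchan L_{\tau+\tau'}$ is a valid local witness for $\chan E\circ\chan D$. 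For the logical noise, \eqref{eq:comp2} gives $\fail{\chan E\circ\chan D}\le\fail{\rchan E\circ\rchan D}=\fail{\rchan F\circ\rchan H}$, and \eqref{eq:comp2bis} applied to $\chan F,\chan H$, together with~\eqref{eq:f} for $\chan F\in\cchan L_\tau$, $\chan H\in\cchan L_{\tau'}$ and $\chan F\circ\chan H\in\cchan L_{\tau+\tau'}$, yields $\fail{\rchan F\circ\rchan H}\le f(\tau)+f(\tau')+f(\tau+\tau')$. Hence $\chan E\circ\chan D\in\cchan N_{\tau+\tau',\delta}^\mathrm{loc}$ with $\delta$ as in~\eqref{eq:delta_taus}, which is~\eqref{eq:axiom2}.

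The main obstacle, though a mild one, is the auxiliary composition lemma for $\cchan L$: one must notice that $\supp$ is subadditive under multiplication, arrange the union bound over partitions of $A$ so that the independence of $E_1$ and $E_2$ can be invoked, and then collapse the resulting sum with the binomial identity. Everything else is bookkeeping with the definitions and with lemma~\ref{lem:comp}, which has already done the combinatorial work of tracking when a composed error acquires a nontrivial logical part.
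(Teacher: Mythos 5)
Your proposal is correct and follows essentially the same route as the paper: properties 1 and 3 are read off from the definitions, and property 2 is handled by taking the composed local witness $\chan F\circ\chan H$, proving $\cchan L_\tau\circ\cchan L_{\tau'}\subseteq\cchan L_{\tau+\tau'}$ via the same partition-plus-binomial bound, and chaining \eqref{eq:comp1}, \eqref{eq:comp2}, \eqref{eq:comp2bis} with \eqref{eq:f} to get $\delta=f(\tau)+f(\tau')+f(\tau+\tau')$. The only cosmetic difference is that you bound $\fail{\chan E\circ\chan D}$ by $\fail{\rchan E\circ\rchan D}$ via \eqref{eq:comp2} where the paper asserts equality outright; the inequality suffices.
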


\begin{proof}
First notice that
\begin{equation}
\cchan L_\tau\circ\cchan L_{\tau'}\subseteq \cchan L_{\tau+\tau'},
\end{equation}
because for $\chan D\in \cchan L_{\tau}, \chan D'\in \cchan L_{\tau'}$,
using the tilde notation of~\eqref{eq:tilde} on the probability distributions of error supports,
\begin{multline}\label{eq:sum_cases}
\tilde p'_{\chan D\circ \chan D'}(R)\leq \sum_{S\sub R}\tilde p_{\chan D}(S)\tilde p_{\chan D'}(R-S)
\leq\\
\leq \sum_{n=0}^{|R|} {|R| \choose n}\tau^n(\tau')^{|R|-n}= (\tau+\tau')^{|R|}.
\end{multline}

Only the property 2 is non-trivial, and it amounts to
\begin{equation}\label{prop2local}
\chan E\in \cchan N_{\tau,0}^\mathrm{loc}, \chan E'\in \cchan N_{\tau',0}^\mathrm{loc}\quad\Longrightarrow\quad
\chan E\circ\chan E'\in \cchan N_{\tau+\tau',\delta}^\mathrm{loc}
\end{equation}
For such $\chan E$ there exists $\chan D\in  \cchan L_{\tau}$ as in \eqref{eq:def_Nlocal}, and analogously for $\chan E'$ and some $\chan D'\in  \cchan L_{\tau'}$.
The inequality \eqref{eq:comp2bis} takes here the form
\begin{multline}\label{eq:comp2bis_app}
\fail {\chan E\circ \chan {E'}} = \fail {\rchan E\circ \rchan E'}= \fail {\rchan D\circ \rchan {D'}} \leq 
\\ \leq
\fail {\chan D}+\fail {\chan D'}+\fail {\chan D\circ \chan D'}\leq \delta.
\end{multline}
To prove \eqref{prop2local}, it only rests to observe that
\begin{equation}
\reduce {\chan E \circ \chan E'}=\reduce{\reduce {\chan E} \circ \reduce{\chan E'}}=\reduce{\reduce {\chan D} \circ \reduce{\chan D'}}=\reduce {\chan D \circ \chan D'}.
\end{equation}
\end{proof}

It might not be obvious why there is any need to consider classes of noise other than $\cchan N_{\tau,\epsilon}^\mathrm{loc}$. Actually in some cases the physics of the problem (as exemplified by the Ising model in section \ref{sec:confinement}) will naturally give rise to some class of noise for which single-shot error correction is possible.

\subsection{Local codes}\label{sec:local_codes}

A stabilizer code with a fixed set of stabilizer generators induces a notion of connectedness:
qubits are nodes of a graph, with any two linked if they are both in the support of some generator.
A family of stabilizer codes is local if the degree of the nodes of all such graphs is bounded~\footnote{Codes with this property are called LDPC: low-density parity check code.} and the family includes codes with arbitrarily large distance $d$.
Notices that for such a family the number of stabilizer generators is bounded, namely
\begin{equation}\label{eq:bound_gen}
|\stab_0|=O(n),
\end{equation}
where $n$ is the number of qubits of a given code in the family.

For local noise it is natural, but not optimal, to choose $\corr \sigma$ so that $|\supp {\corr \sigma}|$ is minimal.
With this choice any family of local codes displays a threshold.
In particular, as shown in \cite{gottesman:2013:overhead}, for a given family there exists some $\lambda_0>0$ such that if $\lambda<\lambda_0$ then the bound \eqref{eq:f} on the logical error probability for channels in $\cchan L_\lambda$ holds with
\begin{equation}\label{eq:local_codes}
f(\lambda) \propto n \left (\frac \lambda{\lambda_0}\right)^{d/2},
\end{equation}
where $n$ is the number of qubits and the proportionality constant is system size independent.

Notable among local stabilizer codes are topological stabilizer codes, see \emph{e.g.}~\cite{bombin:2013:topological}.
In topological codes locality is geometric, in the sense that qubits are placed on a lattice and stabilizer generators only involve qubits on a given ball of bounded radius.
Typically the distance $d$ scales as a polynomial on the linear size of the lattice, and so does the number of qubits $n$, so that \eqref{eq:local_codes} yields an exponential suppression of logical errors with the system size.
It is maybe also due to their strong physical flavor that topological stabilizer codes display other very interesting features, such as the fact that classical processing for error correction is often efficient~\cite{dennis:2002:tqm,duclos:2010:fast},
that error thresholds are high~\cite{dennis:2002:tqm,bombin:2012:strong},
and that computations can be performed in many different ways, some conventional and some of a topological nature~\cite{dennis:2002:tqm,bombin:2006:2DCC,bombin:2007:3DCC,raussendorf:2007:deformation,bombin:2009:deformation,bombin:2010:twist,horsman:2012:surgery,landahl:2014:surgery}.

\subsection{Fault-tolerant error correction}

In modelling noisy recovery, a phenomenological approach that captures the essential difficulties involved in fault-tolerant error correction will be enough.
For local codes, the only ones considered in this work, a minimal model could include faulty measurement outcomes with a probability independent of the outcome, and local noise \eqref{eq:local} following the ideal recovery operation~\cite{dennis:2002:tqm}.
The latter form of noise can be disregarded since it is common to any noisy operation and, as long as \eqref{eq:axiom3} and \eqref{eq:compN} hold, does not give rise to any difficulty.
An implicit assumption here is that the form of the original noise $\cchan N_{\tau,\epsilon}$ to be corrected is preserved under local operations up to a change in the parameters $\tau,\epsilon$ (such that $\epsilon$ remains the same in the large system limit and $\tau$ remains close to zero if it was so). 
This is indeed the case for the classes of channels considered here.

These considerations lead to model noisy recovery as
\begin{equation}\label{eq:noisy_R}
\chan R=\pchan{q_{\chan R}(\omega)}{\corr {\sigma+\omega} P_\sigma}_{\sigma,\omega},
\end{equation}
with $\sigma,\omega$ syndromes: the outcome $\sigma'=\sigma+\omega$ differs at the generators in $\omega$ from the actual syndrome $\sigma$ with probability $q_{\chan R}(\omega)$.
The syndrome $\sigma+\omega$ is the effective error syndrome, and $\omega$ is the effective wrong syndrome.
A class  of noisy recovery channels $\cchan R_{\eta}$ will be composed of channels of the form \eqref{eq:noisy_R} satisfying some condition with parameter $\eta$.

Eventually the goal is to understand the net effect of errors and noisy recovery, as in \eqref{eq:goal}.
The following result will be central to this endeavor.
\begin{lem}\label{lem:central}
Suppose that for some $\eta$ and $\tau'$
\begin{equation}\label{eq:goalR}
\chan R\in \cchan R_{\eta}\quad\Longrightarrow\quad\echan R\in \cchan N_{\tau',0},
\end{equation}
where $\ceff {\chan R}$ is the `effective' channel
\begin{equation}\label{eq:effective}
\echan R:=\pchan {q_{\chan R}(\omega)}{\corr \omega}_\omega.
\end{equation}
Then \eqref{eq:goal} holds for any $\epsilon$ and $\tau$ with $\delta$ as in \eqref{eq:axiom2}.
\end{lem}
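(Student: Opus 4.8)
The plan is to produce, for each $\chan N\in\cchan N_{\tau,\epsilon}$ and each $\chan R\in\cchan R_\eta$, a single Pauli channel $\chan M$ such that $\chan R\circ\chan N\circ\chan C=\chan M\circ\chan C$ and $\chan M\in\cchan N_{\tau',\epsilon+\delta}$; by property~1 the latter amounts to $\rchan M\in\cchan N_{\tau',0}$ together with $\fail{\chan M}\le\epsilon+\delta$, and producing such $\chan M$ for arbitrary $\chan N,\chan R$ is precisely the inclusion \eqref{eq:goal}.

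First I would compute $\chan R\circ\chan N\circ\chan C$ from the explicit forms \eqref{eq:C}, \eqref{eq:noise} and \eqref{eq:noisy_R}. Composing Kraus operators and using that $P_\sigma EP$ equals $EP$ when $\sigma=\synd E$ and vanishes otherwise, the double sum over syndromes in $\chan R$ collapses, so $\chan R\circ\chan N\circ\chan C$ applies, with weight $q_{\chan R}(\omega)\,p_{\chan N}(E)$, the operator $\corr{\synd E+\omega}\,E\,P$. I would then insert the unique decomposition \eqref{eq:decompose} $E=\corr{\synd E}\,G\,L$ and introduce the mismatch operator $M_{\omega,\synd E}:=\corr{\synd E+\omega}\,\corr{\synd E}\,\corr\omega$: by \eqref{eq:synd_prop} it has trivial syndrome, hence lies in $\cent\stab$ and depends only on $\omega$ and $\synd E$, so it splits as a gauge operator times a logical representative $L_M(\omega,\synd E)\in\logical$. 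Rearranging Pauli factors and discarding gauge operators---legitimate in the operation formalism \eqref{eq:channels}, where the action on gauge qubits is carried by the maps $\chan Q_i$---one finds that $\corr{\synd E+\omega}\,E\,P$ equals, up to gauge and phase, $\corr\omega\,L'(\omega,E)\,P$ with $L'(\omega,E)\in\logical$ the stabilizer reduction of $L_M(\omega,\synd E)\,L$. Hence $\chan R\circ\chan N\circ\chan C=\chan M\circ\chan C$, where $\chan M$ is the Pauli channel applying $\corr\omega\,L'(\omega,E)$ with weight $q_{\chan R}(\omega)\,p_{\chan N}(E)$.

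Next I would check the two requirements on $\chan M$. Since $\corr\omega\,L'(\omega,E)$ has syndrome $\omega$ and the total weight attached to a fixed $\omega$ is $q_{\chan R}(\omega)$, the syndrome distribution of $\chan M$ is $q_{\chan R}$, so $\rchan M=\echan R$, which lies in $\cchan N_{\tau',0}$ by the hypothesis \eqref{eq:goalR}. For the logical noise, observe that $\chan M$ and $\echan R\circ\chan N$ apply, from the same weight $q_{\chan R}(\omega)\,p_{\chan N}(E)$, Pauli operators with the same logical part $L'(\omega,E)$---they differ only in their syndrome label, $\corr\omega$ versus $\corr{\synd E+\omega}$---so $\fail{\chan M}=\fail{\echan R\circ\chan N}$. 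Applying \eqref{eq:comp2} of Lemma~\ref{lem:comp} with $\chan E=\echan R$ and $\chan D=\chan N$ gives $\fail{\echan R\circ\chan N}\le\fail{\echan R}+\fail{\chan N}+\fail{\reduce{\echan R}\circ\rchan N}$; because $\echan R$ applies only correction operators, $\fail{\echan R}=0$ and $\reduce{\echan R}=\echan R$, while $\fail{\chan N}\le\epsilon$ and $\rchan N\in\cchan N_{\tau,0}$ by property~1. Since $\echan R\in\cchan N_{\tau',0}$, property~2 gives $\echan R\circ\rchan N\in\cchan N_{\tau'+\tau,\delta}$ with $\delta$ the bound of \eqref{eq:axiom2}, so $\fail{\echan R\circ\rchan N}\le\delta$ and $\fail{\chan M}\le\epsilon+\delta$. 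By property~1, $\chan M\in\cchan N_{\tau',\epsilon+\delta}$.

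The step I expect to be the main obstacle is the bookkeeping in the second paragraph: one must verify that, once the syndrome sum has collapsed, the residual operator really factors as a syndrome part $\corr\omega$ (carrying the benign distribution $q_{\chan R}$, so that $\rchan M=\echan R$) times a logical part that coincides with the logical part produced by $\echan R\circ\chan N$, and that every stray gauge factor is immaterial in the formalism \eqref{eq:channels}. Equivalently one may bound $\fail{\chan M}$ by a union bound into $\fail{\chan N}$ and the probability that $\corr{\synd E+\omega}\,\corr{\synd E}\,\corr\omega\notin\gauge$, the latter being $\fail{\echan R\circ\rchan N}$ by \eqref{eq:decompose}; either way, once the factorization is in hand the remainder is just the identity $\fail{\chan M}=\fail{\echan R\circ\chan N}$ together with Lemma~\ref{lem:comp} and properties~1--2.
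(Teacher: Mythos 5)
Your proposal is correct and follows essentially the same route as the paper: you construct the same intermediate Pauli channel (the paper's $\chan F$, your $\chan M$, which are equal up to gauge factors and phases), establish the same two key identities $\rchan M=\echan R$ and $\fail{\chan M}=\fail{\echan R\circ\chan N}$, and then conclude via the composition bound --- you merely unpack the corollary \eqref{eq:compN} into Lemma~\ref{lem:comp} plus property~2, which the paper invokes in one step. The extra bookkeeping with the mismatch operator $\corr{\synd E+\omega}\corr{\synd E}\corr\omega$ is exactly the content of the paper's rewriting \eqref{eq:F2}--\eqref{eq:F3}, so there is nothing to add.
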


\begin{proof}
For any Pauli channel $\chan E$ there exists a Pauli channel $\chan F$ such that
\begin{equation}
\chan R\circ\chan E\circ\chan C=\chan F\circ \chan C
\end{equation}
and with the following properties:
\begin{equation}\label{eq:eff}
\fail {\chan F}=\fail {\echan R\circ \chan E},\qquad \rchan F=\ceff{\chan R}.
\end{equation}
Indeed, one can take 
\begin{equation}\label{eq:F1}
\chan F= \pchan{p(\omega,E)}{\corr {\omega+\synd E}E}_{\omega,E},
\end{equation}
where $p(\omega,E):=q_{\chan R}(\omega)p_{\chan E}(E)$.
To recover the second equality in \eqref{eq:eff} rewrite this as
\begin{equation}\label{eq:F2}
\chan F= \pchan{p(\omega,E)}{\corr \omega(\corr \omega\corr {\omega+\synd E}E)}_{\omega,E}
\end{equation}
where the operator in parenthesis is logical.
As for the first equality, just compare \eqref{eq:F2} with
\begin{equation}\label{eq:F3}
\echan R\circ\chan E =\pchan {p(\omega,E)}{\corr {\omega+\synd E} ( \corr {\omega+\synd E}\corr \omega E)}_{\omega,E}.
\end{equation}
Putting together \eqref{eq:goalR}, \eqref{eq:eff}, and \eqref{eq:compN} yields as desired:
\begin{equation}
\chan R\in \cchan R_{\eta}, \chan E\in \cchan N_{\tau,\epsilon}\quad\Longrightarrow \quad\chan F\in \cchan N_{\tau',\epsilon+\delta}.
\end{equation}
\end{proof}

Notice that $\echan R$ is characterized by a syndrome distribution, which in turn can be assigned a temperature $\tau'$ that describes the confinement of the syndromes.
By comparing \eqref{eq:effective} with \eqref{eq:noisy_R} it is apparent that $\tau'$ is the temperature of the wrong syndrome outcomes.
Thus, as stated before, the parameter $\eta$ controls the level of confinement of wrong syndrome outcomes.
Remarkably, the success of single-shot error correction boils down to the existence of such a confinement.
As expected from the examples in section~\ref{sec:noisy}, the residual temperature $\tau'$ coincides with the temperature describing the wrong syndrome outcomes.
In the ideal scenario, for every $\tau'>0$ there will be some $\eta>0$ such that \eqref{eq:goalR} holds.
In that case the residual temperature $\tau'$ can be made arbitrarily small by improving the precision of the syndrome extraction operation.
This will be the case in all the examples examined here.

\subsection{Connectivity combinatorics}

This section provides some notation and simple technical lemmas that will be used repeatedly, but only in proofs.
The main tools will be graphs of bounded maximum degree and connected clusters (sets of nodes).

\begin{defn}
Consider a graph with node set $\Gamma$, a cluster $V\subseteq \Gamma$ and some $k\geq 0$. $C_{s,k}(V)$ is a subset of the powerset of $\Gamma$ with the following property: a cluster $W$  is an element of $C_{s,k}(V)$ if there exists a cluster $W_0$ 
with
\begin{equation}
|W_0|=s, \qquad V\cup W\subseteq W_0,
\end{equation}
and such that each connected component $W_\mathrm c$ of $W_0$ satisfies
\begin{equation}
V\cap W_\mathrm c\neq \emptyset, \qquad |W\cap W_\mathrm c|\geq k|W_\mathrm c|.
\end{equation}
\end{defn}
 
It is convenient to define also
\begin{equation}
C_s(V):=C_{s,1}(V), \qquad C_s:=\bigcup_{v\in \Gamma} C_s(\sset v).
\end{equation}
\emph{I.e.} $C_s(V)$ is the collection of clusters $V_0\supseteq V$ of size $s$ such that each of their connected components contains an element of $V$, and $C_s$ is the collection of connected clusters of size $s$.
For every $W\in C_{s,k}(V)$ there exists some $W_0\in C_{s}(V)$ with $W\subseteq W_0$. 
Thus $|C_{s,k}(V)|\leq 2^s|C_s(V)|$.
Moreover (see e.g. \cite{gottesman:2013:overhead, bombin:2013:self}, in particular lemma 2 in~\cite{gottesman:2013:overhead}), for graphs of bounded maximum degree there exists some constant $\gamma$ (depending only on the bound) such that $|C_s(V)|\leq \gamma^s$. 
These observations provide the following lemma.

\begin{lem}\label{lem:C}
Consider a family of graphs with bounded maximum degree and some $k\geq 0$.
There exist some $\gamma> 0$ such that for every integer $s$, every graph in the family and every cluster $V$ in the graph
\begin{equation}\label{eq:bound_C}
C_{s,k}(V)\leq \gamma^{s}.
\end{equation}
\end{lem}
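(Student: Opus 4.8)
The plan is to obtain the bound by assembling the two elementary facts already recorded in the paragraph preceding the statement, so that the only substantive ingredient is the standard cluster count for graphs of bounded degree. Everything else is bookkeeping, and the job of the lemma is really to package that count in a form uniform over the graph, over the cluster $V$, and over the parameter $k$.

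First I would recall why every element of $C_{s,k}(V)$ sits inside an element of $C_s(V)$. Given $W\in C_{s,k}(V)$, pick a witness $W_0$ as in \eqref{eq:p2_Cs2}, so that $|W_0|=s$ and $V\cup W\subseteq W_0$; in particular $W\subseteq W_0$ and $V\subseteq W_0$. Using $W_0$ itself as its own witness (i.e. with $W_0$ playing the role of $W$) shows that $W_0\in C_{s,1}(V)=C_s(V)$, since trivially $|W_0\cap W_0^{(i)}|=|W_0^{(i)}|\geq|W_0^{(i)}|$ on every connected component $W_0^{(i)}$ of $W_0$. Hence $C_{s,k}(V)$ is contained in the family of subsets of members of $C_s(V)$, and because a set of size $s$ has $2^{s}$ subsets,
\[
|C_{s,k}(V)|\ \leq\ \sum_{W_0\in C_s(V)}2^{|W_0|}\ =\ 2^{s}\,|C_s(V)|.
\]

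It then remains to bound $|C_s(V)|$ by $\gamma_0^{\,s}$ for some $\gamma_0$ depending only on the degree bound $\Delta$ of the family, uniformly over the graph and over $V$. This is the familiar Peierls / ``lattice animal'' estimate: a connected node set of size $m$ containing a prescribed node of a graph of maximum degree $\Delta$ can be encoded by a depth-first walk on a spanning tree, so there are at most $(e\Delta)^{m}$ of them; summing over the ways of splitting the budget $s$ among the connected components of $W_0$ and over the admissible roots (elements of $V$) still leaves a bound of the form $\gamma_0^{\,s}$. I would not reprove this — it is exactly the statement imported from \cite{gottesman:2013:overhead,bombin:2013:self} (lemma 2 of \cite{gottesman:2013:overhead}) that is quoted just before the lemma.

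Combining the two steps and setting $\gamma:=2\gamma_0$ gives $|C_{s,k}(V)|\leq 2^{s}\gamma_0^{\,s}=\gamma^{s}$ for every integer $s$, every graph in the family, and every cluster $V$, with $\gamma$ independent of $s$, $V$, $k$, and the particular graph — which is the assertion. The ``hard part'', such as it is, is the bounded-degree cluster count invoked in the second step; but since that is precisely what is being cited, there is nothing further to do, and the only point needing a little care is keeping the estimate uniform in $V$ and in the graph, which the argument above does.
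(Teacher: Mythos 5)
Your proof is correct and follows essentially the same route as the paper: it observes that every $W\in C_{s,k}(V)$ is contained in some $W_0\in C_s(V)$ (the witness itself), giving the factor $2^s$, and then invokes the cited bounded-degree cluster count $|C_s(V)|\leq\gamma_0^s$ to conclude with $\gamma=2\gamma_0$. This is exactly the two-step argument the paper gives in the paragraph preceding the lemma.
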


Notice that $\gamma$ depends on $k$.
For $k=1$,
\begin{equation}\label{eq:bound_Cs}
|C_s|\leq \sum_{v\in\Gamma}|C_s(\sset v)| \leq |\Gamma|\gamma^{s}. 
\end{equation}

\begin{lem}\label{lem:bounded}
Consider a family of graphs with bounded maximum degree and some $k>0$.
Each graph has a set of nodes $\Gamma$ and comes equipped with (i) two subsets of nodes $\Gamma_i\subseteq \Gamma$, $i=1,2$, and (ii) a function $f:\Gamma_1\longrightarrow\Gamma$ such that for every cluster $V\subseteq \Gamma_1$ 
and for every connected component $V_{\text c}$ of the cluster $f(V)$
\begin{equation}\label{eq:connectivity}
|V\cap V_{\text c}|\geq k |V_{\text c}|.
\end{equation}

There exists some $\alpha_0>0$ such that for every $\alpha$ with $0<\alpha<\alpha_0$ the following holds.
If a probability distribution $p_1(V)$ over clusters $V\subseteq\Gamma_1$ is $\alpha$-bounded, then the probability distribution $p_2(V')$ over clusters $V'\subseteq\Gamma_2$ defined by
\begin{equation}\label{eq:p2}
p_2(V'):=\sum_{V\suchthat f(V)\cap\Gamma_2= V'} p_1(V).
\end{equation}
is $\beta$-bounded with
\begin{equation}
\beta :=  \frac {(\alpha/\alpha_0)^{k}}{1-(\alpha/\alpha_0)^{k}},
\end{equation}
\end{lem}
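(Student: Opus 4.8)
The plan is to bound the tail $\tilde p_2(V')=\sum_{V''\supseteq V'}p_2(V'')$ for an arbitrary cluster $V'\subseteq\Gamma_2$ by expanding the definition \eqref{eq:p2} and then organizing the sum over preimage clusters $V\subseteq\Gamma_1$ according to the connected-cluster combinatorics of Lemma~\ref{lem:C}. First I would observe that $V'\subseteq f(V)$ forces, via the hypothesis \eqref{eq:connectivity} applied componentwise to $f(V)$, that $V$ must occupy at least a fraction $k$ of every connected component of $f(V)$ that meets $V'$; in particular the union of those components is a connected-ish superset $W_0$ of $V\cup V'$ (strictly, a disjoint union of connected components each touching $V'$), so that $V$ lies in $C_{s,k}(V')$ for $s=|W_0|$ in the notation of the preceding definition. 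This is exactly the structure Lemma~\ref{lem:C} is designed to count: the number of such $V$ of a given ``ambient size'' $s$ is at most $\gamma^s$ for a constant $\gamma$ depending only on the degree bound and on $k$.

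The key steps, in order, are: (i) write $\tilde p_2(V')\le\sum_{V\,:\,f(V)\supseteq V'}p_1(V)$, since $f(V)\cap\Gamma_2\supseteq V'$ is implied by $f(V)\supseteq V'$ and is in turn necessary for $V$ to contribute to some $V''\supseteq V'$; (ii) stratify the sum over $V$ by the size $s$ of the relevant union-of-components $W_0\supseteq V\cup V'$, noting $|V|\ge k s$ by \eqref{eq:connectivity}; (iii) for each fixed $s$, use $\alpha$-boundedness of $p_1$ to bound $\sum_V p_1(V)\le\sum_V\tilde p_1(V)\le |C_{s,k}(V')|\,\alpha^{ks}$, crucially because each such $V$ has $|V|\ge ks$ so $\tilde p_1(V)\le\alpha^{|V|}\le\alpha^{ks}$; (iv) apply Lemma~\ref{lem:C} to get $|C_{s,k}(V')|\le\gamma^s$, identifying $\alpha_0:=\gamma^{-1/k}$ so that $\gamma^s\alpha^{ks}=(\alpha/\alpha_0)^{ks}$; (v) sum the geometric series over $s\ge |V'|/k$ — actually over $s\ge |V'|$ since $W_0\supseteq V'$ forces $s\ge|V'|$, and one wants the cleanest exponent — obtaining $\sum_{s\ge|V'|}(\alpha/\alpha_0)^{ks}=(\alpha/\alpha_0)^{k|V'|}/(1-(\alpha/\alpha_0)^k)=\beta^{|V'|}\cdot(\text{something})$, and rearrange to land exactly on $\tilde p_2(V')\le\beta^{|V'|}$ with the stated $\beta$. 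The condition $\alpha<\alpha_0$ is precisely what makes $(\alpha/\alpha_0)^k<1$ so the series converges and $\beta$ is finite.

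The main obstacle — and the step deserving the most care — is step (ii)–(iii): pinning down exactly which combinatorial object $V$ belongs to and with which size parameter, so that the exponent bookkeeping produces $\beta^{|V'|}$ rather than a weaker bound. One has to be careful that $W_0$ is the union of only those components of $f(V)$ that intersect $V'$ (not all of $f(V)$), that each such component individually satisfies the fraction-$k$ property, and that summing the per-component lower bounds $|V\cap V_{\text c}^{(i)}|\ge k|V_{\text c}^{(i)}|$ gives $|V\cap W_0|\ge k|W_0|=ks$, hence $\tilde p_1(V)\le\alpha^{|V\cap W_0|}$ is not quite what $\alpha$-boundedness gives — rather $\tilde p_1(V)\le\alpha^{|V|}$ and $|V|\ge|V\cap W_0|\ge ks$. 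A secondary subtlety is double-counting: the same $V$ may arise from several ``ambient'' clusters $W_0$, but since we only ever use the existence of one valid $W_0$ of size $s=|W_0|$ and sum $|C_{s,k}(V')|$ over $s$, we are overcounting in a harmless (upper-bound) direction. Finally one should check the geometric sum starts at the right index: $W_0\supseteq V'$ gives $s\ge|V'|$, and one may need $s\ge|V'|+1$ or an extra factor absorbed, which is exactly where the $1/(1-(\alpha/\alpha_0)^k)$ denominator and the clean form of $\beta$ come from.
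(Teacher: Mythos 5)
Your skeleton is the paper's: reduce $\tilde p_2(V')$ to a sum over $\{V : V'\subseteq f(V)\}$, stratify by the size $s$ of the union $W_0$ of the connected components of $f(V)$ that meet $V'$, count with Lemma~\ref{lem:C}, set $\alpha_0=\gamma^{-1/k}$, and sum a geometric series from $s=|V'|$. But there is a genuine gap at the central counting step (iii), and it sits exactly where you place $V$ itself inside $C_{s,k}(V')$. Membership in $C_{s,k}(V')$ requires $V'\cup W\subseteq W_0$, and your $V$ does not satisfy $V\subseteq W_0$: nothing in the hypotheses makes $V$ a subset of $f(V)$ (in the applications $f(\omega)=\omega+\omega_0$ is a symmetric difference, which discards part of $\omega$), and even the part of $V$ landing in $f(V)$ may lie in components that do not meet $V'$. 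Consequently the clusters $V$ in a given stratum are not elements of $C_{s,k}(V')$, and -- more damagingly -- their number is not bounded by $|C_{s,k}(V')|\leq\gamma^{s}$: $|V|$ is unbounded above, since $V$ may contain arbitrarily many nodes irrelevant to the components of $f(V)$ meeting $V'$. So the inequality $\sum_V \tilde p_1(V)\leq |C_{s,k}(V')|\,\alpha^{ks}$ does not follow, and the enumeration-by-$V$ version of the argument cannot be closed with Lemma~\ref{lem:C}.

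The missing idea is a change of variables from $V$ to $W:=V\cap W_0$. Each connected component $W_{\mathrm c}$ of $W_0$ is a component of $f(V)$, so \eqref{eq:connectivity} gives $|W\cap W_{\mathrm c}|=|V\cap W_{\mathrm c}|\geq k|W_{\mathrm c}|$, and now $V'\cup W\subseteq W_0$ does hold; hence $W$ belongs to $C_{s,k}(V')$ (intersected with the powerset of $\Gamma_1$) and $|W|\geq k|W_0|=ks$. The sum of $p_1$ over all the $V$ sharing a given $W$ is then absorbed into a single tail, $\sum_{V\supseteq W}p_1(V)=\tilde p_1(W)\leq\alpha^{|W|}\leq\alpha^{ks}$, so that
\begin{equation*}
\tilde p_2(V')\leq \sum_{s\geq |V'|}\;\sum_{W\in C_{s,k}(V')\cap 2^{\Gamma_1}}\tilde p_1(W)\leq \sum_{s\geq |V'|}\left(\frac{\alpha}{\alpha_0}\right)^{ks},
\end{equation*}
with all overcounting going in the harmless direction. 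Note also that your step (v) needs one line you elide: the geometric sum equals $(\alpha/\alpha_0)^{k|V'|}/\bigl(1-(\alpha/\alpha_0)^{k}\bigr)$, which is $\leq\beta^{|V'|}$ only after bounding the single denominator below by $\bigl(1-(\alpha/\alpha_0)^{k}\bigr)^{|V'|}$ for $|V'|\geq 1$, with the case $V'=\emptyset$ handled by $\tilde p_2(\emptyset)\leq 1$.
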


\begin{proof}
The goal is to show that if $p_1$ is $\alpha$-bounded then
\begin{equation}\label{eq:goal_p2}
\tilde p_2(V') \leq \beta^{|V'|},
\end{equation}
where $V'\subseteq \Gamma_2$ and, according to definitions~\eqref{eq:tilde} and~\eqref{eq:p2}, 
\begin{multline}\label{eq:tildep2}
\tilde p_2(V') 
 =\sum_{V''\suchthat V'\subseteq V''\subseteq \Gamma_2}\,\,\,  \sum_{V\suchthat V''= f(V)\cap \Gamma_2} p_1(V) =\\
 = \sum_{V\suchthat V'\subseteq f(V)\cap \Gamma_2\subseteq\Gamma_2} p_1(V) =
 \sum_{V\suchthat V'\subseteq f(V)} p_1(V)
\end{multline}

Take $\alpha_0=\gamma^{-1/k}$ with $\gamma$ as in lemma \ref{lem:C}, applied to the family of graphs $\Gamma$ and the given value of $k$.
Denote by $C_{s,k}^1(V')$ the intersection of $C_{s,k}(V')$ and the powerset of $\Gamma_1$.
It satisfies
\begin{equation}
|C_{s,k}^1(V')|\leq \alpha_0^{-ks}.
\end{equation}

Consider $V\subseteq\Gamma_1$ and $V'\subseteq\Gamma_2$ such that $V'\subseteq f(V)$, as in the terms of~\eqref{eq:tildep2}.
Let $W_0$ be the union of the connected components of $f(V)$ that contain some element of $V'$, and define
\begin{equation}
W:=W_0\cap V , \qquad s:=|W_0|.
\end{equation}
A connected component $W_\mathrm c$ of $W_0$ is also a connected component of $f(V)$ and thus
\begin{equation}
|W\cap W_\mathrm c|=|W_0\cap V \cap W_\mathrm c|=|V \cap W_\mathrm c|\geq k|W_\mathrm c|,
\end{equation}
where the inequality is the property~\eqref{eq:connectivity}.
Moreover, $V'\cup W \subseteq W_0$ and thus $W\in C_{s,k}^1(V')$.
Noting that $|W|\geq k|W_0|=ks$, the probability \eqref{eq:tildep2} can be bounded as follows for $|V'|>0$:
\begin{multline}
\tilde p_2(V') \leq \sum_{s\geq |V'|}\sum_{W \in C_{s,k}^1(V')} \tilde p_1(W)\leq 
\sum_{s\geq |V' |}\left(\frac \alpha{\alpha_0}\right)^{ks} =
\\ =
\frac {(\alpha/\alpha_0)^{k|V'|}}{1-(\alpha/\alpha_0)^{k}} \leq 
\frac {(\alpha/\alpha_0)^{k|V'|}}{(1-(\alpha/\alpha_0)^{k})^{|V'|}}.
\end{multline}
This gives \eqref{eq:goal_p2} because $\tilde p_2(\emptyset)\leq 1$.
\end{proof}

\section{Self-correction}\label{sec:self-correction}

This section discusses the connection between self-correction and single-shot error correction in conventional topological stabilizer codes.

\subsection{Characterization}\label{sec:characterization}

For a given topological stabilizer code with stabilizer generators $\stab_0$, there is a local quantum Hamiltonian model 
\begin{equation}\label{eq:Hamiltonian}
H=-J\sum_{s\in \stab_0} s,\qquad J>0.
\end{equation}
The ground state is the code subspace, and excitation configurations can be identified with error syndromes, \emph{i.e.} valid subsets $\sigma\sub \stab_0$.

For some code families the system \eqref{eq:Hamiltonian} is self-correcting: 
in the thermodynamic limit the memory remains indefinitely protected at finite temperatures, without any active error correction.
Known examples~\cite{bombin:2013:self} rely on the connectivity of excitations and satisfy several properties that will allow single-shot error correction.
The purpose of this section is to enumerate these properties, which are not true for general topological stabilizer codes but hold for known self-correcting codes.
The following sections will show how single-shot error correction follows by assuming these properties.

Individual excitations (elements of $\stab_0$) can be regarded as the nodes of some given graph that is consistent with the geometry of the lattice, \emph{i.e.} linked nodes have a bounded distance.
This guarantees that the graph has bounded maximum degree, \emph{i.e.} each excitation is connected only to a bounded number of other excitations.
A first crucial property is that the graph can be defined~\cite{bombin:2013:self} so as to satisfy the following:
\begin{description}[style=sameline]
\item [I]
Each connected component of a syndrome is itself a syndrome.
\end{description}
This is, in fact, a slight simplification, see section~\ref{sec:global_constraints}.

The graph of excitations induces another one for qubits.
Namely, two qubits $q_1$, $q_2$ are linked when there exist two linked excitations with supports $s_1$, $s_2$ such that $q_i\in s_1\cup s_2$.
The resulting graph is also consistent with the geometry of the lattice and in particular has bounded maximum degree.
This connectivity of qubits, which is different from the one defined in section~\ref{sec:local_codes}, is designed to satisfy the following property:
\begin{description}[style=sameline]
\item [II]
Pauli errors with mutually disconnected support have mutually disconnected syndromes.
\end{description}
Two clusters (sets of nodes) $A,B$ are mutually disconnected if no element of $A$ is linked to an element of $B$ or belongs to $B$.

According to property I it is possible to choose the correcting operators such that for any mutually disconnected $\sigma, \sigma'$, 
\begin{equation}\label{eq:separate_corr}
\corr {\sigma+\sigma'} = \corr \sigma \corr \sigma'.
\end{equation}
In other words, each connected component of the syndrome can be corrected separately.
According to property II, if~\eqref{eq:separate_corr} holds then a bad error, \emph{i.e.} one with nontrivial $L$ in \eqref{eq:decompose}, must be such that its restriction to one of its connected components is also bad.
Such a choice of correcting operators is applied in Ref.~\cite{bombin:2013:self} to known self-correcting stabilizer systems, in particular choosing the $\corr \sigma$ as local as possible.
The choice is such that there exist some $b,c, i, j> 0$ satisfying the following~\cite{bombin:2013:self}:
\begin{description}[style=sameline]
\item [III]
A Pauli error $E$ is not bad if each of the connected components of its support has at most $bn^i$ qubits.
\item [IV]
If two syndromes $\sigma, \omega$ are such that their union has connected components with at most $cn^{j}$ elements each, then
\begin{equation}\label{eq:cond2}
\corr \sigma\corr \omega\corr {\sigma+\omega}\in\gauge.
\end{equation}
\end{description}
As usual $n$ is the number of qubits in a code of the family.

\subsection{Excitations and noise}\label{sec:excitations}

A tailored class of noisy channels $\cchan N_{\tau,\epsilon}$ is required here.
Recall definition~\ref{def:bounded}.
The probability distribution~\eqref{eq:synd_distr} can be regarded as a distribution $q_{\chan E}(\xi)$ over arbitrary sets of excitations $\xi\subseteq \stab_0$, with $q_{\chan E}(\xi)=0$ for $\xi$ not a syndrome.
\begin{defn}
A Pauli channel $\chan E$ is an element of $\cchan N_{\tau,\epsilon}^\mathrm{exc}$ if $\fail {\chan E}\leq \epsilon$
and 
$q_{\chan E}$ is $\tau$-bounded.
\end{defn}
\begin{prop}
Consider a family of topological stabilizer codes satisfying the properties of section~\ref{sec:characterization}, with $v_1$ the maximum number of qubits in the support of a stabilizer generator and $v_2$ the maximum number of stabilizer generators with support on a qubit.
There exists some system size independent $\tau_0>0$, $\lambda_0>0$ such that the collection of classes $\cchan N_{\tau,\epsilon}^\mathrm{exc}$ satisfies the properties 1-3 of section~\ref{sec:models} with
\begin{align}
\label{eq:delta_sc}
\delta &= |\stab_0| \left(\frac {\tau+\tau'}{\tau_0}\right)^{cn^{j}},\\
\label{eq:taumu_sc}
\tau &= v_1 \lambda^{{1/v_2}},\qquad 
\epsilon = n \left(\frac {\lambda}{\lambda_0}\right)^{bn^i}.
\end{align}
\end{prop}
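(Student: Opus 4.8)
The plan is to verify the three properties 1--3 for $\cchan N_{\tau,\epsilon}^\mathrm{exc}$ in turn, with the bulk of the work going into property 2.

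\textbf{Property 1.} This is essentially by construction: the class is defined by the two conditions $\fail{\chan E}\le\epsilon$ and $q_{\chan E}$ being $\tau$-bounded. Since $\rchan E$ has the same syndrome distribution as $\chan E$ (so $q_{\rchan E}=q_{\chan E}$) and satisfies $\fail{\rchan E}=0$, the channel $\chan E$ lies in $\cchan N_{\tau,\epsilon}^\mathrm{exc}$ iff $\fail{\chan E}\le\epsilon$ and $\rchan E\in\cchan N_{\tau,0}^\mathrm{exc}$, which is exactly~\eqref{eq:axiom1}.

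\textbf{Property 3.} I would start from a channel $\chan E\in\cchan L_\lambda$, so the distribution $p'_{\chan E}$ over error supports is $\lambda$-bounded, and bound $\fail{\chan E}$ by $f(\lambda)$ using~\eqref{eq:f}; this gives $\epsilon=f(\lambda)=n(\lambda/\lambda_0)^{bn^i}$ via property III together with the local-codes estimate~\eqref{eq:local_codes} (here property III is what pins down the exponent $bn^i$). The remaining point is that $q_{\chan E}$ is $\tau$-bounded with $\tau=v_1\lambda^{1/v_2}$. For this, fix a set of excitations $\xi\subseteq\stab_0$ and note that any error $E$ whose syndrome contains $\xi$ must act nontrivially on the support of each $s\in\xi$; by anticommutation, for each generator in $\xi$ there is at least one qubit of its ($\le v_1$-qubit) support where $E$ acts nontrivially. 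Counting: a set of $|\xi|$ generators forces a set $R$ of at least $|\xi|/v_2$ qubits (each qubit lies in at most $v_2$ generators) to be hit. Summing $\tilde p'_{\chan E}$ over the $\le v_1^{|\xi|}$ choices of one qubit per generator and using $\lambda$-boundedness gives $\tilde q_{\chan E}(\xi)\le v_1^{|\xi|}\lambda^{|\xi|/v_2}=\tau^{|\xi|}$.

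\textbf{Property 2 --- the main obstacle.} This is the hard part, and it is where properties I, II, IV and the combinatorial lemmas~\ref{lem:C}--\ref{lem:bounded} enter. Given $\chan E\in\cchan N_{\tau,0}^\mathrm{exc}$ and $\chan E'\in\cchan N_{\tau',0}^\mathrm{exc}$, I must exhibit $\delta$ as in~\eqref{eq:delta_sc} with $\chan E\circ\chan E'\in\cchan N_{\tau+\tau',\delta}^\mathrm{exc}$. First the easy half: the syndrome distribution of $\chan E\circ\chan E'$ is the convolution of those of $\chan E$ and $\chan E'$ (syndromes add, by~\eqref{eq:synd_prop}), so exactly as in~\eqref{eq:sum_cases} the tilde-distribution over excitation sets obeys $\tilde q_{\chan E\circ\chan E'}(\xi)\le\sum_{n'=0}^{|\xi|}\binom{|\xi|}{n'}\tau^{n'}(\tau')^{|\xi|-n'}=(\tau+\tau')^{|\xi|}$, so $q_{\chan E\circ\chan E'}$ is $(\tau+\tau')$-bounded. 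The real content is bounding $\fail{\chan E\circ\chan E'}$. By Lemma~\ref{lem:comp}, $\fail{\chan E\circ\chan E'}\le\fail{\chan E}+\fail{\chan E'}+\fail{\rchan E\circ\rchan E'}=\fail{\rchan E\circ\rchan E'}$ since the first two vanish, so it suffices to bound the probability that composing two correctable errors $\corr\sigma$, $\corr\omega$ (drawn from the $\tau$- and $\tau'$-bounded syndrome distributions) produces a logical error, i.e. that $\corr\sigma\corr\omega\corr{\sigma+\omega}\notin\gauge$. By property IV this can only happen if the union $\sigma\cup\omega$ has a connected component with more than $cn^j$ elements. I would then union-bound over such large components: for each candidate connected cluster $W\in C_s$ with $s>cn^j$, the probability that $W\subseteq\sigma\cup\omega$ is at most $\tilde q_{\chan E\circ\chan E'}(W)\le(\tau+\tau')^{s}$ (using that $W\subseteq\sigma\cup\omega$ forces $W$ into the combined syndrome after symmetric differences are accounted for — more carefully, one passes to $\sigma+\omega$ and uses that each component of $\sigma\cup\omega$ meets $\sigma+\omega$ in at least half its nodes, invoking Lemma~\ref{lem:bounded} with $k=1/2$ to transfer $(\tau+\tau')$-boundedness). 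Using $|C_s|\le|\Gamma|\gamma^s$ from~\eqref{eq:bound_Cs} with $|\Gamma|=|\stab_0|$, the total is bounded by $|\stab_0|\sum_{s>cn^j}(\gamma(\tau+\tau'))^{s}$, which for $\tau+\tau'$ below $\tau_0\sim\gamma^{-1}$ (after absorbing the geometric-series constant) is at most $|\stab_0|\,((\tau+\tau')/\tau_0)^{cn^j}$, giving~\eqref{eq:delta_sc}. The delicate bookkeeping is the interplay between the symmetric-difference syndrome $\sigma+\omega$ that actually appears in property IV's hypothesis via~\eqref{eq:cond2} and the union $\sigma\cup\omega$ whose connectivity controls the clusters; Lemma~\ref{lem:bounded} is precisely the tool that lets one pay only a constant-factor price ($\alpha_0\to\gamma^{-2}$ for $k=1/2$) for this passage, and getting that constant into $\tau_0$ cleanly is the step I expect to require the most care.
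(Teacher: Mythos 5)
Your overall route is the paper's: property 1 by construction, property 3 via the one-qubit-per-generator union bound for $\tau$-boundedness plus property III for the failure probability, and property 2 via the convolution bound for the syndrome distribution plus property IV and a union bound over connected excitation clusters of size $cn^j$. Two steps need fixing, one of them genuinely.

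The genuine problem is in property 2, where you bound the probability that a connected cluster $W$ satisfies $W\subseteq\sigma\cup\omega$. Writing this as $\tilde q_{\chan E\circ\chan E'}(W)$ is incorrect, since the syndrome of the composed channel is $\sigma+\omega$, not $\sigma\cup\omega$; and your proposed repair --- that each connected component of $\sigma\cup\omega$ meets $\sigma+\omega$ in at least half its nodes, so that Lemma~\ref{lem:bounded} with $k=1/2$ applies --- is false: take $\sigma=\omega$, so that $\sigma+\omega=\emptyset$ while $\sigma\cup\omega=\sigma$. No such transfer is needed and Lemma~\ref{lem:bounded} plays no role in this proposition (it is used only in theorem~\ref{thm:self}). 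The event $W\subseteq\sigma\cup\omega$ is contained in the union over $\xi'\subseteq W$ of the events $\{\xi'\subseteq\sigma\}\cap\{W\setminus\xi'\subseteq\omega\}$, so
\begin{equation*}
\Pr[W\subseteq\sigma\cup\omega]\leq\sum_{\xi'\subseteq W}\tilde q_{\chan E}(\xi')\,\tilde q_{\chan E'}(W\setminus\xi')\leq(\tau+\tau')^{|W|},
\end{equation*}
which is exactly the computation you already performed for the convolution bound. Summing this over the at most $|\stab_0|\gamma^{cn^j}$ connected clusters of size $cn^j$ (Lemma~\ref{lem:C} with $k=1$, setting $\tau_0=\gamma^{-1}$ for the excitation graph) then gives \eqref{eq:delta_sc} as you state. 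A second, minor imprecision is in property 3: the value $\epsilon=n(\lambda/\lambda_0)^{bn^i}$ does not come from \eqref{eq:local_codes}, whose exponent is $d/2$. It is proved directly: by property III every bad error has a connected component of support with more than $bn^i$ qubits, hence its support contains one of the at most $n\gamma^{bn^i}$ connected qubit clusters of size $bn^i$, and a union bound using the $\lambda$-boundedness of $p'_{\chan E}$ gives $\fail{\chan E}\leq n\gamma^{bn^i}\lambda^{bn^i}$, i.e.\ \eqref{eq:taumu_sc} with $\lambda_0=\gamma^{-1}$ for the qubit graph.
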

Before proving these facts, notice that all the required properties are met, \emph{i.e.} (i) $\delta$ and $\epsilon$ can be made arbitrarily small just by increasing the system size as long as $\tau+\tau'<\tau_0$ and $\lambda<\lambda_0$ (recall \eqref{eq:bound_gen}), and (ii) $\tau$ can be made arbitrarily small just by reducing $\lambda$.

\begin{proof}
The first property is obvious.
The second states that:
\begin{equation}
\chan E\in \cchan N_{\tau,0}^\mathrm{exc}, \chan E'\in \cchan N_{\tau',0}^\mathrm{exc}\quad\Longrightarrow\quad
\chan E\circ\chan E'\in \cchan N_{\tau+\tau',\delta}^\mathrm{exc}
\end{equation}
Due to \eqref{eq:synd_prop} and proceeding as in \eqref{eq:sum_cases}
\begin{equation}
\tilde q_{\chan E\circ \chan E'}(\xi)\leq\sum_{\xi' \subseteq \xi} \tilde q_{\chan E}(\xi')\tilde q_{\chan E'}(\xi-\xi')\leq (\tau+\tau')^{|\xi|},
\end{equation}
and thus $\reduce {\chan E\circ \chan E'}$ satisfies the required constraint.
It suffices to bound the probability
\begin{equation}\label{eq:fail_comp}
\fail {\chan E\circ\chan E'}=\sum_{\text {bad }\sigma,\omega} q_{\chan E}(\sigma)q_{\chan E'}(\omega),
\end{equation}
where $\sigma,\omega$ are bad if \eqref{eq:cond2} does not hold.
Take $\tau_0=\gamma^{-1}$ with $\gamma$ as in lemma \ref{lem:C}, applied to the family of graphs of excitations and for $k=1$.
The set of connected syndrome clusters of size $bn^j$ is $B:=C_{bn^j}$.
It satisfies, according to \eqref{eq:bound_Cs},
\begin{equation}
|B|\leq |\stab_0|\,\tau_0^{-cn^j}.
\end{equation}
For $\sigma,\omega$ bad, $\sigma\cup\omega$ must include an element of $B$.
Thus the probability \eqref{eq:fail_comp} can be bounded in agreement with~\eqref{eq:delta_sc} as follows:
\begin{multline}\label{eq:count_prob}
\fail {\chan E\circ\chan E'}\leq \sum_{\xi \in B}\sum_{\xi'\subseteq \xi} \tilde q_{\chan E}(\xi')\tilde q_{\chan E'}(\xi-\xi')\leq 
\\
\leq |B|\, (\tau+\tau')^{cn^j}\leq  |\stab_0| \left(\frac{\tau+\tau'}{\tau_0}\right)^{cn^j}.
\end{multline}

The third property states that $\cchan L_\lambda\subseteq \cchan N_{\tau,\epsilon}^\mathrm{exc}$, \emph{i.e.} that for any $\chan E\in \cchan L_{\lambda}$ and any set of excitations $\sigma$
\begin{equation}\label{eq:re_L}
\tilde q_{\chan E}(\sigma)\leq \tau^{|\sigma|},\qquad \fail {\chan E}\leq \epsilon,
\end{equation}
with $\tau$ and $\epsilon$ as in~\eqref{eq:taumu_sc}.
Given a set of excitations $\sigma$, let $F$ be the collection of functions $f$ that map elements of $\sigma$ to qubits in such a way that for any generator $s\in\sigma$ the qubit $f(s)$ is in the support of $s$.
Clearly
\begin{equation}\label{eq:Ff}
|F|\leq v_1^{|\sigma|},\qquad |f[\sigma]|\geq |\sigma|/v_2,
\end{equation}
where $f[\sigma]$ denotes the image of some $f\in F$.
The first inequality in \eqref{eq:re_L} is obtained using~\eqref{eq:Ff} in
\begin{equation}
\tilde q_{\chan E}(\sigma)\leq \sum_{f\in F} \tilde p_{\chan E}(f[\sigma])\leq \sum_{f\in F} \lambda^{|f[\sigma]|}\leq |F|\lambda^{\min_\sigma{|f[\sigma]|}}.
\end{equation}
Take $\lambda_0=\gamma^{-1}$ with $\gamma$ as in lemma \ref{lem:C}, applied to the family of graphs of qubits and for $k=1$.
The set of connected qubit clusters of size $bn^i$ is $B:=C_{bn^i}$.
It satisfies, according to \eqref{eq:bound_Cs},
\begin{equation}\label{eq:bound_C2}
|B'|\leq n\,\lambda_0^{-bn^i}.
\end{equation}
The support of every bad error $E$ has an element of $B'$ as a subset, and the second inequality in~\eqref{eq:re_L} is obtained using \eqref{eq:bound_C2} in
\begin{equation}
\fail {\chan E}=\sum_{\text {bad }E} p_{\chan E}(E)\leq \sum_{S \in B'} \tilde p_{\chan E}(S)\leq |B'|\lambda^{bn^i}.
\end{equation}
\end{proof}

\subsection {Single-shot error correction}\label{sec:self-ss}

The quantum-local error correction procedure is the following.
First the generators $\stab_0$ are measured, which yields a set of `excitations'.
To model noise in these measurements, assume that instead of the correct set $\sigma$ the measurement yields a set $\sigma+\omega$ with a probability $r(\omega)$ independent of $\sigma$.
As discussed in section~\ref{sec:stabilizer}, some $\omega_0\subseteq\stab_0$ must be chosen so that \eqref{eq:close_S} is valid, and this can be done in such a way that $\omega_0$ only depends on $\omega$.
Here we assume that $\omega_0$ is chosen to have minimal cardinality. 
This choice is not always optimal, but is convenient for deriving results.
Moreover, minimality can be relaxed without affecting the results below, as discussed in section~\ref{sec:complexity}.
This is relevant from a computational perspective, as finding a minimal $\omega_0$ might be too hard.
The error correction procedure finishes with the application of $\corr {\sigma+\omega+\omega_0}$.

A noisy recovery operation $\chan R$  is defined by a distribution $q_{\chan R}$ of wrong effective syndromes as in~\eqref{eq:noisy_R}.
In the procedure just discussed the effective wrong syndrome is $\omega+\omega_0$.
This leads to the following definition for the class of recovery operations, where the superscript $\mathrm{s}$ indicates that the generators $\stab_0$ are measured directly.
\begin{defn}
The class $\cchan R_\eta^\mathrm{s}$ contains those recovery operations $\chan R$ such that
\begin{equation}
q_{\chan R}(\omega')=\sum_{\omega|\omega+\omega_0=\omega'} r(\omega)
\end{equation}
for some probability distribution $r(\omega)$ that is $\eta$-bounded.
\end{defn}
The following result states that single-shot error correction is possible for topological stabilizer codes exhibiting self-correcting features in the sense of section~\ref{sec:characterization}.
\begin{thm}\label{thm:self}
For any family of topological stabilizer codes satisfying the properties of section~\ref{sec:characterization} there exists some $\eta_0>0$ such that 
\begin{equation}\label{eq:goalexc}
\cchan R_{\eta}^\mathrm{s}\circ \cchan N_{\tau,\epsilon}^\mathrm{exc}\circ \chan C\subseteq \cchan N_{\tau',\epsilon+\delta}^\mathrm{exc}\circ \chan C
\end{equation}
holds for $\eta<\eta_0$ and every $\epsilon, \tau$, with $\delta$ as in~\eqref{eq:delta_sc} and
\begin{equation}\label{eq:taup_sc}
\tau' =  \frac {(\eta/\eta_0)^{1/2}}{1-(\eta/\eta_0)^{1/2}}.
\end{equation}
\end{thm}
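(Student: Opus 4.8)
The plan is to apply Lemma~\ref{lem:central}. That lemma reduces~\eqref{eq:goalexc} to the implication $\chan R\in\cchan R_\eta^\mathrm{s}\Rightarrow\echan R\in\cchan N_{\tau',0}^\mathrm{exc}$ for a suitable threshold $\eta_0$, the value of $\delta$ then being the one that property~2 assigns to the collection $\cchan N_{\tau,\epsilon}^\mathrm{exc}$, that is~\eqref{eq:delta_sc}. Recall from~\eqref{eq:effective} that $\echan R=\pchan{q_{\chan R}(\omega)}{\corr\omega}_\omega$. Since $\synd{\corr\omega}=\omega$, this channel is already of the reduced form, $\reduce{\echan R}=\echan R$, so $\fail{\echan R}=0$ automatically. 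Hence membership of $\echan R$ in $\cchan N_{\tau',0}^\mathrm{exc}$ comes down to a single requirement: the syndrome distribution $q_{\echan R}=q_{\chan R}$ must be $\tau'$-bounded.

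To obtain this I would invoke Lemma~\ref{lem:bounded}. By definition of $\cchan R_\eta^\mathrm{s}$ one has $q_{\chan R}(\omega')=\sum_{\omega\,|\,\omega+\omega_0=\omega'}r(\omega)$ for some $\eta$-bounded distribution $r$ over subsets $\omega\subseteq\stab_0$, where $\omega_0=\omega_0(\omega)$ is a set of minimal cardinality making $\omega+\omega_0$ valid. I would apply Lemma~\ref{lem:bounded} to the graph of excitations of Section~\ref{sec:characterization} (which has bounded maximum degree), with $\Gamma_1=\Gamma_2=\stab_0$, with $p_1=r$, and with the map $\omega\mapsto\omega':=\omega+\omega_0(\omega)$ on clusters playing the role of $f$ (the proof of Lemma~\ref{lem:bounded} uses $f$ only through its action on clusters), taking $k=\tfrac12$. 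Then $p_2=q_{\chan R}$, and with $\eta_0:=\alpha_0$ the threshold supplied by the lemma — system-size independent, since it is built from the constant $\gamma$ of Lemma~\ref{lem:C} for $k=\tfrac12$ — the resulting bound $\beta$ is exactly the $\tau'$ of~\eqref{eq:taup_sc}.

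The one hypothesis of Lemma~\ref{lem:bounded} that needs work, and the step I expect to be the crux, is the connectivity condition~\eqref{eq:connectivity}: for every $\omega\subseteq\stab_0$ and every connected component $V_\mathrm c$ of $\omega'=\omega+\omega_0(\omega)$ one must have $|\omega\cap V_\mathrm c|\geq\tfrac12|V_\mathrm c|$. I would deduce this from the minimality of $\omega_0$ together with property~I. Since $\omega'$ is valid, property~I makes each connected component $V_\mathrm c$ of $\omega'$ a valid syndrome; validity being closed under symmetric difference, $\omega'+V_\mathrm c$ (the union of the remaining components) is valid, hence $\omega+(\omega_0+V_\mathrm c)=\omega'+V_\mathrm c$ is valid and $\omega_0+V_\mathrm c$ is an admissible competitor of $\omega_0$. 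As $V_\mathrm c\subseteq\omega'=\omega+\omega_0$ splits $V_\mathrm c$ into the disjoint pieces $\omega\cap V_\mathrm c$ and $\omega_0\cap V_\mathrm c$, minimality gives $|\omega_0|\leq|\omega_0+V_\mathrm c|=|\omega_0|-|\omega_0\cap V_\mathrm c|+|\omega\cap V_\mathrm c|$, i.e.\ $|\omega_0\cap V_\mathrm c|\leq|\omega\cap V_\mathrm c|$, so that $|\omega\cap V_\mathrm c|\geq\tfrac12|V_\mathrm c|$. With this in place, Lemma~\ref{lem:bounded} closes the argument; everything else is bookkeeping that matches the objects to the framework of Sections~\ref{sec:models}--\ref{sec:self-correction}.
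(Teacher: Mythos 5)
Your proposal is correct and follows essentially the same route as the paper: reduce via Lemma~\ref{lem:central} to showing $q_{\chan R}$ is $\tau'$-bounded, apply Lemma~\ref{lem:bounded} with $k=\tfrac12$, $\Gamma_1=\Gamma_2=\stab_0$ and $f(\omega)=\omega+\omega_0$, and establish the connectivity condition from property~I plus the minimality of $\omega_0$ by comparing $\omega_0$ with the competitor $\omega_0+\omega_\mathrm{c}$. The minimality computation you give is exactly the paper's~\eqref{eq:byminimality}.
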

Notice that the right properties are met, \emph{i.e.} (i) $\tau'$ can be made arbitrarily small just by reducing $\eta$ and  (ii) $\delta$ can be made arbitrarily small just by increasing the system size as long as $\tau+\tau'<\tau_0$.

\begin{proof}
According to lemma~\ref{lem:central} it suffices to show that for every $\chan R\in \cchan R_{\eta}^\mathrm{s}$
\begin{equation}\label{eq:goalRs}
\echan R\in \cchan N_{\tau',0}^\mathrm{exc},
\end{equation}
\emph{i.e.} that the distribution $q_{\chan R}$ is $\tau'$-bounded.
This follows from lemma~\ref{lem:bounded} taking $k=1/2$, $\Gamma=\Gamma_1=\Gamma_2=\stab_0$ and $f(\omega)=\omega+\omega_0$.
In this case \eqref{eq:connectivity} reads
\begin{equation}
2|\omega\cap\omega_\mathrm c|\geq |\omega_\mathrm c|,
\end{equation}
where $\omega_{\text c}$ is any connected component of $\omega+\omega_0$.
This condition is satisfied due to the minimality of $\omega_0$, as follows.
First, notice that
\begin{equation}
|\omega\cap \omega_\mathrm c|=|\omega_\mathrm c|-|\omega_0\cap \omega_\mathrm c|.
\end{equation}
Second, $\omega_{\text c}$ is a syndrome and thus $\omega+\omega_0'$ is a syndrome for $\omega_0':=\omega_0+\omega_{\text c}$, so that by minimality 
\begin{equation}\label{eq:byminimality}
0\leq|\omega_0'|-|\omega_0|=|\omega_\mathrm c|-2|\omega_\mathrm c\cap \omega_0|=2|\omega\cap \omega_\mathrm c|-|\omega_\mathrm c|.
\end{equation}

\end{proof}

\subsection{Global constraints}\label{sec:global_constraints}

In the previous section it has been essential that each connected component of a syndrome is itself a syndrome.
This is, however, not always true.
In the topological stabilizer codes studied in~\cite{bombin:2013:self}, the linear constraints that sets of excitations have to satisfy in order to be syndromes can be arranged in two types: local and global.
Local constraints can be represented via the connectivity of excitations: if a set of excitations satisfies them, so does each connected component of the set.
This is not true for global constraints, which depend on the topology of the system: adding the necessary links in the graph of excitations is incompatible with a bounded maximum degree.
However, connected components that are small compared to $n$ (in the sense of section~\ref{sec:characterization}) will always satisfy the global constraints.

In the approach of the previous section (where all constraints are assumed local), $\omega_0$ is chosen as a minimal set of excitations such that~\eqref{eq:close_S} is a syndrome. 
Suppose that there are global constraints and $\omega_0$ is chosen by taking into account only local ones.
The resulting $\omega'=\omega+\omega_0$ might not be a syndrome, but only with a probability that decreases exponentially in the system size.
Such `non-syndrome' events signal a large measurement error, and the simplest strategy in practice is to discard the computation.

\section{Gauge confinement}\label{sec:gauge_confinement}

This section discusses a mechanism that allows a topological stabilizer subsystem code to exhibit single-shot error correction.
Unlike in the previous section, here the relevant noise model is local in a standard way, as represented by $\cchan N_{\tau,\epsilon}^\mathrm{loc}$.
The formulation will be rather abstract, as a specific example will only be introduced in the next section.

\subsection {Gauge syndrome and errors}\label{sec:gauge_and_errors}

The starting point is some family of stabilizer subsystem codes with \emph{local gauge generators}, \emph{i.e.} every qubit is in the support of a bounded number of generators and every generator has support on a bounded number of qubits.
Unlike in the conventional case of local codes, however, this locality is a priori not enough to guarantee that ideal error syndrome extraction is quantum-local.
Moreover, here the gauge syndrome extraction will be relevant, not just the error syndrome extraction.
Thus, it is necessary to make an additional assumption, namely that there exists a suitable ordering of the local gauge generators $\gauge_0$ such that (i) measuring the elements of $\gauge_0$ in that order provides a gauge syndrome, (ii) the corresponding error syndrome is always correct (for ideal measurements) and (iii) the ordering is such that this gauge syndrome extraction is a local procedure, \emph {i.e.} has finite depth.
These requirements can easily be met for CSS codes with local gauge generators, such as 3D gauge color codes: it suffices to measure first all $X$-type generators and then all $Z$-type generators.
In fact, for CSS codes the correction of $X$ and $Z$ errors can be carried over for simplicity separately, using two separate gauge/error syndromes.
For other interesting codes~\cite{bombin:2010:subsystem} it is not clear if a suitable ordering exists, even though there exist local sequences of measurements of the elements of $\gauge_0$ (with some repeated elements) that do provide a correct \emph{error} syndrome.

As in section~\ref{sec:characterization}, a notion of connectivity for qubits and generators will be crucial.
Here and elsewhere $\sqcup$ denotes the union of disjoint sets.
Recall the notation $\err$ introduced in~\eqref{eq:err}.
\begin{defn}\label{defn:locally_connected}
A family of stabilizer subsystem codes is {\bf \emph{locally connected}} if it has local gauge generators and there exists a family of graphs of bounded maximum degree, one per code, satisfying the following. For each graph (i) the node set is $\gauge_0\sqcup \mathcal Q$, where $\mathcal Q$ is the set of qubits, and (ii) for any two mutually disconnected sets $\gamma_1\sqcup Q_1$ and $\gamma_2\sqcup Q_2$, $\gamma_i\subseteq \gauge_0$, $Q_i\subseteq \mathcal Q$, and for any two Pauli operators $E_1,E_2$ with $\supp E_i=Q_i$, $\gamma_1$ is a gauge syndrome if $\gamma_1\sqcup\gamma_2$ is a gauge syndrome and
\begin{equation}\label{eq:Egamma}
\synd (E_1E_2) = \err (\gamma_1\sqcup\gamma_2)\quad\Longrightarrow\quad \synd E_1 = \err\gamma_1.
\end{equation}
\end{defn}
The next section shows that the following property is enough to achieve single-shot error correction, given that the code family has an error correction threshold and admits a local gauge syndrome extraction procedure.
\begin{defn}
A family of stabilizer subsystem codes is {\bf\emph{$K$-confining}} for some $K>0$ if it is locally connected and for every gauge syndrome $\gamma$ there exists some Pauli operator $E$ with
\begin{equation}\label{eq:Qgamma}
\synd E=\err \gamma,\qquad |\supp E|\leq K|\gamma|.
\end{equation}
\end{defn}

\subsection {Single-shot error correction}\label{sec:gauge-ss}

Given a family of stabilizer subsystem codes as described in the previous section, quantum-local error correction proceeds as follows.
First the gauge generators $\gauge_0$ are measured, in the order prescribed, to recover a gauge syndrome.
To model noise in these measurements, assume that instead of the correct set $\gamma\subseteq\gauge_0$, defined as in section~\ref{sec:stabilizer}, the measurement yields a set $\gamma+\delta$ with a probability $r(\delta)$ independent of $\gamma$.
As discussed in section~\ref{sec:stabilizer}, some $\delta_0\subseteq\gauge_0$ must be chosen so that \eqref{eq:close_G} is valid, and this can be done in such a way that $\delta_0$ only depends on $\delta$.
Here we assume that $\delta_0$ is chosen to have minimal cardinality.
Regarding this choice, the comments of section~\ref{sec:self-ss} apply equally here.
Finally, the correction $\corr {\err (\gamma+\delta+\delta_0)}$ is applied.

A noisy recovery operation $\chan R$  is defined by a distribution $q_{\chan R}$ of wrong effective syndromes as in~\eqref{eq:noisy_R}.
In the procedure just discussed the effective wrong syndrome is $\err (\delta+\delta_0)$.
This leads to the following definition for the class of recovery operations, where the superscript $\mathrm{g}$ indicates that the generators $\gauge_0$ are measured directly.
\begin{defn}
The class $\cchan R_\eta^\mathrm{g}$ contains those recovery operations $\chan R$ such that
\begin{equation}
q_{\chan R}(\omega)=\sum_{\omega|\err(\delta+\delta_0)=\omega} r(\delta)
\end{equation}
for some probability distribution $r(\delta)$ that is $\eta$-bounded.
\end{defn}

\begin{thm}\label{thm:gauge}
For any $K$-confining family of stabilizer subsystem codes with failure probability bounded as in~\eqref{eq:f},
there exists some $\eta_0>0$ such that 
\begin{equation}\label{eq:goalloc}
\cchan R_{\eta}^\mathrm{g}\circ \cchan N_{\tau,\epsilon}^\mathrm{loc}\circ \chan C\subseteq \cchan N_{\tau',\epsilon+\delta}^\mathrm{loc}\circ \chan C
\end{equation}
holds for $\eta<\eta_0$ and every $\epsilon, \tau$, with $\delta$ as in~\eqref{eq:delta_taus} and
\begin{equation}\label{eq:taup_cc}
\tau' =  \frac {(\eta/\eta_0)^{k}}{1-(\eta/\eta_0)^{k}},\qquad k:=\frac 1 {2(1+K)}.
\end{equation}
\end{thm}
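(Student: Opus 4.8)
The plan is to reduce to Lemma~\ref{lem:central}, mirroring the proof of Theorem~\ref{thm:self}. The collection $\cchan N_{\tau,\epsilon}^\mathrm{loc}$ satisfies properties~1--3 by the proposition in Section~\ref{sec:local_noise} (this is where the hypothesis \eqref{eq:f} enters), and the increment $\delta$ of \eqref{eq:axiom2} for that collection is the one in \eqref{eq:delta_taus}; so it suffices to verify the hypothesis \eqref{eq:goalR}, i.e. that $\echan R\in\cchan N_{\tau',0}^\mathrm{loc}$ for every $\chan R\in\cchan R_\eta^\mathrm{g}$. Since the errors $\corr\omega$ applied by $\echan R$ are correctable, $\fail{\echan R}=0$ and $\rechan R=\echan R$, so the task is to exhibit a local channel $\chan D\in\cchan L_{\tau'}$ with $\rchan D=\echan R$, i.e. one whose syndrome distribution equals $q_{\chan R}$.

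I would build $\chan D$ out of the $K$-confinement property. Given a gauge-syndrome measurement error $\delta\subseteq\gauge_0$ with minimal completion $\delta_0$, write $\gamma':=\delta+\delta_0$ for the resulting effective wrong gauge syndrome, which is valid, and let $\gamma'_1,\gamma'_2,\dots$ be its connected components; each $\gamma'_i$ is itself a gauge syndrome by the first clause of the definition of \emph{locally connected}. For every $i$ choose a Pauli operator $E^{(i)}$ of \emph{minimal} support among those with $\synd{E^{(i)}}=\err{\gamma'_i}$; $K$-confinement guarantees $|\supp{E^{(i)}}|\leq K|\gamma'_i|$. Let $\chan D$ be the Pauli channel applying $E_\delta:=\prod_iE^{(i)}$ with probability $r(\delta)$. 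Additivity of $\synd$ and $\err$ gives $\synd{E_\delta}=\err{\gamma'}$, so the syndrome distribution of $\chan D$ is precisely $q_{\chan R}$ and $\rchan D=\echan R$; what remains is to show $\chan D\in\cchan L_{\tau'}$, i.e. that the distribution of the supports $\supp{E_\delta}$ is $\tau'$-bounded.

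For that I would invoke Lemma~\ref{lem:bounded}, applied to the family of graphs furnished by local connectedness, with $k=1/(2(1+K))$, $\Gamma_1=\gauge_0$, $\Gamma_2=\mathcal Q$, the $\eta$-bounded distribution $p_1=r$, and the set map $f(\delta):=\bigcup_i\bigl(\gamma'_i\cup\supp{E^{(i)}}\bigr)$. One has $f(\delta)\cap\Gamma_2=\bigcup_i\supp{E^{(i)}}\supseteq\supp{E_\delta}$, so the output $p_2$ of the lemma satisfies $\tilde p_2\geq\tilde p'_{\chan D}$ pointwise; taking $\eta_0$ to be the threshold $\alpha_0=\gamma^{-1/k}$ of that lemma ($\gamma$ being the constant from Lemma~\ref{lem:C}), its conclusion is exactly $\tau'$-boundedness with $\tau'$ as in \eqref{eq:taup_cc}. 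The real content is checking the connectivity hypothesis \eqref{eq:connectivity}, namely that $|\delta\cap V_\mathrm c|\geq k|V_\mathrm c|$ for every connected component $V_\mathrm c$ of $f(\delta)$. Two ingredients combine. First, the minimality of $\delta_0$ gives $2|\delta\cap\gamma'_i|\geq|\gamma'_i|$ for every $i$: replacing $\delta_0$ by $\delta_0+\gamma'_i$ (a legitimate competitor, since $\gamma'_i$ is a gauge syndrome) cannot decrease $|\delta_0|$, exactly as in the derivation of \eqref{eq:byminimality}. Second, the minimality of $E^{(i)}$ forces $\gamma'_i\cup\supp{E^{(i)}}$ to be connected: splitting $E^{(i)}=E_aE_b$ with $\supp{E_b}$ the union of those connected components of $\supp{E^{(i)}}$ mutually disconnected from $\gamma'_i$, the implication in \eqref{eq:Egamma} (taken with $\gamma_1=\gamma'_i$, $\gamma_2=\emptyset$, $Q_1=\supp{E_a}$, $Q_2=\supp{E_b}$) gives $\synd{E_a}=\err{\gamma'_i}$, hence $E_b=\id$. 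Consequently every $V_\mathrm c$ equals $\bigcup_{i\in S}\bigl(\gamma'_i\cup\supp{E^{(i)}}\bigr)$ for some index set $S$, with $V_\mathrm c\cap\gauge_0=\bigsqcup_{i\in S}\gamma'_i$; therefore $|\delta\cap V_\mathrm c|=\sum_{i\in S}|\delta\cap\gamma'_i|\geq\tfrac12\sum_{i\in S}|\gamma'_i|$ while $|V_\mathrm c|\leq\sum_{i\in S}(1+K)|\gamma'_i|$, which yields $|\delta\cap V_\mathrm c|\geq k|V_\mathrm c|$.

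The step I expect to be the main obstacle is this last connectivity estimate: one must choose the notion of connectivity (and the operators $E^{(i)}$) so that the ``flux-tube'' support $\supp{E^{(i)}}$ contributes to $V_\mathrm c$ no node disconnected from every element of $\gamma'$ --- handled by taking $E^{(i)}$ of minimal support and peeling off its disconnected part through \eqref{eq:Egamma} --- and one must check that the two minimality estimates compose to give precisely the exponent $k=1/(2(1+K))$ in \eqref{eq:taup_cc} rather than a weaker one. The rest is bookkeeping parallel to Theorem~\ref{thm:self}: the reduction through Lemmas~\ref{lem:central} and~\ref{lem:bounded}, the identities $\rchan D=\echan R$ and $\fail{\echan R}=0$, and the emergence of $\delta$ as in \eqref{eq:delta_taus} directly from property~2 of $\cchan N_{\tau,\epsilon}^\mathrm{loc}$.
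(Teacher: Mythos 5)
Your proposal is correct and follows essentially the same route as the paper: reduction via Lemma~\ref{lem:central}, construction of a local channel $\chan D$ from minimal-support Pauli operators realizing $\err(\delta+\delta_0)$, and an application of Lemma~\ref{lem:bounded} with $f(\delta)=(\delta+\delta_0)\cup\supp E$, where the connectivity bound $k=1/(2(1+K))$ comes from combining the minimality of $\delta_0$ with $K$-confinement exactly as you describe. The only (immaterial) difference is that you pick a minimal operator per connected component of $\delta+\delta_0$ and multiply them, whereas the paper picks a single global minimal-support operator and restricts it to the connected components of $f(\delta)$, using \eqref{eq:Egamma} at that stage instead of to rule out stray components of the support.
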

This result implies that $K$-confining code families with a local gauge extraction, as described in section~\ref{sec:gauge_and_errors}, allow single-shot error correction as long as they have a threshold for ideal error correction.
In particular, for local codes~\eqref{eq:local_codes} holds and the right properties are met, \emph{i.e.} (i) $\tau'$ can be made arbitrarily small just by reducing $\eta$ and  (ii) $\delta$ can be made arbitrarily small just by increasing the system size as long as $\tau+\tau'<\lambda_0$.
It is also worth remarking that unlike in section~\ref{sec:self-ss}, where the noise model had to be tailored for each specific code, here the standard local noise model $\cchan N_{\tau,\epsilon}^\mathrm{loc}$ of section~\ref{sec:local_noise} is enough.

\begin{proof}
According to lemma~\ref{lem:central} it suffices to show that for every $\chan R\in \cchan R_{\eta}^\mathrm{g}$
\begin{equation}\label{eq:goalRg}
\echan R\in \cchan N_{\tau',0}^\mathrm{loc},
\end{equation}
\emph{i.e.} that there exists a channel $\chan D$ with 
\begin{equation}\label{eq:condRDL}
\rechan R=\rchan D,\qquad \chan D\in\cchan L_{\tau'}.
\end{equation}
Choose for every gauge syndrome $\gamma$ a Pauli operator $E_\gamma$ with minimal support and such that
$\synd E_\gamma=\err \gamma$.
The first condition in~\eqref{eq:condRDL} holds setting
\begin{equation}
\chan D:= \sset{r(\delta)E_{\delta+\delta_0}}_{\delta}.
\end{equation}
It suffices to show that the distribution $p_{\chan D}'$ over the support of $E_{\delta+\delta_0}$, defined according to~\eqref{eq:local}, is $\tau'$-bounded.
This follows by applying lemma~\ref{lem:bounded} to each of the graphs that assure that the family of codes is locally connected.
Namely, take $\Gamma=\Gamma_1\cup\Gamma_2$, $\Gamma_1=\gauge_0$, $\Gamma_2=\mathcal Q$ and 
\begin{equation}
f(\delta)=(\delta+\delta_0)\cup\supp E_{\delta+\delta_0}.
\end{equation}
In this case \eqref{eq:connectivity} reads
\begin{equation}\label{eq:condconn}
2(1+K)|\delta\cap\delta_\mathrm c|\geq |\delta_\mathrm c|+|Q_\mathrm c|,
\end{equation}
where $\delta_\mathrm c\sqcup Q_\mathrm c$ is any connected component of $f(\delta)$, $\delta_\mathrm c\subseteq\gauge_0$, $Q_\mathrm c\subseteq \mathcal Q$.
By definition of local connectedness $\delta_{\text c}$ is a gauge syndrome and by the minimality of $\delta_0$, using the same argument as in~\eqref{eq:byminimality},
\begin{equation}\label{eq:condconn1}
2|\delta\cap \delta_\mathrm c|\geq |\delta_{\text c}|.
\end{equation}
If $E_\mathrm c$ is the restriction of $E_{\delta+\delta_0}$ to $Q_\mathrm c$, by definition of local connectedness $\synd E_\mathrm c=\err \delta_\mathrm c$.
For any Pauli operator $E'_\mathrm c$ such that $\synd E'_\mathrm c=\err\delta_\mathrm c$,
\begin{equation}
\synd E'=\synd E_{\delta+\delta_0},\qquad E':=E_{\delta+\delta_0} E_\mathrm c E'_\mathrm c,
\end{equation}
and
\begin{multline}
\supp (E_{\delta+\delta_0}E_\mathrm c)+|\supp E'_\mathrm c| \geq |\supp E'| \geq \\
\geq |\supp E_{\delta+\delta_0}|=|\supp (E_{\delta+\delta_0}E_\mathrm c)|+|Q_\mathrm c|.
\end{multline}
where  the second inequality is by the minimality of $E_{\delta+\delta_0}$.
Since the family is $K$-confined it is possible to choose $E'_\mathrm c$ such that 
\begin{equation}\label{eq:condconn2}
K|\delta_{\text c}|\geq |\supp E'_\mathrm c|\geq |Q_\mathrm c|.
\end{equation}
The inequalities~\eqref{eq:condconn1} and~\eqref{eq:condconn2} imply~\eqref{eq:condconn}.
\end{proof}

\subsection{Non-local constraints}\label{sec:global_constraints_gauge}

The discussion of section~\ref{sec:global_constraints} applies to a big extent also here, but there are interesting distinctions.
Here again the linear constraints that subsets of $\gauge_0$ have to satisfy in order to be gauge syndromes might be arranged in two types: local and non-local.
To clarify the distinction, recall that for each code in the family there is a graph representing the connectivity of gauge generators (and qubits), and that such graphs have a bounded maximum degree.
Local constraints are those that are represented in these graphs: if some $\gamma\subseteq\gauge_0$ satisfies them, so does each connected component of $\gamma$.
Non-local constraints, as in section~\ref{sec:global_constraints}, will then only be a problem if they are likely to occur.

A new scenario appears however due to the use of the gauge syndrome as a proxy for the error syndrome.
It might be the case that for all the subsets $\gamma\subseteq\gauge_0$ that satisfy the local constraints it is possible to extract an error syndrome.
Such sets $\gamma$, which will be termed \emph{gauge quasi-syndromes}, are not gauge syndromes but are equally useful.
To be precise, if the function $\errx$ can be extended to quasi-syndromes while preserving linearity, then all the results of the present section hold under the substitution of the words `gauge syndrome' by `gauge quasi-syndrome'.
This will be handy when dealing with 3D gauge color codes in the next section.

\section{Gauge color codes}

This section explains how single-shot error correction is possible in 3D gauge color codes~\cite{bombin:2013:gauge}. 
The emphasis in on the charge confinement mechanism: it provides the properties required for theorem~\ref{thm:gauge} to hold.
The section also outlines the procedure to obtain a universal set of quantum-local operations (in particular, local in 4D).

\subsection{Geometry}

3D gauge color codes are defined on lattices called 3-colexes~\cite{bombin:2007:branyons} in which $d$-cells are labeled with $d$ colors chosen from a set of $4$ colors: red (r), green (g), blue (b) and yellow (y).
For a given set of colors $\kappa$, a $\kappa$-cell is a cell with label $\kappa$.
If the lattice forms a closed 3-manifold the defining feature of 3-colexes is that each vertex is part of exactly one $d$-cell for each choice of $d$ colors.
Thus at a vertex 4 edges (1-cells), 6 plaquettes (2-cells) and 4 cells (3-cells) meet.
An interesting feature of 3-colexes is that their structure is fixed by giving only their colored 1-skeleton, \emph{i.e.} the graph formed by vertices and edges, together with the labels of the edges~\cite{bombin:2007:branyons}.

It is convenient to consider 3-colexes with colored boundaries. 
In particular, the boundary is a closed 2-manifold divided in \emph{regions}, with \emph{borders} between regions and \emph{corners} where the borders meet such that
\begin{itemize}
\item
each region is labeled with a color,
\item
two regions of different color meet at each border, which is labeled with these two colors, and
\item
three regions (and three borders) meet at each corner, which is labeled with these three colors.
\end{itemize}
The regions must match the lattice structure, so that regions are collections of plaquettes, borders are collections of edges, and corners are vertices.
For any vertex $v$, let $\kappa_v$ be the set of colors of the regions of which $v$ is part, in particular $\kappa_v=\emptyset$ for bulk vertices.
The defining feature for a 3-colex with boundary is that \emph{each vertex $v$ is part of exactly one $\kappa$-cell for each set of colors $\kappa$ such that
\begin{equation}\label{eq:defining}
|\kappa\cup\kappa_v|\neq 4.
\end{equation}
}
It follows that only vertices at corners are missing one of their edges, only edges at borders are missing one of their plaquettes, and only plaquettes at regions are missing one of their cells.
Moreover, the plaquettes forming a region have labels not containing the color of the region, and similarly for borders and edges.

\begin{defn}
A border is {\bf\emph{odd}} if it connects two corners with the same label.
\end{defn}
\begin{defn}
A region is {\bf\emph{free}} if it has an odd number of odd borders of any given color, and {\bf\emph{frozen}} if it has no odd borders.
\end{defn}
\begin{defn}
A 3-colex is {\bf\emph{simple}} if it is a topological ball and all its regions are topological disks and either free or frozen.
A 3-colex is free (frozen) if it is simple and all its regions are free (frozen).
\end{defn}
A simple 3-colex is frozen if and only if its regions only have 3 different colors.
The simplest geometry that provides a non-trivial free 3-colex involves 4 triangular regions, one of each color. These tetrahedral 3-colexes were introduced in~\cite{bombin:2007:3DCC}.

Explicit examples of 3D color codes with and without boundaries, including pictures, can be found in the literature~\cite{bombin:2007:branyons,bombin:2007:3DCC,bombin:2013:gauge,kubica:2015:universal,brown:2015:fault}.

\subsection{Codes}

To construct a 3D gauge color code from a 3-colex, first attach a qubit to each vertex.
Denote by $X_o$ the operator that flips the qubits in the set or geometrical object $o$, and similarly for $Z_S$.
The gauge group has generators
\begin{equation}\label{eq:defgauge}
\gauge_0:=\sset{X_p, Z_p}_{p\in \text{$2$-cells}}
\end{equation}
The definition \eqref{eq:defgauge} is enough to define the codes up to the choice of signs for the stabilizer generators.
Notice that 3D gauge color codes are self-dual CSS codes.

Given a region $R$, it is easy to check that the region operators $X_R$ and $Z_R$ belong to $\cent\gauge$.
Thus they are either nontrivial bare logical operators or stabilizers.
Their commutation rules are as follows.
\begin{prop}\label{prop:regionop}
Given a 3-colex and two different regions $R$ and $R'$ (i) $X_R$ and $Z_{R'}$ anticommute if and only if $R$ and $R'$ share and odd number of odd borders, and (ii) $X_R$ and $Z_{R}$ anticommute if and only if $R$ is free.
\end{prop}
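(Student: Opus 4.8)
The plan is to reduce both statements to a parity count on physical qubits. Since 3D gauge color codes are CSS codes, $X_R$ and $Z_{R'}$ are pure $X$-type and $Z$-type Pauli operators, so they commute if and only if the intersection of their supports has even cardinality. Both supports are read off from the region/plaquette/vertex incidence data of the 3-colex: $\supp{X_R}=\supp{Z_R}$ is the set $V_R$ of vertices lying on some plaquette of $R$. Thus part (i) amounts to showing that $|V_R\cap V_{R'}|$ is odd exactly when $R$ and $R'$ share an odd number of odd borders, and part (ii) amounts to showing that $|V_R|$ is odd exactly when $R$ is free.

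For part (i), I would first argue that two distinct regions meet only along their common borders, and that, because only three regions (hence three borders, one per pair) meet at any corner, distinct common borders of $R$ and $R'$ are vertex-disjoint. Hence $|V_R\cap V_{R'}|=\sum_\beta |V(\beta)|$, the sum running over the common borders $\beta$, and it remains to compute $|V(\beta)|\bmod 2$ for a single border. The essential colex input is that the edges along a border carry the two colours complementary to the border's label and alternate between them, so the parity of the number of edges along $\beta$ --- and therefore of $|V(\beta)|$ --- is fixed by whether the two endpoint corners carry the same label, i.e.\ by whether $\beta$ is odd; the degenerate case of a closed border without corners is handled separately. Summing these contributions, while keeping track of how many common borders there are, gives (i).

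For part (ii), I would split $V_R$ into vertices interior to $R$, vertices interior to borders of $R$, and corners of $R$, and compute each count modulo $2$ from the incidence rules (a bulk-like vertex of $R$ lies on three plaquettes of $R$, a border-interior vertex on two, a corner on one), combined with a discrete Euler/Gauss--Bonnet identity on the disk $R$. This turns $|V_R|\bmod 2$ into information about $\partial R$: the per-border parities from the analysis in (i), together with the corner count, assemble into exactly the condition that $R$ has an odd number of odd borders of each colour, which is the definition of \emph{free}; a frozen region has no odd borders and thus gives $|V_R|$ even, consistent with $X_R$ being a stabilizer in that case.

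The main obstacle here is the local combinatorial bookkeeping rather than any deep idea: deriving the alternating edge-colour pattern along a border from the 3-colex axioms, pinning down which of a corner's edges belongs to which of the three borders meeting there, correctly handling endpoint corners (and closed borders) in the vertex counts, and --- for (ii) --- organising the interior/boundary/corner contributions so that the free/frozen dichotomy comes out with the right parities. Once this bookkeeping is in place, both equivalences follow by adding up parities.
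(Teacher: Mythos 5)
Your reduction of both parts to the parity of support overlaps is sound, and your part (i) is essentially the paper's own argument: the intersection of the two supports is the disjoint union of the vertex sets of the shared borders (distinct shared borders cannot meet at a corner, since only one of the three borders at a corner separates $R$ from $R'$), and the odd/even dichotomy for a border is exactly the parity of its vertex count. Nothing to add there.

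Part (ii) takes a genuinely different route from the paper, and as described it has a gap. Your aggregate incidence count (every plaquette of $R$ is an even cycle; an interior vertex of $R$ lies on $3$ plaquettes of $R$, a border-interior vertex on $2$, a corner on $1$) produces a \emph{single} parity relation. Working it out, it gives $n_{\mathrm{int}}\equiv n_{\mathrm{corner}}\pmod 2$ and hence that $|V_R|$ has the parity of the number of border-interior vertices, i.e.\ of the \emph{total} number of odd borders of $R$ summed over all colour pairs. But ``free'' is a per-colour condition: an odd number of odd borders \emph{of each} colour. One parity bit cannot by itself certify three separate parity conditions; the equivalence you want only holds because the 3-colex structure forces the per-colour counts of odd borders to have equal parity, and that fact is invisible to the aggregate count. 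As written, your argument proves ``$X_R$ and $Z_R$ anticommute iff the total number of odd borders of $R$ is odd,'' which is not yet the stated proposition.

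The missing ingredient is a colour-refined count, which is what the paper's proof supplies in a different guise: for each two-colour set $\kappa$ containing the colour of $R$ it decomposes $X_R$ as the product of the gauge generators $X_p$ over the $\bar\kappa$-plaquettes of $R$ times the flips $X_b$ over the $\kappa$-borders of $R$; since $Z_R$ lies in the centralizer of the gauge group, the whole commutator is carried by the $\kappa$-borders, giving ``anticommute iff the number of odd $\kappa$-borders is odd'' simultaneously for every admissible $\kappa$. This yields both the equality of the per-colour parities and the equivalence with freeness in one stroke. You can close your gap in the same spirit without invoking the gauge group: repeat your double count restricted to the $\bar\kappa$-plaquettes of $R$ for each fixed $\kappa$ (an interior vertex lies on exactly one such plaquette, a $\kappa$-border-interior vertex on none, the other border-interior vertices and the appropriate corners on one), obtaining one relation per colour. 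With that refinement, and with care about the odd/even convention for borders (check it against the tetrahedral example, where every border should come out odd so that all four regions are free), your approach goes through.
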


\begin{sketchproof}
It is easy to check that a border is odd if and only if it contains an odd number of vertices.
Then (i) follows noting that $R$ and $R'$ can only share borders that are disjoint from each other.
As for (ii), let $\kappa$ be a set of colors containing the color of $R$ and let $\bar\kappa$ be the set of complementary colors to $\kappa$.
It is easy to check that
\begin{equation}
X_R=\prod_{p\in \text{$\bar\kappa$-cells in $R$}} X_p\,\,\prod_{b \in \text {$\kappa$-borders of $R$}} X_b.
\end{equation}
Since the product of plaquettes is in $\gauge$, the result follows.
\end{sketchproof}

For closed 3-colexes with the topology of a 3-sphere, it follows from the properties of conventional 3D color codes~\cite{bombin:2007:branyons} that (i) the stabilizer has generating set
\begin{equation}\label{eq:locstab}
\stab_0:=\sset{X_c, Z_c}_{c\in \text{$3$-cells}},
\end{equation}
and (ii) there are no encoded qubits.
Simple 3-colexes are more interesting.

\begin{prop}\label{prop:stablog}
A 3D gauge color code defined on a simple 3-colex admits the stabilizer generating set
\begin{equation}\label{eq:stabsimple}
\sset{X_c, Z_c}_{c\in \text{$3$-cells}}\sqcup\sset{X_R, Z_R}_{R\in \text{frozen regions}},
\end{equation}
and bare logical operators are, up to stabilizers and phases, generated by the set
\begin{equation}\label{eq:centsimple}
\sset{X_R, Z_R}_{R\in \text{free regions}}.
\end{equation}
\end{prop}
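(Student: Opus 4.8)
The plan is to exploit that $3$D gauge color codes are self-dual CSS: the involution swapping $X$ and $Z$ permutes the generators in \eqref{eq:defgauge}, so it suffices to describe the $X$-type parts of $\stab$ and $\cent\gauge$, the $Z$-type parts following by duality. Identify $X$-type Pauli operators with subsets of vertices, viewed as $\mathbf Z_2$-vectors. Then the $X$-type elements of $\gauge$ span the subspace $G$ generated by the vertex sets of the $2$-cells; the $X$-type elements of $\cent\gauge$ are precisely the vectors orthogonal to every $2$-cell, a subspace $H$; and, since $\stab$ is the center of $\gauge$, its $X$-type part is $G\cap H$. The proposition then amounts to the two identities $G\cap H=\langle X_c\rangle_c+\langle X_R\rangle_{R\ \mathrm{frozen}}$ and $H=\langle X_c\rangle_c+\langle X_R\rangle_{R\ \mathrm{all}}$, together with their $Z$-type mirrors; combined, they leave exactly the free region operators in $H/(G\cap H)$.

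First I would dispatch the easy inclusions. For a $3$-cell $c$, every vertex of $c$ lies on exactly three $2$-cells of $c$, so $X_c=\prod_{p\subset\partial c}X_p\in\gauge$; a direct local check (the same one as in the closed $3$-colex case) shows $X_c$ commutes with every $Z_p$, so $X_c\in G\cap H$. As already noted in the text, $X_R\in\cent\gauge$ for every region $R$, so $X_R\in H$. If $R$ is frozen I would feed it through the expression for $X_R$ as a product of $2$-cell operators times border operators $X_b$ used in the proof of Proposition~\ref{prop:regionop}: modulo $\gauge$ only the $X_b=X_{V(b)}$ survive, and each is trivial exactly when the border $b$ has an even number of vertices, i.e. is an even border; a frozen region has only even borders, so $X_R\in\gauge$ and thus $X_R\in G\cap H$. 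If instead $R$ is free, Proposition~\ref{prop:regionop}(ii) says $X_R$ and $Z_R$ anticommute; since both lie in $\cent\gauge$ and every element of $\stab$ commutes with all of $\cent\gauge$, neither can lie in $\stab$, hence neither lies in $\gauge$, and $X_R,Z_R$ are nontrivial bare logical operators.

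The substance is completeness: that the listed operators exhaust $\stab$ and, modulo $\stab$, all of $\cent\gauge$ --- i.e. the reverse inclusions in the two displayed identities. My plan is to read $G$ and $H=G^{\perp}$ off the cellular chain complex of the $3$-colex. For closed $3$-colexes with $3$-sphere topology this is the cited computation around \eqref{eq:locstab}: there $H=G\cap H=\langle X_c\rangle_c$ and there are no encoded qubits. For a simple $3$-colex $L$ --- a topological ball all of whose regions are disks --- the same local relations hold in the interior, while the boundary $2$-sphere, decomposed into regions, borders and corners, supplies the additional generators; invoking that $L$ is a ball and the regions are disks collapses the interior homology, so the only classes left are the boundary ones, and Proposition~\ref{prop:regionop} is precisely what separates the region operators into independent logicals (free regions) and new stabilizer relations (frozen regions). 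A more hands-on variant is to cap $L$ into a $3$-sphere $3$-colex, apply the closed-case result and restrict the relations back to $L$; this is clean when every region is frozen, so that the cap admits a consistent colouring, but needs extra care in general. I expect the dimension match to be the main obstacle: showing that the free region operators \emph{generate} the whole quotient $H/(G\cap H)$ rather than merely lying inside it, which is where the global ball topology and the parity combinatorics of odd borders must be reconciled.
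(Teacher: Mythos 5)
Your setup and the easy inclusions are mostly sound, but the proof has a genuine gap exactly where you flag it: the completeness direction, i.e.\ that $\sset{X_c,Z_c}_{c}\cup\sset{X_R,Z_R}_{R}$ generates \emph{all} of $\cent\gauge$. Neither of your two strategies is actually carried out. The homological one (``the ball topology collapses the interior homology, leaving only boundary classes'') is an assertion, not an argument --- for colexes the relation between $\cent\gauge$ and cellular homology is itself the nontrivial content to be established. The capping variant is the right instinct, but you stop precisely where it gets hard (``needs extra care in general''). The missing idea is a specific cap that always works: glue the simple $3$-colex to a \emph{duplicate of itself}, joining corresponding corners by edges, borders by plaquettes and regions by cells, with each interface cell coloured complementarily to the boundary feature it joins; this produces a closed $3$-sphere colex with no case distinctions about colourings. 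Because the interface plaquettes are symmetric under the exchange of copies, one gets $E\in\cent\gauge$ iff the symmetrized operator $EE'$ lies in $\cent{\bar{\gauge}}$, which by the closed-$3$-sphere result equals $\bar{\stab}$ up to phases; restricting the cell generators of $\bar{\stab}$ to the first copy turns interface cells into region operators and delivers the generating set of $\cent\gauge$. Proposition~\ref{prop:regionop} then sorts that set into the centre of $\cent\gauge$ (cells and frozen regions, i.e.\ $\stab$ up to phases by the bicommutant) and the logical quotient (free regions), which is exactly the ``dimension match'' you could not close.

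There is also a secondary error: to put frozen $X_R$ into $\gauge$ you claim each border operator $X_b=X_{V(b)}$ is ``trivial exactly when $b$ has an even number of vertices''. It is not: $X_{V(b)}$ is a nontrivial Pauli operator whenever $V(b)\neq\emptyset$, and such an open string is not in the plaquette group either. What the parity of $|V(b)|$ controls is the \emph{commutation} of $X_R$ with $Z_{R'}$, which is all Proposition~\ref{prop:regionop} uses --- not the triviality of $X_b$. The clean route to frozen $X_R\in\stab$ is the one above: once the generating set of $\cent\gauge$ is known, a frozen region operator commutes with every generator and hence lies in $\cent{\cent\gauge}\cap\cent\gauge=\stab$ up to phases. (For a frozen region bordered by only two other colours one can instead use the pure-plaquette decomposition~\eqref{eq:Rdecomp}, which has no border terms at all, but that does not cover every frozen region of a general simple colex.)
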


\begin{sketchproof}
The key is the following geometrical construction.
A simple 3-colexes can be glued to a duplicate of itself to produce a closed 3-colex with the topology of a 3-sphere.
Namely, it suffices to join in the obvious way corresponding corners with an edge, corresponding borders with a plaquette and corresponding regions with a cell.
The coloring of the cells joining the two copies of the original 3-colex is the complementary of the set of colors labeling the regions, borders or corners that the cell joins.

Let $\gauge$ and $\gauge'$ be the gauge groups for the original 3-colex and its duplicate, and $\bar\gauge$ be the gauge group for the closed 3-colex.
The generators of $\bar\gauge$ include the generators of $\gauge$ and $\gauge'$ plus those corresponding to the plaquettes in the \emph{interface} between the two copies.
Given a Pauli operator $E$ on the first copy, let $E'$ be the analogous operator acting on the second copy.
Interface plaquettes are completely symmetrical up to the exchange of the copies, which implies that the product $EE'$ commutes with $X_p$ or $Z_p$ for any interface plaquette $p$.
It follows that
\begin{multline}\label{eq:dup}
E\in\cent{\gauge}\iff E E'\in\cent{\bar\gauge}\iff\\\iff EE'\in \bar\stab\iff E\in \stab^\star,
\end{multline}
where in the second line containment is up to a phase and the elements of $\stab^\star$ are the restriction to the original 3-colex of the elements of the stabilizer $\bar\stab$ of the closed 3-manifold (recall that $\bar\stab\propto\cent{\bar\gauge}$).
Since the vertices of an interface cell in the original colex are just the vertices of the corresponding region,
it follows from~\eqref{eq:dup} that, up to phases, $\cent{\gauge}$ has generating set 
\begin{equation}\label{eq:stabsimple}
\sset{X_c, Z_c}_{c\in \text{$3$-cells}}\sqcup\sset{X_R, Z_R}_{R\in \text{frozen regions}},
\end{equation}
The result follows using proposition~\ref{prop:regionop}.
\end{sketchproof}

The stabilizer generator $X_R$ of a frozen region $R$ is the product of the gauge generators $X_p$ of some of the plaquettes $p$ in the region.
Namely, if $R$ is a b-region that borders with g- and r-regions,
\begin{equation}\label{eq:Rdecomp}
X_R=\prod_{p\in \text{rg-cells in $R$}} X_p,
\end{equation}
and similarly for other color combinations and for $Z_R$.
Indeed, (i) all vertices belong to exactly one rg-cell unless they belong to a by-border, and no by-borders share vertices with such a region, and (ii) if a vertex belongs to the region so does its rg-plaquette, because it cannot belong to the neighboring regions. 
It is thus important that the plaquettes in~\eqref{eq:Rdecomp} are rg-cells, no other color choice is possible.
This is in contrast with a similar relation for cells.
Namely, if $c$ is a b-cell~\cite{bombin:2007:branyons}
\begin{equation}\label{eq:cdecomp}
X_c=\prod_{p\in \text{$\kappa$-cells in $c$}} X_p,\qquad \kappa\in\sset{\text{rg, ry, gy}}.
\end{equation}

According to proposition~\ref{prop:stablog} (i) for a free simple colex the stabilizer is generated by cell operators, and (ii) for a frozen simple colex the stabilizer is generated by cell and region operators.
In the latter case not all region operators are needed to form a set of independent generators of the stabilizer.
\emph{E.g.} if the frozen colex has only rgb-corners the following relation holds
\begin{equation}\label{eq:redundant}
\prod_{c \in\text{rgb-cells}} X_c \prod_{c \in\text{rgy-cells}} X_c =\prod_{R \in\text{b-regions}} X_c,
\end{equation}
and similarly for r- and g-regions.
Indeed, each vertex belongs (i) exactly to one rgb-cell and (ii) exactly to one rgy-cell, \emph{unless} it belongs to a b-region.

\subsection{Dual picture}

When discussing error correction, and also when constructing codes~\cite{bombin:2013:gauge}, it is better to use the dual lattice rather than the 3-colex.
For a closed 3-colex the dual is a simplicial lattice with 4-colored vertices.
In particular, the color of a dual vertex is the complementary to the three colors labeling the corresponding cell.
For the sake of error correction dual edges should be labeled with the colors of the corresponding plaquettes, \emph{e.g.} a yb-edge connects a r-vertex and a g-vertex.
Only edges and vertices are relevant for error correction.

\emph{From this section onwards only simple 3-colexes are considered.}
For a simple 3-colex with a boundary a vertex must be attached to each region, with the same color as the region.
Vertices that are dual to cells of the 3-colex are \emph{internal}, and the rest are \emph{external}.
External vertices corresponding to free (frozen) regions are free (frozen), and internal vertices are frozen.

For explicit graphical examples of the duality, see~\cite{bombin:2013:gauge}.
In what follows only the dual picture will be used, and thus the wording `dual' will be dropped.

\subsection{Syndromes}

As discussed in section~\ref{sec:gauge_and_errors}, for CSS codes two separate recovery operations can be considered for bit-flip and phase-flip errors.
3D gauge color codes are in addition self-dual, so that the two recovery operations are analogous.
Thus it suffices to discuss the detection and correction of bit-flip errors, in which $Z$-type stabilizer and gauge generators are involved.

The $Z$-type generators are as follows.
There is a stabilizer generator $s_v$ per frozen vertex $v$, and thus an error syndrome $\sigma$ can and will be identified with a set of frozen vertices.
Similarly there is a gauge generator $g_l$ per edge $l$, and a gauge syndrome $\gamma$ will be identified with a set of edges.

As discussed in section~\ref{sec:stabilizer}, there exist linear constraints that characterize which sets of vertices/edges are error/gauge syndromes.
These constraints have a geometrical nature~\cite{bombin:2007:branyons}.
The most important amounts to the conservation of a certain `color flux', and has its origin in~\eqref{eq:cdecomp}.
Let an edge carry a color $\kappa$ if one of its two color labels is $\kappa$.
Given a set of edges $\gamma$, consider its subsets
\begin{equation}\label{eq:gammakappa}
\gamma_\kappa:=\set{l\in \gamma}{\text{$l$ carries $\kappa$}}.
\end{equation}
If $\gamma$ is a gauge syndrome then every $\gamma_\kappa$ is closed, \emph{i.e.} at every internal vertex an even number of edges in $\gamma_\kappa$ meet.
At an external vertex $\gamma_\kappa$ might have an endpoint, which should be visualized as the flux ending at the boundary region.
This is only possible if it is not a $\kappa$-region, because edges with an endpoint at the external $\kappa$-vertex cannot carry color $\kappa$.
Because these are the only local constraints on gauge syndromes, a set of edges $\gamma$ is a gauge quasi-syndrome if it satisfies color-flux conservation,~\emph{i.e.} if every $\gamma_\kappa$ is closed, see section~\ref{sec:global_constraints_gauge}.

The geometrical relationship between a gauge syndrome $\gamma$ and its stabilizer syndrome $\sigma=\err\gamma$ follows from (\ref{eq:Rdecomp},\ref{eq:cdecomp}).
Consider an internal r-vertex $v$.
The edges meeting at $v$ are gb-, by- and gy-edges.
Among the gb-edges some belong to $\gamma$.
Consider this set of edges, together with the analogous ones with by- and gy-edges.
For color flux to be conserved there are two possibilities, either (i) the three sets are even or (ii) the three sets are odd.
In the second case $v$ is a \emph{branching point} of $\gamma$. 
The definition for internal g-, b- and y-vertices is entirely analogous.
The case of a frozen external r-vertex $v$ is different.
\emph{E.g.} let the region corresponding to $v$ have border with g- and b-regions.
Then $v$ is a \emph{termination point} of $\gamma$ if the number of gb-edges of $\gamma$ with an endpoint in $v$ is odd.
The syndrome $\sigma$ is composed of the branching and termination points of $\gamma$.
More generally, any gauge quasi-syndrome $\gamma$ has a well-defined set of branching and terminations points.
Indeed, the function $\errx$ can be extended to quasi-syndromes while preserving linearity, see section~\ref{sec:global_constraints_gauge}.
Notice however that the image of $\errx$ might now include sets of vertices that are not error syndromes.


\subsection{Gauss law}

According to the discussion in the previous section a gauge syndrome $\gamma$ can be visualized as a collection of two-colored `flux' strings that form a net and might have endpoints at boundary regions.
The branching and termination points $\sigma$ of this net have a color.
The subject of this section is the significance of color and its connection with flux.

For an encoded state the error syndrome is $\sigma=\emptyset$.
This is the `vacuum'.
Applying local bit-flip operations will change $\sigma$ in a well-defined way, adding vertices to it.
These are `particles'.
An arbitrary set of particles cannot be created locally, something that can be formulated stating that the elements of $\sigma$ carry a $\mathbf Z_2\times\mathbf Z_2\times\mathbf Z_2$ charge~\cite{bombin:2007:branyons}.
To specify how color labels fit into this group, construct it as the abelian group with generators r, g, b and y subject only to the relations, in additive notation,
\begin{equation}
\text r+\text g+\text b+\text y=2\text r=2\text g=2\text b=0.
\end{equation}
Only neutral sets of charges (particles), \emph{i. e.} such that their sum is zero, can be created locally. And conversely, any neutral set of charges can be created locally.

In fact, this is only true in the bulk of the lattice.
In the vicinity of a free boundary region of color $\kappa$ it is possible to create charges with total charge $\kappa$~\cite{bombin:2007:3DCC}.
This can be visualized as an `immaterial' charge $\kappa$ being created at the same time on the free external $\kappa$-vertex.

Charges $\sigma$ (error syndromes) and fluxes $\gamma$ (gauge syndromes) satisfy a Gauss law.
Namely, for a `volume' within the bulk there is a direct correspondence between the charge contained in the volume and the parity of the number of flux strings with a given coloring crossing its boundary.
In particular
\begin{itemize}
\item For neutral charge all the parities are even.
\item For total charge r the parities for rg-, rb- and ry-flux strings is even, and the rest odd.
\item For total charge r+g the parities for rg- and by-flux strings is even, and the rest odd.
\end{itemize}
The rest of cases are analogous and need not be listed.
Notice that there is a one-to-one correspondence between charge and flux configuration.

Instead of formulating this property for the whole volume, it is equally possible to apply it to each separate connected component of $\gamma$.
In fact, this not only applies to gauge syndromes but also to any gauge quasi-syndrome $\gamma$.
An immediate consequence of this is that every connected component of a gauge quasi-syndrome has branching points with neutral charge.
More exactly, this is true as long as $\gamma$ does not end on a boundary region, \emph{i.e.} it does not contain an edge connected to an external vertex.
More generally the following result holds.

\begin{prop}\label{prop:neutralization}
Let $\gamma$ be a gauge quasi-syndrome that ends at boundary regions $\sset {B_i}$.
Let $c$ be the total charge of its branching and termination points.
If the 3-colex is free $c$ is the sum of the colors of some of the regions $B_i$, and if the 3-colex is frozen $c$ is neutral.
\end{prop}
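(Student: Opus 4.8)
The plan is to reduce the claim to the Gauss law for flux nets (charge--flux correspondence) together with an analysis of how color flux can terminate at a colored boundary.

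\emph{Reduction to connected $\gamma$.} Write $\gamma=\bigsqcup_a\gamma^{(a)}$ for the decomposition of $\gamma$ into connected components. Each $\gamma^{(a)}$ is again a gauge quasi-syndrome, since color-flux conservation is a vertex-by-vertex condition and at any vertex all incident edges of $\gamma$ lie in a single component. As $\errx$ is linear, the total charge splits as $c=\sum_a c_a$, with $c_a$ the total charge of the branching and termination points of $\gamma^{(a)}$. If the proposition holds for every component, then in the frozen case $c$ is a sum of neutral charges and hence neutral, while in the free case $c=\sum_a\sum_{i\in I_a}\mathrm{color}(B_i)$ for some index sets $I_a$, and because every color is an involution in the charge group ($2\mathrm r=2\mathrm g=2\mathrm b=0$) this is again the sum of the colors of a subset of the $B_i$. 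So it suffices to treat $\gamma$ connected.

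\emph{Bulk case and Gauss-law setup.} If the connected $\gamma$ touches no boundary region, the statement is exactly the consequence of the Gauss law quoted just before the proposition: the branching points have neutral total charge and there are no termination points, so $c$ is neutral, which is simultaneously ``neutral'' and ``the empty sum of region colors''. In general, suppose $\gamma$ reaches the external vertices $v_1,\dots,v_m$ dual to the boundary regions $B_1,\dots,B_m$, of colors $\kappa_1,\dots,\kappa_m$. Choose a volume $V$ containing $\gamma$ and otherwise hugging it, so that $\partial V$ meets the boundary $2$-sphere of the lattice only in small disks around the $v_i$. All branching and termination points of $\gamma$ lie inside $V$, so by the charge--flux correspondence applied to $V$, $c$ equals the total color flux of $\gamma$ crossing $\partial V$; since $\gamma\subseteq V$ this flux leaves only near the $v_i$, so $c=\sum_i c_i$, with $c_i$ the charge that the Gauss law attaches to the flux of $\gamma$ leaking out at $v_i$.

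\emph{Flux termination at a colored boundary --- the main obstacle.} It remains to show that $c_i=0$ if $B_i$ is frozen and $c_i\in\{0,\kappa_i\}$ if $B_i$ is free. For this I would use the local combinatorics near an external vertex: by the defining relation~\eqref{eq:defining} an edge incident to an external $\kappa$-vertex carries only colors from the complementary set $\bar\kappa$, and color-flux conservation forces the parities of the sets $\gamma_\mu$ ($\mu\in\bar\kappa$) leaking out at $v_i$ to be correlated in exactly the pattern that the Gauss law associates with the charge $0$ or with the charge $\kappa_i$. This is the combinatorial content of the facts, established for tetrahedral and related 3D color codes in~\cite{bombin:2007:3DCC, bombin:2007:branyons}, that a free $\kappa$-boundary can emit an ``immaterial'' charge $\kappa$ (appearing at the free external $\kappa$-vertex) whereas a frozen boundary region, which by definition involves only three colors, can emit nothing. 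Concretely I would enumerate, for each admissible type of external vertex, the colors of its incident edges and check the induced constraints on the leaking flux; this case analysis is the only nontrivial step. Granting it, the frozen case yields every $c_i=0$, hence $c$ neutral; and in the free case every $B_i$ is free, each $c_i\in\{0,\kappa_i\}$, and $c=\sum_{i:\,c_i=\kappa_i}\kappa_i$ is the sum of the colors of a subset of $\{B_i\}$, as claimed.
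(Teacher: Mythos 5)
Your overall strategy---reduce to the Gauss law and analyse how flux terminates at coloured boundaries---is the same one the paper's sketch uses, and your reduction to connected components is fine. But the central step fails: the charge $c_i$ ``that the Gauss law attaches to the flux of $\gamma$ leaking out at $v_i$'' is not well defined, because the flux pattern at a single external vertex need not be one of the eight charge patterns. Colour-flux conservation is imposed only at \emph{internal} vertices; at an external vertex the fluxes simply terminate, with no constraint whatsoever on their parities. Concretely, in a free (tetrahedral) colex let $\gamma$ consist of one gb-string running from the external r-vertex to the external y-vertex and one gy-string from the external r-vertex to the external b-vertex. This is a valid gauge quasi-syndrome with no branching or termination points, yet at the external r-vertex the flux-type parities are $(\text{gb},\text{gy},\text{by})=(1,1,0)$, which is the pattern of no charge---in particular neither that of $0$ nor that of r. So the claimed dichotomy $c_i\in\{0,\kappa_i\}$, and the assertion that ``colour-flux conservation forces the parities \dots to be correlated in exactly the pattern that the Gauss law associates with the charge $0$ or with the charge $\kappa_i$'', are false. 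Your Gauss-law bookkeeping is also inconsistent: if $V$ contains all of $\gamma$, including the $v_i$, then no flux of $\gamma$ crosses $\partial V$ at all, and the bulk Gauss law cannot simply be applied to a volume that swallows pieces of the boundary---that is precisely what the proposition is supposed to establish. (In the frozen case the termination points themselves carry nonzero charge, e.g.\ a single gb-string between two external r-vertices contributes $\text{r}+\text{r}=0$ only in total, so ``every $c_i=0$'' cannot be read as including the termination charge either.)

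The statement is only true globally, and that is how the paper argues. Apply the Gauss law to a volume containing all internal vertices but excluding small balls around the external vertices: the total branching charge is then fixed by the parities of the six flux types \emph{summed over all} boundary endpoints of $\gamma$. One then runs a case analysis over the eight possible total charges, using only that a $\mu\nu$-flux can end at a region whose colour is one of the two complementary colours (free case), respectively that its termination point carries a definite colour charge (frozen case). For instance, total branching charge r forces odd numbers of gb-, gy- and by-fluxes to reach the boundary, and whichever of the two admissible region colours each lands on, r is the sum of a subset of those colours. Your per-vertex localisation cannot be repaired into this; the charge correspondence only becomes meaningful after summing over the whole boundary, so the case analysis you defer is not a routine verification of your intermediate claim but a replacement for it.
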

\begin{sketchproof}
Suppose that the 3-colex is free.
The property can be checked in a case by case basis.
\emph{E.g.} if the total charge is r there must be at least a gb-, a gy- and a by-flux ending on boundary regions, and each of these regions can have two possible colors, but all the combinations are compatible with a total charge r.

Now suppose that the 3-colex is frozen. 
Without lost of generality, suppose that the regions have colors r, g and b.
To check this property one can disregard ry-, gy- and by-fluxes that end at the boundary.
The rest have a termination point with a definite charge, \emph{e.g} a rg-flux can only end at b-region.
Again the property can be checked in a case by case basis, \emph{e.g.} if the total charge of the branching points is $y$ there is an odd number of rg-, gb- and rb-colored termination points and thus the total charge is neutral.
\end{sketchproof}

\emph{E.g} in the free case if a connected component of $\gamma$ is connected to a b-region and to a y-region its total charge could be neutral, y, b or y+b. 
It is the confinement of charge that makes single-shot error correction possible.
The above proposition is closely connected to the fact that non-trivial charge can be confined, except in the vicinity of a free boundary region, where charges that can disappear in the boundary are not confined.

Free 3-colexes and frozen 3-colexes seem to be the most important for applications~\cite{bombin:2014:dimensional}.
The above proposition shows that in such simple 3-colexes every gauge quasi-syndrome $\gamma$ gives rise to an error syndrome $\sigma$.
Thus the redefined function $\errx$ maps gauge quasi-syndromes to error syndromes, and this is enough in the sense of section~\ref{sec:global_constraints_gauge}.
In what follows only free and frozen 3-colexes are considered.
For this reason the distinction between gauge quasi-syndromes and gauge-syndromes is dropped, see section~\ref{sec:global_constraints_gauge}.

\subsection {Faulty gauge syndromes}\label{sec:faulty_gauge_syndrome}

The notion of charge is also useful to understand the classical error correction of faulty gauge syndromes.
As discussed in the preceding sections, a collection of edges $\gamma$ is a gauge syndrome (strictly speaking a quasi-syndrome, see the preceding section) if all the $\gamma_\kappa$ of~\eqref{eq:gammakappa} are closed.
To characterize how $\gamma$ fails to satisfy this linear conditions it is convenient to introduce a linear operator $\tilde\partial$ mapping sets of edges to a syndrome.
In particular, this is the syndrome used to correct the errors in the measurements of gauge generators.

The idea is that $\tilde\partial$ is a boundary operator that takes color into account: in a sense, it can see the $\gamma_\kappa$ structure within $\gamma$.
In particular each syndrome $\tilde\partial\gamma$ is a function mapping internal vertices to the group $\mathbf Z_2\times\mathbf Z_2\times\mathbf Z_2$: it labels internal vertices with a charge.
This charge should not be confused with the one defined for error syndromes in the previous section.
Even though the two charge groups are isomorphic, this time the abstract generators are the labels rg, rb, ry, gb, gy and by, subject to the relations
\begin{align}
\text {gb}+\text {by}+\text {gy}=\text {rb}+\text {by}+\text {ry}=\text {rg}+\text {gy}+\text {ry}=0,\\
2\text {rg}=2\text {gb}=2\text {by}=0.
\end{align}
By definition, the charge $\tilde\partial\gamma(v)$ at an internal vertex $v$ is the sum of the labels of the edges in $\gamma$ meeting at $v$.
Notice that $\gamma$ is a gauge syndrome if and only if all internal vertices have trivial charge, \emph{i.e.} if $\tilde\partial\gamma=0$.

Every edge creates equal charges on both of its endpoints, and $\tilde\partial$ is a linear function. 
This implies that if $\gamma$ only has edges in the bulk of the lattice then its total charge is neutral, \emph{i.e.} $\sum_v \tilde\partial\gamma(v)=0$. And conversely, the structure of the lattice guarantees that there is always some local set of edges for a given neutral charge configuration.

With boundaries the situation is again different.
In the vicinity of a $\kappa$-region syndromes need not be neutral. 
Instead the total charge might be any sum of labels not including the color $\kappa$.
\emph{E.g.} in the vicinity of a r-boundary the syndrome of a local set of edges $\gamma$ might have total charge 0, gb, gy or by.
The reason is that there are gb-, gy- and by-edges that connect some internal vertex with an external r-vertex, and no charge is assigned to external vertices.

\subsection {Single-shot error correction}\label{sec:ss-gcc}

Since $X$ and $Z$ errors are treated separately and are analogous, as usual it suffices to consider bit-flip errors.
Assume that we are given a family of simple 3-colexes, either free or frozen, all with the same local structure, translationally invariant up to boundary defects.
The corresponding 3D gauge color codes have local (Z-type) gauge generators $g_l$ attached to edges $l$.
Assume in addition that the ($Z$-type) stabilizer has \emph{local} generators
\begin{equation}
\stab_0:=\set {s_v}{v\in V_0}
\end{equation}
where (i) $V_0$ is some set of frozen vertices that includes all internal vertices, (ii) if $v$ is an internal vertex $s_v=Z_c$ with $c$ the corresponding 3-colex cell, and (iii) if $v$ is an external vertex $s_v=Z_R$ with $R$ the corresponding 3-colex region (which must be small).

This is a family of local codes and thus it exhibits a threshold for ideal error correction under local noise.
Theorem~\ref{thm:gauge} implies then that single-shot error correction is possible if $K$-confinement is satisfied.
The purpose of this section is to show that this is indeed the case.

According to definition~\ref{defn:locally_connected}, for the family of codes to be locally connected a suitable family of graphs has to exist.
Each graph has as nodes the qubits (or vertices of the 3-colex) and the gauge generators (or edges of the dual lattice).
To define the graph, let two nodes of the graph be linked if and only if they share a vertex $v\in V_0$, given that:
\begin{itemize}
\item
$v$ is a vertex of a qubit $q$ if $q\in\supp s_v$, and 
\item
$v$ is a vertex of an edge $l$ if $v$ is an endpoint of $l$.
\end{itemize}

Condition (i) in definition~\ref{defn:locally_connected} is satisfied.
To check condition (ii) consider Pauli operators $E_i$ and sets $\gamma_i\subseteq\gauge_0$, $Q_i\subseteq\mathcal Q$, subject to the conditions specified in the definition.
First, since $\gamma_1$ and $\gamma_2$ are disconnected they do not share any internal vertex and thus
\begin{equation}
\quad\tilde\partial\gamma_1\neq 0\quad\Longrightarrow\quad\tilde\partial(\gamma_1\sqcup\gamma_2)\neq 0.
\end{equation}
If $\gamma_1\sqcup\gamma_2$ is a gauge syndrome, so is $\gamma_1$.
Second, the sets $\err\gamma_i$ and $\synd E_i$ are subsets of $V_0$, and the definition is such that if two sets among the $\gamma_i$ and $Q_i$ are disconnected from each other then their $V_0$ subsets are disjoint. 
It is then easy to derive~\eqref{eq:Egamma}: the family of codes is locally connected.

For every gauge syndrome $\gamma$ there exists some bit-flip operator $E$ with 
\begin{equation}
\synd E=\err\gamma, \qquad \supp E\subseteq \bigcup_{v\in I_\gamma} \supp s_v,
\end{equation}
where $I_\gamma$ is the set of internal vertices of $\gamma$.
This is guaranteed by the string operator construction of~\cite{bombin:2007:branyons}, which can be applied thanks to the charge neutralization properties of proposition~\ref{prop:neutralization}.
If for the given family of codes $K$ is the maximum number of qubits in the support of a local stabilizer generator, then 
\begin{equation}
|\supp E| \leq K |I_\gamma|\leq K|\gamma|.
\end{equation}
The family of 3D gauge color codes is $K$-confined, as required.

\subsection{Quantum-local fault-tolerant operations}

For this section the relevant 3D gauge color codes are those introduced in~\cite{bombin:2013:gauge}, which encode a single qubit and are built from tetrahedron-shaped (and thus free) 3-colexes.

3D gauge color codes are CSS topological stabilizer codes, and as such the initialization of eigenstates of the logical $X$ and $Z$ operators is a quantum-local operation~\cite{dennis:2002:tqm}.
Measurements in the $X$ and $Z$ basis are also quantum-local.
In addition, 3D gauge color codes allow the quantum-local implementation of a universal set of gates via gauge fixing~\cite{bombin:2013:gauge}.
Moreover, all these quantum-local operations are actually local except for the error-correcting part that they incorporate, suggesting that quantum-locality could be preserved in a fault-tolerant setting.
This is indeed the case, as outlined next.

The initialization of a encoded eigenstate $|0\rangle$ of the logical $Z$ is performed by initializing all the physical qubits to $|0\rangle$ and then correcting phase-flip $Z$ errors~\cite{dennis:2002:tqm}.
The initialization of eigenstates of $X$ is analogous.
In both cases error correction is needed just to bring the state to the code subspace: logical errors do not play any role due to the nature of the encoded states.
Clearly single-shot error correction, as described above, will succeed at making the procedure fault-tolerant and provide encoded eigenstates up to some noisy channel in $\cchan N_{\tau,\epsilon}^\mathrm{loc}$ where, as usual, $\tau$ can be decreased by improving the accuracy of the operations and $\epsilon$ by increasing the system size.

The measurement of the logical Z amounts to measure each single-qubit $Z$ and then perform bit-flip error correction on the resulting classical bits before reading the logical value.
The measurement of the logical X is analogous.
In both cases measurement errors can be regarded as conventional single-qubit errors that happened before an ideal measurement.
Thus the whole process can be modeled as some noisy channel in $\cchan L_\lambda$ (with $\lambda$ representing the accuracy of the single-qubit measurements) followed by ideal error correction and the measurement of the corresponding operator.

Hadamard gates are transversal and as such can be simply modeled by a perfect gate followed by some noisy channel in $\cchan L_\lambda$.
Notice that the perfect Hadamard exchanges $X$ and $Z$ errors.
The case of CNot gates is similar, with the peculiarity that $X$ or $Z$ errors of the two codes are combined in one of them.
This combination is not problematic due to \eqref{eq:compN}, and the correlations induced are also not an issue because the circuit has bounded depth.

Completing the universal set of gates requires gauge fixing~\cite{bombin:2013:gauge}.
The gauge fixing operation amounts to (i) applying the single-shot error correction procedure to bit-flip errors, so that some bit-flip error $E$ is chosen for an effective $Z$-type syndrome $\gamma+\delta_\mathrm {eff}$ with
\begin{equation}
\synd E = \err (\gamma+\delta_\mathrm {eff})
\end{equation}
and (ii) applying a gauge bit-flip operator $G$ such that
\begin{equation}
\syndx_g G=\syndx_g E+\gamma+\delta_\mathrm {eff}
\end{equation}
where $\syndx_g A$ denotes the $Z$-type error syndrome of the conventional 3D color code~\cite{bombin:2007:3DCC}, which is the same as a $Z$-type gauge syndrome in the gauge code.
The purpose of the second step is to set the $Z$-type gauge syndrome to its trivial value.
If the correct gauge syndrome is $\gamma$, however, the net effect is instead to fix the gauge syndrome to $\delta_\mathrm{eff}$, which is confined for noise below a threshold.

Once the gauge fixing is completed a transversal gate can be applied that completes the universal gate set together with the Hadamard and CNot~\cite{bombin:2013:gauge}.
This gate only preserves the subspace of trivial syndrome, but in addition it leaves phase-flip errors invariant, and maps a set of bit-flip errors that has as syndrome a given connected net $\delta_\mathrm {eff}$ to a (non-Pauli) error that, from the gauge code perspective, can be chosen to have support within a close neighborhood of $\delta_\mathrm{eff}$.
Thus the residual noise after gauge fixing and the transversal gate have been performed shows the required confinement, \emph{i.e.} can be described by a suitable $\cchan N_{\tau',\epsilon+\delta}^\mathrm{loc}$.

In summary, 3D gauge color codes provide a somewhat straightforward fault-tolerant quantum computing scheme with a constant time overhead.
Or rather, there is no overhead if the classical computation time required for error correction is disregarded.
This might be enough for practical applications in the sense that all the required computations can be carried out efficiently, as discussed in section~\ref{sec:complexity}.
An important aspect is that all quantum operations are local, but only in four spatial dimensions.
Three of these are required by the the 3D codes, and the forth is for the CNOT gate, which involves pairs of codes that have to be next to each other without overlapping.
This difficulty can be overcome~\cite{bombin:2014:dimensional}.

\section{Efficient error correction}\label{sec:complexity}

A weakness of the threshold results \eqref{eq:local_codes}, \eqref{eq:taup_sc} and \eqref{eq:taup_cc} is that they rely on the ability to find errors of minimal support for a given syndrome.
If this task cannot be done efficiently, the methods are impractical and the thresholds possibly useless.

Fortunately, a closer inspection of the respective proofs reveals that in fact it is enough to be able to find errors that are only a constant away from optimal, connected component by connected component.
To make this explicit, consider in particular theorem~\ref{thm:gauge}.
It is enough to find a $\delta_0$ such that, for any $\delta$ with $\delta+\delta_0$ a gauge syndrome and for any connected component $\delta_{\text c}$ of $\delta+\delta_0$,
\begin{equation}\label{eq:delta1}
|\delta_0\cap \delta_{\text c}|\leq c |\delta\cap\delta_{\text c}|
\end{equation}
for some constant $c\geq 1$.
The price to pay is that the constant $k$ in \eqref{eq:taup_cc} will be worse, in particular
\begin{equation}
k=\frac 1 {(1+c)(1+K)}.
\end{equation}

Another aspect that admits changes is connectivity, as long as the resulting connected components have at most a size proportional to the subset of errors that they contain.
Continuing with the above example, suppose that for each $\delta$ there is a set $\mu$ with elements in some set $M$.
The elements of $M$ might be connected to each other and to the edges in the lattice (recall that the elements of $\delta$ are edges), but the connections must preserve the locality of the lattice.
Instead of taking connected components of the set $\delta+\delta_0$, one can then consider the set $(\delta+\delta_0)\sqcup\mu$ (or alternatively the set $(\delta\cup\delta_0)\sqcup\mu$, which works just as well).
If each resulting connected component takes the form $\delta_c\sqcup\mu_c$, it suffices for the proof technique to work that
\begin{equation}\label{eq:mu1}
|\mu_c|\leq a |\delta\cap\delta_{\text c}|
\end{equation}
for some constant $a$.
This will result in a different constant $\eta_0$ and, again, some extra factor in the value of $k$ in \eqref{eq:taup_cc}.

The purpose of this section is to use these observations to map the error correction problems that are relevant to 3D gauge color codes into problems that have an efficient solution but still provide a threshold.
Notice that the threshold results \eqref{eq:delta_sc} and \eqref{eq:taumu_sc} do not suffer from similar dificulties because they rely on error correction methods that are efficient, at least for the codes considered in \cite{bombin:2013:self}.

Ultimately the decoding methods discussed here are only of theoretical interest, as they are designed to facilitate proofs, rather than to be efficient.
For practical applications it is conceivable to use very different decoding approaches, even those for which it is not a priori clear that single-shot error correction should work.
The success of such strategies~\cite{brown:2015:fault} is an indication of the robustness of the concept.

\subsection{Strategy}

This section provides a general strategy to approximate, in the sense of \eqref{eq:delta1}, the optimal solution of a given error correction problem in terms of another problem that can be efficiently computed.

Let the problem of interest be defined by two abstract sets $E$ and $S$ and a linear map $\syndx$ that maps errors, \emph{i.e.} subsets $\epsilon\subseteq E$, to syndromes $\sigma\subseteq S$.
In addition there is a graph with node set $E$ that represents the connectivity of the problem.
Two elements of $e_1,e_2\in E$ are connected if their syndromes $\synd{\sset{e_i}}$ overlap (but there might be other connections).
Clearly, if $\epsilon_{\text c}$ is a connected component of $\epsilon\subseteq E$ then
\begin{equation}\label{eq:conepsilon}
\synd \epsilon=\emptyset \quad\Longrightarrow\quad \synd \epsilon_{\text c}=\emptyset.
\end{equation}
A set $\sigma\subseteq S$ is a syndrome it it has a solution, \emph{i.e.} some error $\epsilon$ with $\synd\epsilon=\sigma$. 
A minimal solution is one that has minimal cardinality.
An error $\xi\subseteq E$ is elementary if $|\xi|=1$.

Assume that there is a second problem $E',S', \syndx'$, with $S'=S$, together with
\begin{enumerate}
\item
a function $m$ that provides for a given syndrome $\sigma\subseteq S$ a minimal solution $m(\sigma)\subseteq E'$,
\item
a linear function $g$ mapping errors $\epsilon'\subseteq E'$ to errors $g(\epsilon')\subseteq E$ and such that
\begin{equation}\label{eq:circularity2}
\syndx \circ g = \syndx',
\end{equation}
\item
integers $a,b$ such that for any elementary errors $\xi\subseteq E$, $\xi'\subseteq E'$
\begin{equation}\label{eq:ab}
 |(m\circ \syndx )(\xi)|\leq a, \quad |g(\xi')|\leq b.
\end{equation}
\end{enumerate}
Consider now a graph with node set $E\sqcup E'$, containing the previous graphs for $E$ and $E'$ and the following new links: $e\in E$ is connected to $e'\in E'$ if their syndromes overlap or if $e\in g(\sset{e'})$.
If $\epsilon_{\text c}\sqcup\epsilon_{\text c}'$ is a connected component of $\epsilon\sqcup\epsilon'$ it is clear that
\begin{align}
\synd \epsilon=\synd'\epsilon'\quad&\Longrightarrow\quad \synd\epsilon_{\text c}=\synd'{\epsilon'_{\text c}},\label{eq:eps1}\\
\epsilon=g(\epsilon')\quad&\Longrightarrow\quad \synd\epsilon_{\text c}=g(\epsilon'_{\text c}).\label{eq:eps2}
\end{align}

The map $g$ provides a solution $\epsilon_0$ for a given syndrome $\sigma\subseteq S$ constructed via a minimal $\epsilon_0'\subseteq E'$, namely
\begin{equation}\label{eq:epsilon1}
\epsilon_0:=g(\epsilon_0'),\qquad \epsilon_0'=m(\sigma).
\end{equation}
\begin{prop} 
If for some $\epsilon\subseteq E$
\begin{equation}\label{eq:eqepsilon}
\synd \epsilon=\synd{\epsilon_0},
\end{equation}
and $\epsilon_{\text c}\sqcup\epsilon'_{\text c}$ is a connected component of $(\epsilon\cup\epsilon_0)\sqcup\epsilon_0'$, then
\begin{align}\label{eq:aprox_min}
|\epsilon_0\cap\epsilon_{\text c}|\leq ab|\epsilon\cap\epsilon_{\text c}|,\qquad
|\epsilon'_{\text c}|\leq a|\epsilon\cap\epsilon_{\text c}|.
\end{align}
\end{prop}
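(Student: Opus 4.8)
\emph{Proof plan.} The idea is to reduce the statement to the component-wise identities \eqref{eq:eps1} and \eqref{eq:eps2}, applied to two well chosen pairs, and then to sum over connected components. First I would record the syndromes involved: with $\sigma:=\syndx'\epsilon_0'$ (so that $\epsilon_0'=m(\sigma)$), the relation $\epsilon_0=g(\epsilon_0')$ together with $\syndx\circ g=\syndx'$ from \eqref{eq:circularity2} give $\syndx\epsilon_0=(\syndx\circ g)(\epsilon_0')=\syndx'\epsilon_0'=\sigma$, and then \eqref{eq:eqepsilon} gives $\syndx\epsilon=\sigma=\syndx'\epsilon_0'$ as well. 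Hence \eqref{eq:eps1} may be applied to the pair $(\epsilon,\epsilon_0')$ and \eqref{eq:eps2} to the pair $(\epsilon_0,\epsilon_0')$. Next I would use the elementary fact that $\epsilon\sqcup\epsilon_0'$ and $\epsilon_0\sqcup\epsilon_0'$ are induced subgraphs of $(\epsilon\cup\epsilon_0)\sqcup\epsilon_0'$, so restricting to either can only split connected components; thus every connected component of $\epsilon\sqcup\epsilon_0'$, and every connected component of $\epsilon_0\sqcup\epsilon_0'$, lies inside a single connected component of $(\epsilon\cup\epsilon_0)\sqcup\epsilon_0'$. Consequently the components of $\epsilon\sqcup\epsilon_0'$ sitting inside the fixed component $\epsilon_{\text c}\sqcup\epsilon_{\text c}'$ partition $(\epsilon\cap\epsilon_{\text c})\sqcup\epsilon_{\text c}'$, and those of $\epsilon_0\sqcup\epsilon_0'$ sitting inside it partition $(\epsilon_0\cap\epsilon_{\text c})\sqcup\epsilon_{\text c}'$.

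Granting this, I would first identify the syndrome of $\epsilon_{\text c}'$: summing \eqref{eq:eps1} over the components of $\epsilon\sqcup\epsilon_0'$ inside $\epsilon_{\text c}\sqcup\epsilon_{\text c}'$ and using linearity of $\syndx,\syndx'$ yields $\syndx(\epsilon\cap\epsilon_{\text c})=\syndx'\epsilon_{\text c}'=:\sigma_{\text c}$. I would then build a cheap solution of $\sigma_{\text c}$ in the primed problem: for any $\zeta\subseteq E$, the set $\zeta':=\sum_{e\in\zeta}(m\circ\syndx)(\sset{e})$ (symmetric difference of subsets of $E'$) satisfies $\syndx'\zeta'=\sum_{e\in\zeta}\syndx(\sset{e})=\syndx\zeta$, since $m$ returns solutions ($\syndx'\circ m=\mathrm{id}$) and $\syndx$ is linear, and $|\zeta'|\le a|\zeta|$ by the first bound in \eqref{eq:ab}; taking $\zeta=\epsilon\cap\epsilon_{\text c}$ gives a solution of $\sigma_{\text c}$ of size at most $a|\epsilon\cap\epsilon_{\text c}|$. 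Finally I would note that $\epsilon_{\text c}'$ is itself a minimal solution of $\sigma_{\text c}$: were some $\zeta'$ a solution of $\sigma_{\text c}$ with $|\zeta'|<|\epsilon_{\text c}'|$, then $\zeta'+(\epsilon_0'\setminus\epsilon_{\text c}')$ would, by linearity of $\syndx'$ and $\epsilon_{\text c}',\epsilon_0'\setminus\epsilon_{\text c}'$ being a partition of $\epsilon_0'$, be a solution of $\sigma=\syndx'\epsilon_0'$ strictly smaller than $\epsilon_0'=m(\sigma)$, a contradiction. Hence $|\epsilon_{\text c}'|\le a|\epsilon\cap\epsilon_{\text c}|$, the second inequality in \eqref{eq:aprox_min}.

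For the first inequality I would sum \eqref{eq:eps2}---which equates the $E$-part of a component of $\epsilon_0\sqcup\epsilon_0'$ with the $g$-image of its $E'$-part---over the components inside $\epsilon_{\text c}\sqcup\epsilon_{\text c}'$; their $E$-parts being pairwise disjoint (disjoint pieces of $\epsilon_0$), linearity of $g$ gives $\epsilon_0\cap\epsilon_{\text c}=g(\epsilon_{\text c}')$. Since $g(\epsilon_{\text c}')\subseteq\bigcup_{e'\in\epsilon_{\text c}'}g(\sset{e'})$ and $|g(\sset{e'})|\le b$ by the second bound in \eqref{eq:ab}, we get $|\epsilon_0\cap\epsilon_{\text c}|\le b|\epsilon_{\text c}'|\le ab|\epsilon\cap\epsilon_{\text c}|$. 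The main thing to get right is the bookkeeping behind the two ``summed'' identities---checking that the chosen components really partition the stated sets and that $g$ distributes over the resulting disjoint unions---together with the short minimality argument; beyond that I do not anticipate a genuine obstacle.
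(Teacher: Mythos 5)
Your proof is correct and follows essentially the same route as the paper's: your $\zeta'=\sum_{e\in\zeta}(m\circ\syndx)(\sset{e})$ is exactly the paper's linear map $h$ with $\syndx'\circ h=\syndx$, and the two inequalities are obtained from the same ingredients (minimality of $\epsilon_0'$ restricted to the component via \eqref{eq:eps1}, and \eqref{eq:eps2} together with the bounds \eqref{eq:ab}). The only difference is that you spell out the cut-and-paste argument behind the component-wise minimality of $\epsilon_{\text c}'$, which the paper leaves implicit.
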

This is precisely what was required in \eqref{eq:delta1} and \eqref{eq:mu1}, under the obvious identifications (\emph{e.g.} $\epsilon'_0$ is $\mu$).

\begin{proof}
Let $h$ be the linear map taking errors $\epsilon\subseteq E$ to errors $h(\epsilon)\subseteq E'$ and such that for any elementary $\xi\subseteq E$ 
\begin{equation}
h(\xi)=(m \circ \synd)(\xi).
\end{equation}
By linearity and since $\syndx'\circ m$ is the identity
\begin{equation}
\syndx' \circ h = \syndx.
\end{equation}
In particular
\begin{equation}
\synd' {h(\epsilon\cap\epsilon_{\text c})} = \synd {(\epsilon\cap\epsilon_{\text c})}=\synd'{\epsilon_{\text c}'}.
\end{equation}
where the second identity uses \eqref{eq:eps1} applied to $\epsilon\sqcup\epsilon_0'$ (each of its connected components is a subset of a connected components of $(\epsilon\cup\epsilon_0)\sqcup\epsilon_0'$).
By the minimality of $|\epsilon_0'|$
\begin{equation}\label{eq:mfh}
|\epsilon_{\text c}'|\leq |h(\epsilon\cap\epsilon_{\text c})|.
\end{equation}
Notice that $|A+B|\leq |A|+|B|$ for any sets $A$, $B$.
Then by linearity \eqref{eq:ab} yields
\begin{equation}\label{eq:hg}
 |h(\epsilon\cap\epsilon_{\text c})|\leq a|\epsilon\cap\epsilon_{\text c}|, \quad |g(\epsilon_{\text c}')|\leq b|\epsilon_{\text c}'|
\end{equation}
which combined with \eqref{eq:eps2} applied to $\epsilon_0\sqcup\epsilon_0'$ and \eqref{eq:mfh} gives \eqref{eq:aprox_min}.
\end{proof}

\subsection{Efficient solution}\label{sec:efficient_solution}

Given an error correction problems $E,S,\syndx$ of interest, the goal is to map it, in the sense of the previous section, to some other problem $E',S,\syndx'$ with an $m$ that can be efficiently computed.
There is a problem with an efficient solution that appeared early on in the study of topological stabilizer codes. 
$S$ is the set of nodes of a graph, $E'$ the set of edges, and $\syndx'$ is the boundary operator $\partial$ mapping each edge to its two endpoints.
For graphs in which paths of minimal length between two points can be computed efficiently (\emph{e.g.} in a uniform lattice embedded in $\mathbf R^n$), $m$ is efficient via minimum weight matching~\cite{dennis:2002:tqm}.
Moreover, a slight modification allows to introduce in $E'$ also `edges' with a single endpoint, so that $\partial$ maps such an edge to its endpoint.

An error correction problem $E,S,\syndx$ can be mapped to such $E',S,\syndx'$ if there exists integers $a$ and $b$ such that for every elementary error $\xi\subseteq E$ there exists errors $\epsilon_i\subseteq E$, $i=1,\dots, r$, with
\begin{equation}\label{eq:epsilon_i}
\synd \xi=\sum_{i=1}^r \synd {\epsilon_{i}}, \quad 1\leq |\synd{\epsilon_{i}}|\leq 2, \quad r\leq a, \quad |\epsilon_{i}|\leq b.
\end{equation}
The construction is as follows.
For every pair $\xi, i$ the set $E'$ contains an edge $e$ with endpoint(s) the element(s) of $\synd {\epsilon_{i}}$, and 
\begin{equation}
g(\sset{e})=\epsilon_{i}.
\end{equation}
This fixes $g$ by linearity.
It is easy to check that the equation \eqref{eq:circularity2} and the inequalities \eqref{eq:ab} hold.

In order to preserve the locality of a problem $E,S,\syndx$, it suffices to impose in addition that for each $\xi$ the set 
\begin{equation}
\bigcup_i \epsilon_i\cup \xi
\end{equation}
is connected.
This is not a constraint at all because, if it is not satisfied, it suffices to discard from each $\epsilon_i$ those elements that are not in the same connected component as the single element of $\xi$.
This is a consequence of the property \eqref{eq:conepsilon}.

\subsection{3D gauge color codes}

Error correction in 3D gauge color codes involves two steps.
The first one is the correction of measurement errors in the extraction of the gauge syndrome.
The second is the correction of errors based on the (stabilizer) syndrome.
Both problems have a similar nature in that they are described in terms of a charge group $\mathbf Z_2^3$, and both can be put in a form that satisfies the above requirements \eqref{eq:epsilon_i} while preserving locality.
There are, however, some interesting obstacles that deserve a detailed discussion.

\subsubsection*{Correction of bit-flips}

Take for example the problem $E,S,\syndx$ where the elements of $E$ are bit-flip errors and the elements of $S$ are the $Z$-type stabilizer generators.
Recall that stabilizer generators have a color $\kappa$ in the set 
\begin{equation}\label{eq:colors}
\sset{\text r,\text g,\text b,\text y}.
\end{equation}
Consider first a lattice without boundaries.
Then the $\mathbf Z_2^3$ charge is well-defined globally, \emph{i.e.} a subset of $S$ is a syndrome if and only if it has neutral charge (the sum of the labels of its elements is trivial).
A single bit-flip error $\xi$ in the bulk has a syndrome $\sigma=\sset{s_r,s_g,s_b,s_y}$, where $s_\kappa$ is a stabilizer at a $\kappa$-vertex~\cite{bombin:2007:branyons}.
The problem is that there is no way to partition $\sigma$ into subsets that have at most two elements and are syndromes of some error: subsets with one or two elements do not have have neutral charge.
This difficulty can be overcome by choosing a minimal set $C$ of charges that generate the group, \emph{e.g.} 
\begin{equation}\label{eq:C1}
C=\sset{\text{r}, \text g, \text b},
\end{equation}
and an alternative set of local stabilizer generators that have charge either neutral or in $C$, see~\cite{bombin:2013:structure} for the explicit procedure.
Given any elementary error $\xi$ (or in fact any error), if its syndrome consists of $n_0$ neutral elements and, for each $\kappa\in C$, $n_\kappa$ $\kappa$-elements, then, due to the neutrality of $\synd \xi$,
\begin{equation}\label{eq:neutral}
n_\kappa\equiv 0\mod 2.
\end{equation}
Thus $\sigma$ can always be partitioned by forming (i) singlets of neutral elements and (ii) pairs consisting of elements with the same charge. 
The corresponding $\epsilon_i$ of \eqref{eq:epsilon_i} can always be local.

In the presence of boundaries charge is only interesting as a local property.
In particular, in the bulk the charge group is $\mathbf Z_2^3$, but in the vicinity of a free $\kappa$-region, with $\kappa$ in the set \eqref{eq:colors}, the relevant local charge group is the quotient of $\mathbf Z_2^3$ with the $\mathbf Z_2$ subgroup generated by $\kappa$ (accordingly, in the vicinity of several free regions the quotient has to be taken instead with the group generated by the respective colors).
The absence of an interesting global notion of charge is no obstacle to apply the procedure of~\cite{bombin:2013:structure}, since it is possible to use the labels of vertices and the corresponding abstract $\mathbf Z_2^3$ group.
The resulting set of stabilizer generators have labels in $C\cup\sset 0$.
Either in the bulk or the boundary, the local charge of a stabilizer generator matches its label via the quotient construction, \emph{i.e.} a set $\sigma\subseteq S$ is \emph{locally neutral} (its labels add up to 0 in the local charge group) if and only if there exists a local operator with syndrome $\sigma$.

Since the construction is the same as before in the bulk, bit-flip errors $\xi$ in the bulk have the desired properties. 
This however might not be the case for bit-flip errors in the boundary.
A single bit-flip error in a y-region can have a syndrome of the form $\sigma=\sset{s_r,s_g,s_b}$, where $s_\kappa$ has color $\kappa$, because such a set is locally neutral.
As above, the problem is that there is no way to divide $\sigma$ into subsets with locally neutral charge and at most two elements.
This can be overcome by choosing a different set $C$ of charge group generators, namely
\begin{equation}\label{eq:C2}
C=\sset{\text{r}, \text g, \text x},\qquad \text x:=\text r+b.
\end{equation}
Given any elementary error $\xi$ (or in fact any local error), if its syndrome consists of $n_0$ neutral elements and, for each $\kappa\in C$, $n_\kappa$ $\kappa$-elements, then, due to the local neutrality of $\synd \xi$, either \eqref{eq:neutral} holds or instead
\begin{equation}
n_{\text r}\equiv n_{\text g}+1\equiv n_{\text x}+1\equiv 0\mod 2.
\end{equation}
In the latter case, $\sigma$ can always be pertitioned by forming (i) singlets of neutral elements, (ii) pairs consisting of elements with the same label and (iii) a pair of elements with labels g and x.
Ultimately the partition is possible simply because
\begin{equation}
\text y=\text g+\text x.
\end{equation}
In particular, there exist a set of generators for the group of locally neutral charge (namely,$\sset{\text y}$), that are the sum of less than three elements of $C$ (y is the sum of two of them).
The set of generators \eqref{eq:C2}, unlike \eqref{eq:C1}, is such that this is true for some generating set of all the locally neutral charge groups.
In particular, it is always possible to choose the generators from the set \eqref{eq:colors}, and (the case of r and g is trivial)
\begin{equation}
\text b=\text r+\text x.
\end{equation}
It is crucial that boundary conditions only include r-, g-, b- and y-regions.
Other boundary conditions might imply different and possibly incompatible constraints on the valid choices for $C$.

\subsubsection*{Correction of measurement errors}

The correction of measurement errors (in the extraction of the $Z$-type gauge syndrome) is entirely analogous. 
The elements of $E$ are links and the elements of $S$ are pairs $(v,x)$ with $v$ an internal vertex and $x$ one of the two generators of the $\mathbf Z_2\times\mathbf Z_2$ subgroup of charges that a given vertex can hold.
\emph{E.g.} a r-vertex can only hold charge 0, gb, by or gy, and any two nontrivial labels among these might be taken as generators.
In this case a valid choice of generators of the charge group $\mathbf Z_2^3$ is 
\begin{equation}\label{eq:Cgauge}
C=\sset{\text{rg}, \text {gb}, \text {by}}.
\end{equation}
Indeed, the generators of the groups of locally neutral charges can be chosen among the elements of
\begin{equation}
C\cup \sset{\text{rb, gy}},
\end{equation}
and
\begin{equation}
\text {rb}=\text {rg}+\text {gb},\qquad
\text {gy}=\text {gb}+\text {by}.
\end{equation}

It is worth mentioning a special case with an important application~\cite{bombin:2014:dimensional}.
When the original state is known to not have branching points, the correction of measurement errors is greatly simplified.
The reason is that each of the 6 flux types can be treated separately, \emph{i.e.} a gauge syndrome $\delta$ is decomposed in syndromes $\delta_{\kappa}$ with $\kappa$ a pair of colors. 
Each problem is described by a single $\mathbf Z_2$ charge, and in fact is directly isomorphic to that in section~\ref{sec:efficient_solution}.

\section{Conclusions and outlook}

Quantum-local operations, in particular single-shot error correction, could play a crucial role in fault-tolerant quantum computation.
The powerful features of 3D gauge color codes are a great example of this: with single-shot error correction it is a `straightforward' matter to perform all elementary operations in constant time (disregarding classical computation time).

At the most basic level, one can think of quantum error correcting codes simply as subspaces.
On top of that, explicit algorithms for error correction are often crucial: in the absence of efficient algorithms, a code might be unpractical or a threshold irrelevant.
Even this is a simplified picture when considering fault-tolerance, where measurement errors play a crucial role. 
This is exemplified by the techniques for error correction in local stabilizer codes~\cite{dennis:2002:tqm}, and now also by single-shot error correction.
It seems that rather than the code, the focus should be on the error correcting procedure.

In this regard, a family of codes might display \emph{locality} in at least three levels of strength.
The largest class is that of codes that admit a description in terms of local projectors.
Then comes the class of codes for which ideal quantum error correction is quantum-local.
This is \emph{different} from the previous, as exemplified by the ground subspaces of system with non-abelian anyons: it is not quite the same to be able to detect errors locally and to be able to obtain the error syndrome locally.
Finally there is the class of codes that admit single-shot error correction.

In order to understand the practical value of gauge color codes it will be important to develop decoding algorithms and to study the corresponding error thresholds, both in the context of single-shot error correction and of conventional error correction by repetition.
Remarkably, the present work has been quickly followed up by a first numerical simulation of the single-shot features of gauge color codes~\cite{brown:2015:fault}.
It is natural to conjecture that, compared to other 3D codes, gauge color codes may exhibit large error thresholds in the conventional fault-tolerant scenario if gauge syndromes are used.

A fundamental open problem is whether a self-correcting topological phase might exist in 3D.
In this regard, one can conjecture that a Hamiltonian system connected to 3D gauge color codes could give rise to a self-correcting phase~\cite{bombin:2013:gauge}.
The non-local encoding of quantum information in 3D gauge color codes is compatible with the confinement of the two types of particles ($X$- and $Z$-type) that will destroy it if free to move around.
The problem is that it is not clear whether these two confinements can be compatible, even if for example alternated in time.

\begin{acknowledgments} 

I received support from the MINECO grant FIS2009-10061, the CAM grant QUITEMAD+, the Sapere Aude grant of the Danish Council for Independent Research, the ERC Starting Grant QMULT and the CHIST-ERA project CQC.
Work at Perimeter Institute is supported by Industry Canada and Ontario MRI.

\end{acknowledgments}

\bibliography{refs,comments}

\end{document}